\documentclass[10pt,letterpaper,twocolumn]{article}
\usepackage[T1]{fontenc}
\usepackage[utf8]{inputenc}
\usepackage{lmodern}
\usepackage[margin=0.65in,top=0.55in,bottom=0.62in,columnsep=0.24in]{geometry}
\usepackage{amsmath,amssymb,amsthm,mathtools}
\usepackage{graphicx,booktabs,microtype,cite}
\usepackage{titlesec}
\usepackage{tikz}
\usetikzlibrary{arrows.meta,positioning,calc}
\usepackage[hidelinks]{hyperref}
\hypersetup{pdftitle={PPT-assisted capacities of bosonic Gaussian channels},pdfauthor={Matthew Barber and Stefano Pirandola},pdfsubject={Unconstrained and energy-constrained PPT-assisted bosonic quantum capacities}}
\renewcommand{\thesection}{\Roman{section}}
\titleformat{\section}{\normalfont\small\bfseries\centering}{\thesection.}{0.7em}{}
\titlespacing*{\section}{0pt}{14pt plus 3pt minus 2pt}{9pt}
\titleformat{\subsection}{\normalfont\small\bfseries\centering}{\arabic{subsection}.}{0.6em}{}
\titlespacing*{\subsection}{0pt}{12pt}{7pt}
\titleformat{\paragraph}[runin]{\normalfont\itshape}{}{0pt}{}
\titlespacing*{\paragraph}{1em}{5pt}{0.5em}
\newtheorem{theorem}{Theorem}
\newtheorem{appthm}{Theorem}[section]
\newtheorem{applemma}[appthm]{Lemma}
\newtheorem{appprop}[appthm]{Proposition}
\newtheorem{appcor}[appthm]{Corollary}

\newcommand{\id}{\mathrm{id}}

\newcommand{\ket}[1]{\lvert #1\rangle}
\newcommand{\bra}[1]{\langle #1\rvert}
\newcommand{\proj}[1]{\ket{#1}\!\bra{#1}}
\newcommand{\QP}{Q_{\mathrm{PPT}}}
\allowdisplaybreaks[2]
\makeatletter
\renewcommand{\@maketitle}{\newpage\null\vspace{-16pt}\begin{center}{\large\bfseries\@title\par}\vskip10pt{\normalsize\@author\par}\end{center}\vskip8pt}
\makeatother

\title{PPT-assisted capacities of bosonic Gaussian channels}
\author{Matthew Barber and Stefano Pirandola\\[3pt]
\small Department of Computer Science, University of York, York YO10 5GH, United Kingdom}
\date{}
\begin{document}
\twocolumn[\maketitle
\begin{center}\begin{minipage}{0.88\textwidth}\small
We determine the exact quantum communication and entanglement-distribution capacities of
three fundamental bosonic channels, namely thermal loss, thermal amplification and additive
Gaussian noise, assisted by operations that completely preserve positive partial transpose (PPT).
These operations are more powerful than local operations and classical communication (LOCC),
but cannot generate distillable entanglement on their own. For thermal loss, the unconstrained
capacity coincides with the Pirandola--Laurenza--Ottaviani--Banchi (PLOB) bound throughout the
non-entanglement-breaking regime. The corresponding upper bounds are also attained for thermal
amplification and additive noise. We extend these results to energy-constrained communication, establishing exact capacities for all three channels at every prescribed mean input energy. Our results
solve a class of bosonic capacity problems under PPT assistance and establish exact benchmarks
for determining whether the same communication rates can be achieved via LOCC alone.
\end{minipage}\end{center}\vspace{16pt}]

\section{Introduction}\label{sec:intro}
We establish exact PPT-assisted capacities for noisy bosonic Gaussian channels. These channels model optical and microwave attenuation, amplification and random displacements~\cite{r1,r2,r3,r4,r5}. Their capacity problem combines infinite-dimensional inputs with arbitrary adaptive protocols. With local operations and classical communication (LOCC), entanglement distillation and teleportation relate shared entanglement and quantum transmission~\cite{r6,r7,r8,r9,r10}. The Pirandola--Laurenza--Ottaviani--Banchi (PLOB) bound~\cite{r11} determines the unrestricted two-way capacities of pure loss and quantum-limited amplification, but generally leaves a gap for channels with excess noise. Lower bounds have been developed for two-way quantum communication~\cite{PGPBL,r14,r13} and secret-key generation~\cite{PGPBL,OLCSPB,OPB,r14}. Recent photon-sector methods give exact energy-constrained quantum-limited capacities and converse bounds for noisy families~\cite{r15,r16}.

In this paper, we allow completely positive-partial-transpose-preserving operations between channel uses. A state has positive partial transpose (PPT) when transposing one party's indices preserves positivity~\cite{r17,r18}. The allowed operations preserve this property with arbitrary local ancillary systems. They form a strictly larger class than LOCC: they can prepare PPT entangled states, which LOCC cannot prepare from a separable state~\cite{r19,r20,r21,r22}. We prove that they attain the PLOB bound for thermal loss, thermal amplification and additive noise. The same rates govern PPT-assisted entanglement distribution. We then attain the corresponding energy-constrained bounds at every prescribed mean input energy. This closes the coding--converse gap for a nontrivial adaptive bosonic capacity problem and supplies exact benchmarks for locally implementable protocols.

Our central step is an explicit collective PPT distillation protocol. Such a construction is essential because PPT-distillable entanglement can be strictly smaller than its regularized relative-entropy upper bound~\cite{r23}. An exact partial-transpose identity controls the unconstrained construction, while a polynomial norm bound on photon sectors treats fixed energy. Note that the nonseparability of our threshold maps does not exclude other local protocols.

Section~\ref{sec:unconstrained} gives the unconstrained capacities and discusses entanglement distribution and secret key. Section~\ref{sec:energy} establishes the energy-constrained extension, and Sec.~\ref{sec:conclusions} discusses its implications. The Supplemental Material contains the proofs in Appendices~\ref{app:A}--\ref{app:O}, preceded by diagrams of their structure. Appendix~\ref{app:A} fixes the operational definitions; Appendices~\ref{app:B}--\ref{app:H} establish the unconstrained results; Appendices~\ref{app:I} and~\ref{app:J} analyze one-photon resources and the nonseparability of the maps; Appendices~\ref{app:K}--\ref{app:N} establish the energy-constrained results; and Appendix~\ref{app:O} collects benchmarks and a sharper finite-block conjecture.

\section{Unconstrained capacities}\label{sec:unconstrained}
We first state the exact capacities without an input-energy constraint, starting with thermal loss.

\paragraph{Assistance, rates and notation.}
Alice and Bob hold the two sides of a bipartite state. The identity channel is denoted by $\mathrm{id}$. We write $\Gamma_B$ for transposition of Bob's matrix indices in the number basis and use $\Gamma$ when the system is clear. For input systems $A,B$, output systems $A',B'$ and a completely positive trace-preserving map $\Lambda$, define its partial-transpose conjugate by
\begin{equation}\Lambda^\Gamma=\Gamma_{B'}\circ\Lambda\circ\Gamma_B.\label{eq:pptmap}\end{equation}
We call $\Lambda$ completely PPT preserving when $\Lambda^\Gamma$ is also completely positive and trace preserving. Complete positivity includes arbitrary untouched ancillary systems. On bosonic spaces both maps must be normal: their dual maps preserve increasing bounded limits of positive observables. Appendix~\ref{app:A} specifies the bounded or trace-class extensions needed for partial transpose in infinite dimensions.




For a channel $\mathcal N$ from Alice to Bob, $\QP(\mathcal N)$ denotes the quantum capacity with this assistance. A code of length $m$ and dimension $K_{m}$ acts on a $K_{m}$-dimensional system $A_{0}$ held by Alice. Alice and Bob share no entanglement beforehand: any ancillary states start in a separable state that is independent of the input. The protocol interleaves $m$ uses of $\mathcal N$ with completely PPT-preserving operations. It therefore defines a channel $\Lambda_{m}$ from $A_{0}$ to a $K_{m}$-dimensional system $B_{0}$ held by Bob, identified with $A_{0}$ through the number basis. Let $\tilde{A}$ be a reference system of the same dimension, which no operation acts on. The code's worst-case fidelity is
\begin{equation}
    f_{m} = \min_{\Psi}{\bra{\Psi}\left(\id_{\tilde{A}} \otimes \Lambda_{m}\right)\left(\ket{\Psi}\bra{\Psi}\right)\ket{\Psi}}\text{,}
\end{equation}
where the minimum is over all unit vectors $\Psi$ in $\tilde{A} \otimes A_{0}$. A rate $R$ is achievable if, for each $m$, there is a code of length $m$ and dimension $K_{m}$, with worst-case fidelity $f_{m}$, such that $f_m\to1$ and $\liminf_{m \to \infty}m^{-1}\log_2K_m \geq R$. The supremum of these rates is $\QP\left(\mathcal{N}\right)$, in qubits per channel use. 
The capacity $Q_{2}$ uses LOCC assistance instead.


For a mode with nonnegative mean thermal occupation $x$, denote its thermal state by $\tau_x$ and its entropy by $g(x)$. With $j$ the photon number,
\begin{align}
 \tau_x&=\frac{1}{1+x}\sum_{j=0}^{\infty}\left(\frac{x}{1+x}\right)^j|j\rangle\langle j|,\label{eq:thermalstate}\\
 g(x)&=(x+1)\log_2(x+1)-x\log_2x,\quad g(0)=0.\label{eq:entropy}
\end{align}
The zero term is understood by continuity. A channel is entanglement breaking if applying it to one part of any state gives a separable state. Here separable states are trace-norm limits of mixtures of product states.

\paragraph{Thermal loss.}
Let $\eta\in(0,1)$ be the transmissivity and $n\geq0$ the environment's mean photon number. The thermal-loss channel $\mathcal L_{\eta,n}$ mixes an input annihilation operator $a_{\mathrm{in}}$ with an independent environment mode $e$ in $\tau_n$. The output annihilation operator is $b$ and the beam-splitter relation is
\begin{equation}b=\sqrt\eta\,a_{\mathrm{in}}+\sqrt{1-\eta}\,e.\label{eq:losschannel}\end{equation}
The channel is entanglement breaking at and above $n=\eta/(1-\eta)$~\cite{r32}. Let $B_{\mathrm{loss}}(\eta,n)$ denote the PLOB upper bound~\cite{r11}. As shown in Appendices~\ref{app:B}--\ref{app:F}, we have the following exact capacity.
\begin{theorem}[Thermal-loss capacity]\label{thm:loss}
The PPT-assisted quantum capacity is
\begin{equation}
\begin{split}
 &Q_{\mathrm{PPT}}(\mathcal L_{\eta,n})=B_{\mathrm{loss}}(\eta,n)\\
 &=\begin{cases}
 -\log_2[(1-\eta)\eta^n]-g(n),&n<\dfrac{\eta}{1-\eta},\\[3pt]
 0,&n\geq\dfrac{\eta}{1-\eta}.
 \end{cases}
\end{split}\label{eq:loss-capacity}
\end{equation}
Every strictly smaller rate is achievable with fixed finite input energy per channel use.
\end{theorem}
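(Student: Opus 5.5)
The plan has two halves: a converse showing $\QP(\mathcal L_{\eta,n})\le B_{\mathrm{loss}}(\eta,n)$, and an achievability argument showing every rate below it can be reached.

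\emph{Converse.} We would adapt the PLOB teleportation-stretching argument to PPT assistance. The thermal-loss channel is teleportation covariant, so any adaptive protocol using $m$ channel uses can be stretched into a single completely PPT-preserving operation acting on $m$ copies of the asymptotic Choi state $\rho_{\mathcal L}$. The relative entropy of entanglement taken with respect to PPT states is monotone under completely PPT-preserving maps, and it is subadditive. This gives $\QP\le E_R^{\mathrm{PPT}}(\rho_{\mathcal L})\le E_R(\rho_{\mathcal L})$, and the right-hand side is bounded by the PLOB expression $-\log_2[(1-\eta)\eta^n]-g(n)$. In the regime $n\ge\eta/(1-\eta)$ the Choi state is separable, so the capacity is zero. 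Two infinite-dimensional points need care here: (i) the bound must be stated through an energy-truncated Choi sequence, with lower semicontinuity of relative entropy; (ii) the worst-case-fidelity code must be converted into the distribution of a maximally entangled state of dimension $K_m$ with fidelity $f_m$. Since a PPT reduction is available, the standard estimate $\log_2 K_m\le[E_R^{\mathrm{PPT}}+h(f_m)]/f_m$-type continuity closes the argument.

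\emph{Achievability.} Alice sends halves of two-mode squeezed vacua with mean photon number $\bar n$, which is fixed and finite. This yields $m$ copies of the Gaussian state $\rho_{\bar n}=(\id\otimes\mathcal L_{\eta,n})(\psi_{\bar n})$ with no prior entanglement. They then run a PPT distillation protocol on these copies and teleport the code space using the distilled maximally entangled states. Teleportation is LOCC and hence completely PPT preserving. The target is a PPT distillation rate equal to the Rains-type quantity $-\log_2\|\rho_{\bar n}^{\Gamma}\|_1$, or better a regularized hypothesis-testing rate, which should converge to $B_{\mathrm{loss}}$ as $\bar n\to\infty$.

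Concretely, we would use the standard PPT-code construction (Rains/Matthews--Wehner). The operation is $\Lambda(X)=\mathrm{Tr}(MX)\Phi_K+\mathrm{Tr}((1-M)X)\frac{1-\Phi_K}{K^2-1}$, which is completely PPT preserving exactly when $-\tfrac{1}{K}\le M^{\Gamma}\le\tfrac1K$. So the task is to pick a test $M$ on $\rho_{\bar n}^{\otimes m}$ with $\mathrm{Tr}(M\rho^{\otimes m})\to1$ and $\|M^\Gamma\|_\infty\le 2^{-mR}$. Following the introduction's hint of an ``exact partial-transpose identity'', we would compute $\rho_{\bar n}^{\Gamma}$ explicitly. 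For a Gaussian state, the partial transpose is a Gaussian operator whose symplectic spectrum contains $\tilde\nu_-<1$. We would then take $M$ to be a spectral projector, of typical type, of a suitable Gaussian operator that nearly commutes with the partial transpose. Bounding $\|M^\Gamma\|_\infty$ is where this identity would be used. The rate obtained is governed by the PPT logarithmic negativity or the hypothesis-testing PPT relative entropy, and we would show it equals $\lim_{\bar n\to\infty}$ of the PLOB expression. Only the limit is needed, since any strictly smaller rate is already attained at some finite $\bar n$; this gives the claim about fixed finite energy.

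\emph{Main obstacle.} The hardest step is matching the converse exactly. Logarithmic negativity of $\rho_{\bar n}$ does not in general equal the relative entropy bound, and the introduction itself notes that PPT-distillable entanglement can fall strictly below the regularized REE. We would first try a collective test built from the typical projector of $\rho^{\otimes m}$ combined with a projection onto the positive part of $(\rho^{\otimes m})^\Gamma$ weighted by $2^{-mR}$. This is in the spirit of a Neyman--Pearson test of $\rho^{\otimes m}$ against the PPT-optimal Gaussian σ from the PLOB bound. We would then verify $\|M^\Gamma\|\le 2^{-mR}$ using the explicit Gaussian form of σ and its invariance under $\Gamma$ up to a phase-space reflection.
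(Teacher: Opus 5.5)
Your converse is essentially the paper's. You replace each use by teleportation through a finite-energy channel state $\chi_N$ and bound the output with a PPT-monotone relative-entropy quantity. The paper uses the Rains bound $R$ with the explicit separable comparison $S_H(\tau_M\otimes\tau_{w_M})S_H^\dagger$; your PPT-restricted $E_R$ works equally well, since $\operatorname{Tr}(\Phi_K\sigma)\leq1/K$ for every PPT $\sigma$. As in the paper, you then let $N\to\infty$ at fixed $m$ and $K_m$ before $m\to\infty$.

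The genuine gap is in achievability, which is where the theorem's new content lies for $n>0$. You never exhibit a test $M$ together with a proof that $\|M^\Gamma\|_\infty\leq2^{-mR}$ for $R$ near $B_{\mathrm{loss}}$. Your first target, the logarithmic negativity (you wrote it with the wrong sign), is an upper bound on $D_{\mathrm{PPT}}$ and overshoots. Already at $n=0$ it tends to $\log_2\frac{1+\eta}{1-\eta}>-\log_2(1-\eta)$ as $\bar n\to\infty$. Your fallback, a Neyman--Pearson test against a separable $\sigma^{\otimes m}$, controls only $\operatorname{Tr}(M\sigma^{\otimes m})$. The PPT condition instead requires $\|M^\Gamma\|_\infty=\sup\{|\operatorname{Tr}(MY)|:\|Y^\Gamma\|_1\leq1\}$, a supremum over all Rains-feasible operators. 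The discrepancy between these two quantities is exactly why $D_{\mathrm{PPT}}$ can fall strictly below $R^\infty$, so no general principle closes it. A reflection symmetry of $\sigma$ also says nothing about the partial transpose of a projector built from $\rho$ and $\sigma$.

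The missing ingredient is an exact computation of that partial transpose. First put the resource in the normal form $\chi_N=S_H(\tau_M\otimes\tau_u)S_H^\dagger$, where $H_N\to G=(1-\eta)^{-1}$, $u_N\to n$, and only $M_N$ diverges. Take the comparison with the same squeezer and first thermal factor, $\sigma=S_H(\tau_M\otimes\tau_{w'})S_H^\dagger$ with $w'>u$ (for instance the paper's separable choice $w'=w_M$). It then commutes with the resource, and the likelihood ratio is a decreasing function of the error-photon number. Your Neyman--Pearson test therefore becomes the paper's threshold projector $Q_{L,t}$ onto total error count $N_d\leq t$.

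What you then need is the identity $[S_H(I\otimes z^{d^\dagger d})S_H^\dagger]^\Gamma=w(z)^{v^\dagger v}/[H(1-\theta z)]$, with $w(z)=(z-\theta)/(1-\theta z)$ and $v$ a fixed passive combination of the two modes. The paper proves it from the Bargmann kernel of $S_H$. Since $|w(z)|\leq1$ on the unit disk, Cauchy's estimate on $|z|=r$ gives $\|Q_{L,t}^\Gamma\|_\infty\leq r^{-t}(1-r)^{-1}[H(1-\theta r)]^{-L}$. At $t=\alpha L$ with $\alpha<H-1$, this has exponent $F_H(\alpha)=\log_2H-g(\alpha)+\alpha\log_2(1/\theta)$. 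Meanwhile the acceptance probability on $\chi_N^{\otimes L}$ tends to one for $\alpha>u$. Rains's standard-form map then distills at rate $F_{H_N}(u_N)\to F_G(n)=-\log_2[(1-\eta)\eta^n]-g(n)$. Without this norm identity, or an equivalent bound, your proposal establishes only the upper bound.
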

At $n=0$, the theorem recovers the pure-loss capacity, which is already attained by LOCC~\cite{r11}:
\begin{equation}Q_{\mathrm{PPT}}(\mathcal L_{\eta,0})=Q_2(\mathcal L_{\eta,0})=-\log_2(1-\eta).\label{eq:pureloss}\end{equation}
At positive thermal noise below the entanglement-breaking threshold, it proves that the PLOB upper bound~\cite{r11} is attainable under PPT assistance.

To compare with LOCC, define the reverse-coherent-information benchmark $I_{\mathrm{RC}}$~\cite{PGPBL} and write $[z]_+=\max\{0,z\}$:
\begin{align}
 I_{\mathrm{RC}}(\eta,n)&=-\log_2(1-\eta)-g(n),\label{eq:rci}\\
 [I_{\mathrm{RC}}(\eta,n)]_+&\leq Q_2(\mathcal L_{\eta,n})\leq Q_{\mathrm{PPT}}(\mathcal L_{\eta,n}).\label{eq:loccbounds}
\end{align}
When $I_{\mathrm{RC}}\geq0$, the improvement over this benchmark is $n\log_2(1/\eta)$. This does not show that LOCC is limited to the benchmark. Positive LOCC rates extend beyond its positive region~\cite{r48}, and improved lower bounds are available~\cite{r14,r13}.

\paragraph{Thermal amplification.}
Let $\kappa>1$ be the amplifier gain and $n\geq0$ the mean occupation of an independent environment mode $e$ in $\tau_n$. The thermal amplifier $\mathcal A_{\kappa,n}$ acts as
\begin{equation}b=\sqrt\kappa\,a_{\mathrm{in}}+\sqrt{\kappa-1}\,e^\dagger.\label{eq:ampchannel}\end{equation}
It is entanglement breaking for $n\geq1/(\kappa-1)$~\cite{r32}. As shown in Appendix~\ref{app:G}, interchanging the two thermal modes in the thermal-loss construction attains the corresponding PLOB upper bound~\cite{r11}.
\begin{theorem}[Thermal-amplifier capacity]\label{thm:amp}
The PPT-assisted quantum capacity is
\begin{equation}
Q_{\mathrm{PPT}}(\mathcal A_{\kappa,n})=\begin{cases}
\log_2\dfrac{\kappa^{n+1}}{\kappa-1}-g(n),&n<\dfrac{1}{\kappa-1},\\[4pt]
0,&n\geq\dfrac{1}{\kappa-1}.
\end{cases}\label{eq:amplifier-capacity}
\end{equation}
Every strictly smaller rate is achievable with fixed finite input energy per use.
\end{theorem}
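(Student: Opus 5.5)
The plan has two halves: a converse from the PLOB bound, and achievability by transporting the thermal-loss construction of Theorem~\ref{thm:loss} to the amplifier.

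\emph{Converse.} PPT assistance is covered by the relative entropy of PPT-entanglement (Rains-type) bound. The thermal amplifier is teleportation covariant, since Gaussian displacements commute through it up to a rescaled displacement. The standard stretching and simulation argument of~\cite{r11} therefore applies with a PPT-preserving simulation. This gives $\QP(\mathcal A_{\kappa,n})$ bounded above by the relative entropy of entanglement of the asymptotic Choi state, which yields $\log_2[\kappa^{n+1}/(\kappa-1)]-g(n)$. For $n\ge 1/(\kappa-1)$ the channel is entanglement breaking~\cite{r32}. Its Choi state is then separable, hence PPT, so no PPT-distillable entanglement is produced and the capacity is zero. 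The one point to check here is that the infinite-dimensional PPT extension of Appendix~\ref{app:A} is compatible with the simulation. I would handle this by energy truncation and a limit, exactly as done for loss.

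\emph{Achievability.} I will reuse the collective PPT distillation protocol built for thermal loss (Appendices~\ref{app:B}--\ref{app:F}). The key structural fact is that the amplifier output $b=\sqrt\kappa\,a_{\mathrm{in}}+\sqrt{\kappa-1}\,e^\dagger$ is obtained from a two-mode squeezer, while the loss channel comes from a beam splitter. When a two-mode squeezed vacuum is sent through either channel, the resulting Gaussian state has the same structure with the roles of the two thermal modes exchanged. Concretely, I would apply the amplifier to half of a two-mode squeezed state and compute the output covariance matrix. I would then show that, up to local Gaussian unitaries and a swap of which party's thermal mode carries the noise $n$, it matches the form handled by the loss protocol, with the parameter correspondence $\eta\leftrightarrow 1/\kappa$.

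The main step is the exact partial-transpose identity of the loss construction. It must still hold after the interchange, so that the PPT threshold map keeps its completely-PPT-preserving property and its distillation yield. Under the map, the yield $-\log_2[(1-\eta)\eta^n]-g(n)$ should become $-\log_2[(1-1/\kappa)\kappa^{-(n+1)}]-g(n)=\log_2[\kappa^{n+1}/(\kappa-1)]-g(n)$. The shift $n\to n+1$ in the exponent comes from the $e^\dagger$ (conjugate) coupling, which contributes one extra vacuum quantum. I expect verifying this identity, and the corresponding sector bookkeeping, to be the obstacle. My first attempt would be to use the transpose as a phase-space reflection: $\Gamma$ maps $e^\dagger$-type coupling to $e$-type coupling, so the identity reduces to the loss case.

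Finally, the distilled ebits are converted to quantum communication by teleportation, which is LOCC and hence PPT. Taking the squeezing finite but large, and then letting the block length grow, gives every rate strictly below the bound at a fixed finite input energy per use. This is the same limiting argument as in Theorem~\ref{thm:loss}.
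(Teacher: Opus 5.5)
Your route is the paper's (Appendix~G). You relabel the squeezed-thermal normal form of Lemma~B.1 with the two thermal modes interchanged and reuse the threshold map and uniform sandwich. For the converse you use the finite-energy teleportation simulation $\mathcal A_{\kappa,n}\circ\mathcal Z_{\delta_N}=\mathcal A_{\kappa,n+\kappa\delta_N/(\kappa-1)}$, taking $N\to\infty$ before $m\to\infty$.

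However, the parameter bookkeeping you display is wrong. The identity $-\log_2[(1-1/\kappa)\kappa^{-(n+1)}]-g(n)=\log_2[\kappa^{n+1}/(\kappa-1)]-g(n)$ is false. The left side equals $\log_2[\kappa^{n+2}/(\kappa-1)]-g(n)$, which exceeds your own converse by $\log_2\kappa$. There is no shift $n\to n+1$. Exchange the roles of $a_N=2N+1$ and $b_N=2\kappa N+2(\kappa-1)(n+1)+1$ in Lemma~B.1, with $c_N=2\sqrt{\kappa N(N+1)}$. This gives $H_N\to\kappa/(\kappa-1)$, $\theta_N\to1/\kappa$ and error occupation $u_N\to n$. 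The extra $(\kappa-1)$ of output noise from the $e^\dagger$ coupling appears in the diverging occupation $M_N=(\kappa-1)N+\kappa n+\kappa-1+O(N^{-1})$, not in the error mode. The rate is then $F_H(n)=\log_2\frac{\kappa}{\kappa-1}+n\log_2\kappa-g(n)$. This is the loss formula with $\eta\to1/\kappa$ and $n$ unchanged; the extra factor $\kappa$ comes from $\log_2H$. This is not cosmetic. With error occupation $n+1$, the condition $u<H-1$ would read $n+1<1/(\kappa-1)$, which misses the entanglement-breaking threshold, and the rates would not meet the converse.

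The step you flag as the main obstacle is not one, and no phase-space reflection argument is needed. $S_H$ is symmetric under exchanging its two modes, so the amplifier channel state is $S_H(\tau_{u_N}\otimes\tau_{M_N})S_H^\dagger$. This is the loss normal form conjugated by the swap of the two parties. Conjugate the threshold test $Q_{L,t}$ and the witness $W$ by the same swap. The partial transpose on Bob of a swapped operator is the swap of its partial transpose on Alice. Moreover, $X^{\Gamma_A}=(X^{\Gamma_B})^T$, which has the same operator norm. Hence Lemma~C.1 and Propositions~D.2, D.3 and E.2 apply verbatim, and the separable comparison remains separable after the swap.

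For the converse, remember that $E_R$ itself is not monotone under PPT operations. The bound must go through Rains monotonicity, with the separable Gaussian state serving only as a feasible comparison, as your mention of a Rains-type quantity suggests.
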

At $n=0$, the environment is in the vacuum and the channel is a quantum-limited amplifier. The theorem recovers its known LOCC capacity~\cite{r11}:
\begin{equation}
Q_{\mathrm{PPT}}(\mathcal A_{\kappa,0})=Q_2(\mathcal A_{\kappa,0})=-\log_2(1-\kappa^{-1}).\label{eq:quantumlimitedamp}
\end{equation}

\paragraph{Additive Gaussian noise.}
Let $\xi>0$ be the mean photon number added by random displacements. For a mode with annihilation operator $a$, write $D(\alpha)=\exp(\alpha a^\dagger-\bar\alpha a)$. The additive Gaussian-noise channel acts on a density operator $\omega$ as
\begin{equation}
\mathcal Z_\xi(\omega)=\int_{\mathbb C}\frac{d^2\alpha}{\pi\xi}\,e^{-|\alpha|^2/\xi}D(\alpha)\omega D(\alpha)^\dagger.\label{eq:additivechannel}
\end{equation}
In quadratures whose vacuum covariance is $I_2$, it adds $2\xi I_2$ to the input covariance. It is entanglement breaking for $\xi\geq1$~\cite{r32}. As shown in Appendix~\ref{app:H}, we attain the additive-noise PLOB upper bound~\cite{r11}.
\begin{theorem}[Additive-noise capacity]\label{thm:add}
The PPT-assisted quantum capacity is
\begin{equation}
Q_{\mathrm{PPT}}(\mathcal Z_\xi)=\begin{cases}
(\xi-1)/\ln2-\log_2\xi,&0<\xi<1,\\
0,&\xi\geq1.
\end{cases}\label{eq:additive-capacity}
\end{equation}
Every strictly smaller finite rate is achievable with fixed finite input energy. The noiseless endpoint $\mathcal Z_0=\mathrm{id}$ has infinite unconstrained capacity.
\end{theorem}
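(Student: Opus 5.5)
The plan splits the theorem into a converse and an achievability part, and treats the achievability as a limit of the thermal-loss construction behind Theorem~\ref{thm:loss}.

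\emph{Converse.} The upper bound comes from the PLOB argument~\cite{r11}. It applies because $\mathcal Z_\xi$ is teleportation covariant and PPT operations do not increase the relative entropy of entanglement with respect to PPT states. Stretching the adaptive protocol therefore bounds $Q_{\mathrm{PPT}}$ by the relative entropy of entanglement of the asymptotic Choi state. The known evaluation of this quantity gives $(\xi-1)/\ln2-\log_2\xi$ for $\xi<1$. For $\xi\geq1$ the channel is entanglement breaking~\cite{r32}. Its Choi state is then separable, hence PPT, so PPT stretching forces the rate to $0$.

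\emph{Achievability.} We realize $\mathcal Z_\xi$ as a limit of thermal-loss channels. Concatenating $\mathcal L_{\eta,n}$ with a quantum-limited amplifier of gain $1/\eta$ gives the map $b=a+\sqrt{(1-\eta)/\eta}\,(\cdots)$. It adds noise $\xi_\eta=(1-\eta)(n+1)/\eta+(1/\eta-1)$ in total, in photon units. A cleaner route is to work directly with the collective distillation construction of Appendices~\ref{app:B}--\ref{app:F}. It relies on an exact partial-transpose identity for the Gaussian Choi resource and a threshold PPT map, and we would transplant it to the Choi state of $\mathcal Z_\xi$.

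Concretely, we would send one arm of a two-mode squeezed vacuum with mean photon number $\mu$ through each use of $\mathcal Z_\xi$. This produces the Gaussian state $\rho_\mu^{\otimes m}$. We then apply the explicit PPT distillation map, whose yield is governed by the partial-transpose spectrum of $\rho_\mu$. The rate should be the limit of $-\log_2$ of the smallest symplectic eigenvalue of the partial transpose, minus the Gaussian entropy term. As $\mu\to\infty$ this must converge to $-\log_2\xi+(\xi-1)/\ln2$.

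To check that limit we use the thermal-loss formula itself. Set $\eta\to1$ and $n=\xi\eta/(1-\eta)$ after rescaling, in the standard way that $\mathcal L_{\eta,n}$ followed by amplification converges to $\mathcal Z_\xi$. Then $-\log_2[(1-\eta)\eta^n]-g(n)+\log_2(\text{gain})$ tends to $-\log_2\xi-\xi\log_2 e+\log_2 e+\dots$. This matches the target, $(1-\xi)\log_2 e$ reversed in sign appropriately, giving $(\xi-1)/\ln2-\log_2\xi$ after bookkeeping. For each strictly smaller rate a finite $\mu$ suffices, which yields the fixed finite input energy. PPT distillation of entanglement at rate $R$ followed by teleportation, which is LOCC and hence PPT, converts the entanglement into quantum communication at the same rate.

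The main obstacle is the achievability step. The convenient reduction $\mathcal Z_\xi=\mathcal A_{1/\eta,0}\circ\mathcal L_{\eta,n}$ only gives the inequality in the wrong direction for a lower bound. So I would not rely on it, and would instead rerun the collective PPT threshold protocol on $\rho_\mu$ directly. This requires verifying that the exact partial-transpose identity still factorizes into a geometric photon-number structure, with ratio determined by $\xi$ and $\mu$. The uniformity of the error estimates as $m\to\infty$ at fixed $\mu$, needed to get fidelity $\to1$, is the delicate part. Finally, I would note that $\mathcal Z_0=\mathrm{id}$ has unbounded rate because $-\log_2\xi\to\infty$, with no energy bound.
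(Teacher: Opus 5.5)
Your converse is sound and is essentially the paper's argument in Appendix~\ref{app:H}. Teleportation through $\chi_N^\xi$ simulates $\mathcal Z_{\xi+\delta_N}$, and the PPT-relative-entropy (Rains) quantity is monotone under the free maps. A separable Gaussian comparison state has relative entropy tending to $-\log_2\xi+(\xi-1)/\ln2$, and $N\to\infty$ is taken at fixed protocol length before $m\to\infty$.

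The gap is in achievability: you never actually establish a lower bound. You rightly discard $\mathcal Z_\xi=\mathcal A_{1/\eta,0}\circ\mathcal L_{\eta,n}$, since it only gives $Q_{\mathrm{PPT}}(\mathcal Z_\xi)\leq Q_{\mathrm{PPT}}(\mathcal L_{\eta,n})$ (also, its added noise is $(1-\eta)(n+1)/\eta$, not your expression). But your numerical ``check'' still rests on that discarded reduction, and the direct rerun is left with a guessed rate formula. Lemma~\ref{lem:C1} is an identity for the test operator and depends only on the squeezer gain $H$. The resource enters only through the photon statistics of the error mode.

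Transplanted literally, using the exact normal form of $\chi_N^\xi$ as in Appendix~\ref{app:B}, the construction falls short. One has $\chi_N^\xi=\rho_{H,M,u}$ with $M+u+1=s=\sqrt{4\xi N+(1+\xi)^2}$, $|M-u|=\xi$ and $H=(2N+1+\xi+s)/(2s)$. So both occupations grow like $\sqrt{\xi N}$, while $H\simeq\frac12\sqrt{N/\xi}$. Proposition~\ref{prop:D3}, even with the smaller occupation as error mode, then applies for large $N$ only if $\xi<1/2$. Its rate tends to $-\log_2(2\xi)+(2\xi-1)/\ln2$, which is below the target by $1-\xi/\ln2>0$. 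This is the singular limit the paper deliberately avoids.

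The missing idea is to compose in the other order: Bob applies pure loss after the channel. Then $\mathcal L_{\eta,0}\circ\mathcal Z_\xi=\mathcal L_{\eta,n_\eta}$ with $n_\eta=\eta\xi/(1-\eta)$, since both send $V\mapsto\eta V+(1-\eta+2\eta\xi)I_2$. Local post-processing is free, so every code for $\mathcal L_{\eta,n_\eta}$ is a code for $\mathcal Z_\xi$ with the same transmitted energy. Since $\xi<1$ gives $n_\eta<\eta/(1-\eta)$, Theorem~\ref{thm:loss} yields $Q_{\mathrm{PPT}}(\mathcal Z_\xi)\geq B_{\mathrm{loss}}(\eta,n_\eta)$. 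This tends to $(\xi-1)/\ln2-\log_2\xi$ as $\eta\uparrow1$, using $g(x)=\log_2x+1/\ln2+O(x^{-1})$ and $\eta\ln(1/\eta)/(1-\eta)\to1$. It is exactly the limit your bookkeeping was aiming at, with no gain term. Fixed finite energy follows by choosing $\eta$ first, then $N$ via Theorem~\ref{thm:loss}, then the block length.

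A direct threshold on $\chi_N^\xi$ can also be made to work, but only by decoupling $H$ from the normal form as in Appendix~\ref{app:O}. With $H=N+1$, the error mode of $S_H^\dagger\chi_N^\xi S_H$ is thermal with mean $(N+1)\xi$, and $F_{N+1}((N+1)\xi)$ tends to the target.

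Finally, the limit $-\log_2\xi\to\infty$ alone says nothing about $\mathcal Z_0$. Either use that Bob can add the noise locally, so $Q_{\mathrm{PPT}}(\mathrm{id})\geq Q_{\mathrm{PPT}}(\mathcal Z_\xi)$, or simply send finite-dimensional subspaces noiselessly.
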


\paragraph{Entanglement distribution and secret key.}
Let us denote by $D_{\mathrm{PPT}}(\mathcal N)$ the channel's PPT-assisted entanglement-distribution capacity, in ebits per use. The protocol starts in a separable state, and its output on $K_m$-dimensional systems held by Alice and Bob must overlap with $\Phi_{K_m}=\proj{\phi_{K_m}}$, where $\ket{\phi_{K_m}}=K_m^{-1/2}\sum_{j=1}^{K_m}\ket{j}_A\ket{j}_B$ in the number basis $\ket{j}$. A transmission code applied to half of a locally prepared $\Phi_{K_m}$ distributes entanglement. Conversely, teleporting with distributed ebits gives a transmission code. As discussed in Appendix~\ref{app:A}, these conversions imply
\begin{equation}D_{\mathrm{PPT}}(\mathcal N)=Q_{\mathrm{PPT}}(\mathcal N).\label{eq:distribution}\end{equation}
Our converses bound $D_{\mathrm{PPT}}$, and hence $Q_{\mathrm{PPT}}$. Our codes distribute entanglement, and teleportation converts them into transmission codes. 


Let $K(\mathcal N)$ be the ordinary secret-key capacity with LOCC and public classical communication, and let $B_{\mathrm{PLOB}}(\mathcal N)$ denote the corresponding piecewise expression in Theorems~\ref{thm:loss}--\ref{thm:add}. For these three channel families without an energy constraint, PLOB's key converse~\cite{r11} and our results give
\begin{equation}
Q_2(\mathcal N)\leq K(\mathcal N)\leq B_{\mathrm{PLOB}}(\mathcal N)=Q_{\mathrm{PPT}}(\mathcal N).\label{eq:keybounds}
\end{equation}
Thus a rate-preserving reduction of an optimal PPT protocol to LOCC would also determine $K$. Lower bounds specifically on secret-key capacity~\cite{PGPBL,OLCSPB,OPB} need not be achievable entanglement-distribution rates.

As shown in Appendix~\ref{app:A}, the direct analogue $K_{\mathrm{PPT}}$, allowing the unrestricted PPT assistance used here, is infinite even without using the channel. A free PPT replacement map can prepare arbitrarily many copies of a PPT state with positive LOCC-distillable key~\cite{r38}. This remains true when Eve holds a purification and at zero transmitted energy. Thus $K_{\mathrm{PPT}}$ does not provide a finite comparison with ordinary $K$ in this assistance model.

\section{Energy-constrained extension}\label{sec:energy}
We now determine the capacity at each prescribed finite input energy. Thermal loss remains the principal example; the same function also treats thermal amplification and additive noise.

\paragraph{The transmitted-energy constraint.}
Let $N\geq0$ be the permitted mean number of transmitted photons per use. In an $r$-use adaptive protocol, let $\rho_{A_i}^{(i)}$ be the unconditional state of the transmitted mode immediately before use $i$, and let $\hat n_{A_i}$ be its photon-number operator. The energy-constrained capacity uses the same fidelity criterion and free operations as $Q_{\mathrm{PPT}}$, subject to
\begin{equation}
x_i=\operatorname{Tr}(\hat n_{A_i}\rho_{A_i}^{(i)}),~~\sum_{i=1}^{r}x_i\leq rN\label{eq:energy_constraint}
\end{equation}
for every input state. The average includes all classical branches. No bound is imposed on the energy of local memories. The constraint is on the transmitted mean, rather than on every photon-number outcome. As shown in Appendix~\ref{app:A}, the same operational conversions give $D_{\mathrm{PPT}}(\mathcal N,N)=Q_{\mathrm{PPT}}(\mathcal N,N)$ without additional channel uses or transmitted photons.

\paragraph{Exact energy-constrained thermal-loss capacity.}
Let $y$ denote the output mean and $\ell_0=1+N+y$. The correlation parameter $t$ and nonnegative thermal occupations $u,w$ in the squeezed-thermal representation are
\begin{equation}
\begin{aligned}
y&=\eta N+(1-\eta)n,~~\ell_0=1+N+y,\\
t&=\frac{\ell_0-\sqrt{\ell_0^2-4\eta N(N+1)}}{2}:=t(\eta),\\
u&=N-t,~~ w=y-t.
\end{aligned}\label{eq:loss_parameters}
\end{equation}
Define $\varphi(x)=x\log_2x$ for $x>0$ and $\varphi(0)=0$. Below the entanglement-breaking threshold, set
\begin{equation}
\begin{split}
F_{\eta,n}(N)={}&\varphi(t)+\varphi(\ell_0-t)-\varphi(N)-\varphi(y)\\
&-g(u)-g(w).
\end{split}\label{eq:energy_function}
\end{equation}
For $n\geq\eta/(1-\eta)$, set $F_{\eta,n}(N)=0$. This is the energy-constrained converse function of Ref.~\cite{r16}. It is continuous, nondecreasing and concave in $N$. As shown in Appendices~\ref{app:K}--\ref{app:N}, it is attained under PPT assistance.
\begin{theorem}[Energy-constrained thermal loss]\label{thm:energyloss}
For every $0<\eta<1$, $n\geq0$ and finite $N\geq0$,
\begin{equation}Q_{\mathrm{PPT}}(\mathcal L_{\eta,n},N)=F_{\eta,n}(N).\label{eq:energyloss}\end{equation}
Every smaller positive rate is attained with $N$ fixed throughout the coding limit. The capacity vanishes in the entanglement-breaking region.
\end{theorem}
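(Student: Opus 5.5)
The converse is inherited: $F_{\eta,n}(N)$ is the energy-constrained converse function of Ref.~\cite{r16}. It bounds the energy-constrained two-way capacity via a relative-entropy-of-entanglement bound on the Choi-type state at input energy $N$. That bound is monotone under completely PPT-preserving maps, since $\Gamma$-conjugation maps PPT states to PPT states. Replacing the separable set by the PPT set in the teleportation-stretching/adaptive argument of Ref.~\cite{r11,r16}, and using concavity and monotonicity of $F_{\eta,n}$ to handle the averaged constraint $\sum_i x_i\le rN$, gives $Q_{\mathrm{PPT}}(\mathcal L_{\eta,n},N)\le F_{\eta,n}(N)$. In the entanglement-breaking region the output is separable, hence PPT, so the capacity vanishes. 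I will check that the converse of Ref.~\cite{r16} is indeed a PPT-relative-entropy (or Rains-type) bound; otherwise I will rederive it with the PPT set, which can only lower the quantity.

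For achievability I will use $D_{\mathrm{PPT}}=Q_{\mathrm{PPT}}$ and build an entanglement-distribution protocol. Alice sends one arm of a two-mode squeezed vacuum with mean photon number $N$ through $\mathcal L_{\eta,n}$ on each use. This gives i.i.d.\ copies of a Gaussian state $\rho_N$ whose squeezed-thermal parameters are exactly $(t,u,w)$ of Eq.~\eqref{eq:loss_parameters}. The transmitted energy is exactly $N$ per use, independent of the message. The task then reduces to showing that the PPT-distillable entanglement of $\rho_N^{\otimes m}$ is at least $F_{\eta,n}(N)$. I expect $F_{\eta,n}(N)$ to equal the relative entropy of $\rho_N$ to its closest PPT/separable Gaussian state, consistent with the $\varphi$ and $g$ structure. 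This makes the problem a matching of the PPT-distillation rate to the relative-entropy bound.

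The distillation step will follow the unconstrained construction of Appendices~\ref{app:B}--\ref{app:F}. After local photon-number projections onto typical total-photon sectors, the state decomposes into finite-dimensional blocks. On these blocks I apply a threshold PPT map $\Lambda(\sigma)=\operatorname{Tr}(P\sigma)\Phi_K+\operatorname{Tr}((1-P)\sigma)\frac{1-\Phi_K}{K^2-1}$, where $P$ projects onto the large-eigenvalue (typical) part relative to the closest PPT state. This map is completely PPT-preserving when $\|P^{\Gamma}\|_\infty\le 1/K$. Its fidelity tends to one at rates approaching the hypothesis-testing relative entropy, which by Stein's lemma tends to $F_{\eta,n}(N)$.

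The main obstacle is the partial-transpose norm bound on $P^\Gamma$ at finite energy. In the unconstrained case an exact partial-transpose identity handled this, but with fixed $N$ the photon-sector decomposition produces only a polynomial-in-$m$ overhead. I will first try to bound $\|P^\Gamma\|_\infty$ sector by sector, using the fact that the number of typical photon sectors is $\mathrm{poly}(m)$ and that each sector's operator is a function of a Gaussian with explicit partial transpose. This costs only $O(\log m)$ in $\log_2K_m$, which is negligible in the rate. Truncation errors from unbounded photon numbers will be controlled by the fixed finite second moments.
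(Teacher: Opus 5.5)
\textbf{The gap is in achievability, at the step you yourself flag as the main obstacle.} You claim the fidelity tends to one ``at rates approaching the hypothesis-testing relative entropy, which by Stein's lemma tends to $F_{\eta,n}(N)$''. This does not hold.

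Lemma~\ref{lem:D1} needs $\|P^\Gamma\|_\infty\le1/K$. Equivalently, $|\operatorname{Tr}(PY)|\le1/K$ for \emph{every} Hermitian $Y$ with $\|Y^\Gamma\|_1\le1$, including non-positive and non-product $Y$. Stein's lemma against a fixed PPT state $\sigma$ only makes $\operatorname{Tr}(P\sigma^{\otimes s})$ small. That cannot suffice. The Stein exponent $D(\rho_N\|\sigma)$ of a poorly chosen feasible $\sigma$ exceeds $R^\infty(\rho_N)=F_{\eta,n}(N)$, so the same reasoning would beat the converse. No generalized Stein lemma identifies the PPT-distillation exponent with a regularized relative entropy, and Ref.~\cite{r23} shows the two can differ. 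Your heuristic identification of $F$ with a single-copy relative entropy to a Gaussian state is also unsupported. $F$ is the regularized value $R^\infty(\chi_N)=E_R^\infty(\chi_N)$, reached through non-Gaussian condensate-mixture comparisons.

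Reusing Appendices~\ref{app:B}--\ref{app:F} does not close the gap either. The exact identity of Lemma~\ref{lem:C1} controls error-photon thresholds. At fixed $N$ these give only rates of the form $B_H(e_H)$: Eq.~\eqref{eq:O11} lists $\simeq0.5331$ against $F\simeq0.6224$.

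Your fallback also misplaces the polynomial overhead. $P^\Gamma$ is block diagonal in the local total counts, so the number of sectors costs nothing. The difficulty lies inside a single $(m,k)$ sector of $s$ modes. Its dimension is exponential in $s$, and its conditional state is not Gaussian. A threshold test there expands in the pair operators $C_j$ with alternating coefficients. Termwise bounds, or dimension-times-expectation bounds, therefore lose an exponential factor.

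\textbf{The missing idea is Theorem~\ref{thm:M1}.}
\begin{itemize}
\item Take as test the correlation threshold $T_{s,m,k,h}=\sum_{a\ge h}\Pi_a$ on the ladder spaces of $C_-=\sum_ia_ib_i$. In these spaces the sector resource and the condensate-mixture separable comparison $\sigma_{s,m,k}$ are simultaneously diagonal.
\item Lemma~\ref{lem:M2} writes the derivative of the condensate expectation as a hypergeometric expectation. This shows aligned condensates maximize the test expectation, with value equal to the separable acceptance probability $p_{s,m,k,h}$.
\item Lemma~\ref{lem:M3} compresses to two modes per side, using $U(s)$ invariance, the multiplicity-free Pieri decomposition with two-mode highest-weight vectors, and Schur's lemma. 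This gives $\|T_{s,m,k,h}^\Gamma\|_\infty\le(m+1)(k+1)p_{s,m,k,h}$.
\item Combine this with concentration $H_s/s\to t$ of the ladder label and the Stirling tail exponent $E_{M,Y}(\alpha)$, which tends to $F_{\tau,v}(N)$ as $\alpha\uparrow t$. The result is $\|Q_s^\Gamma\|_\infty\le2^{-sR}$ with fidelity tending to one, for every $R<F_{\eta,n}(N)$.
\end{itemize}

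\textbf{Your converse outline is close to Appendix~\ref{app:L}, with two corrections.} First, the separable-set relative entropy of entanglement is not PPT-monotone, since a replacement channel prepares PPT entangled states for free. You must therefore work with the Rains bound throughout, bounding sector costs via $R\le E_R$. Second, at fixed $N$ you cannot stretch onto $\chi_N^{\otimes r}$. You need exact sector simulation with $J_{s,m}$ and the one-use increase bound of Lemma~\ref{lem:L1}.
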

Let $Q_2(\mathcal N,N)$ denote the same energy-constrained capacity with LOCC assistance. At zero thermal noise, the formula reduces to the known LOCC value~\cite{r15}:
\begin{equation}
\begin{split}
Q_{\mathrm{PPT}}(\mathcal L_{\eta,0},N)&=Q_2(\mathcal L_{\eta,0},N)\\
&=g(N)-g((1-\eta)N).
\end{split}\label{eq:energypureloss}
\end{equation}
For noisy thermal loss, equality with the optimal LOCC rate is not established.

\paragraph{Thermal amplification and additive noise.}
To state the extension uniformly, let $\tau>0$ be the intensity gain and $v\geq\max\{0,\tau-1\}$ the output mean photon number for vacuum input. Denote the resulting phase-insensitive Gaussian channel by $\mathcal G_{\tau,v}$. In quadratures with vacuum covariance $I_2$, its first-moment vector $\bar r$ and covariance matrix $V$ transform as
\begin{equation}
\bar r\longmapsto\sqrt\tau\,\bar r,~~ V\longmapsto\tau V+(2v+1-\tau)I_2.\label{eq:gaussianchannel}
\end{equation}
The three families correspond to
\begin{equation}
\begin{array}{c|cc}
\text{Channel}&\tau&v\\\hline
\mathcal L_{\eta,n}&\eta&(1-\eta)n\\
\mathcal A_{\kappa,n}&\kappa&(\kappa-1)(n+1)\\
\mathcal Z_\xi&1&\xi
\end{array}\label{eq:channelparameters}
\end{equation}
The output mean, correlation parameter and thermal occupations are now
\begin{equation}
\begin{aligned}
y&=\tau N+v,~~\ell_0=1+N+y,\\
t&=t(\tau),~~u=N-t,~~ w=y-t.
\end{aligned}\label{eq:general_parameters}
\end{equation}
For $v<\tau$, define $F_{\tau,v}(N)$ by the right-hand side of Eq.~\eqref{eq:energy_function} with the parameters in Eq.~\eqref{eq:general_parameters}. Set $F_{\tau,v}(N)=0$ for $v\geq\tau$. In particular, $F_{\eta,(1-\eta)n}=F_{\eta,n}$.

The extension also determines the entanglement of the channel output states. Let $|\psi_N\rangle$ be a two-mode squeezed vacuum with mean $N$ photons in each mode, and set $\chi_N^{\mathcal N}=(\mathrm{id}\otimes\mathcal N)(|\psi_N\rangle\langle\psi_N|)$. For a bipartite state $\rho$, $D_{\mathrm{PPT}}(\rho)$ is the asymptotic number of ebits distilled per copy under PPT assistance. The relative entropy of entanglement (REE), $E_R(\rho)$, is the infimum of the quantum relative entropy from $\rho$ to a separable state~\cite{r33,r34,r35,Simeone2026}. The Rains bound $R(\rho)$ minimizes the extended quantum relative entropy $D(\rho\Vert\sigma)$ over positive trace-class $\sigma$ with $\|\sigma^\Gamma\|_1\leq1$~\cite{r19,r20}. Appendix~\ref{app:A} defines the extended relative entropy without subtracting infinite entropies. For $E=R$ or $E=E_R$, define $E^\infty(\rho)=\inf_{\ell\geq1}E(\rho^{\otimes\ell})/\ell$. As shown in Appendices~\ref{app:K}--\ref{app:N}, the sector bounds, adaptive converse and distillation construction give the following equalities.
\begin{theorem}[Energy-constrained Gaussian channels]\label{thm:energygeneral}
For every physical $\mathcal G_{\tau,v}$ above and finite $N\geq0$, let $\chi_N=\chi_N^{\mathcal G_{\tau,v}}$. Then
\begin{equation}
\begin{aligned}
Q_{\mathrm{PPT}}(\mathcal G_{\tau,v},N)&=D_{\mathrm{PPT}}(\chi_N)=R^\infty(\chi_N)\\
&=E_R^\infty(\chi_N)=F_{\tau,v}(N).
\end{aligned}\label{eq:energygeneral}
\end{equation}
\end{theorem}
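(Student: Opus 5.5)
We prove Theorem~\ref{thm:energygeneral} by closing a chain of inequalities,
\begin{equation*}
F_{\tau,v}(N)\le D_{\mathrm{PPT}}(\chi_N)\le R^\infty(\chi_N)\le E_R^\infty(\chi_N)\le F_{\tau,v}(N),
\end{equation*}
and then relating $D_{\mathrm{PPT}}(\chi_N)$ to $Q_{\mathrm{PPT}}(\mathcal G_{\tau,v},N)$ in both directions.

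\textbf{Middle links.} The inequality $R\le E_R$ holds because every separable $\sigma$ satisfies $\|\sigma^\Gamma\|_1=1$; it passes to regularizations. The inequality $D_{\mathrm{PPT}}\le R^\infty$ is Rains' converse. In infinite dimensions we will use the extended relative entropy of Appendix~\ref{app:A}, the asymptotic continuity of the Rains quantity on finite-rank truncations, and the monotonicity of $D$ under normal PPT maps. The last link, $E_R^\infty(\chi_N)\le F_{\tau,v}(N)$, is the converse of Ref.~\cite{r16}, which gives $F_{\tau,v}(N)$ as an upper bound on the REE of the TMSV output. If that reference bounds only $E_R(\chi_N)$ itself, subadditivity gives $E_R^\infty\le E_R$.

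\textbf{Capacity relations.} For $Q_{\mathrm{PPT}}(\mathcal G_{\tau,v},N)\le F_{\tau,v}(N)$ we use the adaptive teleportation-stretching converse. Phase-insensitive Gaussian channels are teleportation covariant, but the stretching must keep track of the energy. We therefore take a separable or PPT $\sigma$ optimal for the Rains quantity on each photon sector and bound the total relative-entropy increase by $\sum_i \sup_{\rho:\,\mathrm{Tr}\,\hat n\rho\le x_i}R(\mathrm{id}\otimes\mathcal G(\rho))$. Concavity and monotonicity of $F$ in $N$, together with the averaged constraint~\eqref{eq:energy_constraint}, then give $\sum_i F(x_i)\le rF(N)$. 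We also need that, at fixed energy, the TMSV is optimal among purifications; Ref.~\cite{r16} supplies this via the phase-averaging and Gaussian extremality argument. For achievability, Alice sends halves of $\psi_N$, which meets the constraint exactly, and Bob and Alice PPT-distill $\chi_N^{\otimes m}$. This gives $Q_{\mathrm{PPT}}(\mathcal G,N)=D_{\mathrm{PPT}}(\mathcal G,N)\ge D_{\mathrm{PPT}}(\chi_N)$. The case $v\ge\tau$ is trivial, since the channel is entanglement breaking and every quantity vanishes.

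\textbf{Main obstacle: $D_{\mathrm{PPT}}(\chi_N)\ge F_{\tau,v}(N)$.} This requires an explicit collective PPT distillation protocol. The plan follows the paper's outline: project $\chi_N^{\otimes m}$ onto typical total-photon sectors, which are finite-dimensional, so that the state restricted to each sector is controlled. We then build a threshold map
\begin{equation*}
\Lambda(\rho)=\mathrm{Tr}(T\rho)\,\Phi_K+\mathrm{Tr}((1-T)\rho)\,\frac{1-\Phi_K}{K^2-1},
\end{equation*}
where $T$ is a hypothesis-test projector onto the high-likelihood eigenspaces of $\chi_N^{\otimes m}$ relative to a sector-wise Rains optimizer. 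This map is PPT-preserving if $\|T^\Gamma\|_\infty\le 1/K$ up to the standard Rains condition. The key estimate is a polynomial-in-$m$ norm bound on $T^\Gamma$ within each photon sector, so that we can take $K\approx 2^{m(F-\delta)}$ while the success probability tends to one. Showing this bound requires computing the partial transpose of the spectral projectors of the sector-restricted output state. I would first try to diagonalize these projectors in the joint number basis, exploiting the beam-splitter/Fock structure, and then count the sectors, which contribute only polynomial factors in $m$.
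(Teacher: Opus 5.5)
Your outer chain and your distillation step have the same shape as the paper's: a likelihood-ratio threshold against the sector comparison $\sigma_{s,m,k}$, inserted into the standard-form map of Lemma~\ref{lem:D1}. The ratio $\rho_{s,m,k}\sigma_{s,m,k}^{-1}$ increases with the ladder label, so your test is the paper's $T_{s,m,k,h}=\sum_{a\geq h}\Pi_a$. However, the estimate you defer is the core of the proof, and the route you propose for it would fail. The $\Pi_a$ are built from the collective pair operators $C_\pm=\sum_i a_ib_i$ and are not diagonal in the joint number basis. Within a single photon sector the dimension $d_{s,m}d_{s,k}$ grows exponentially in $s$, so counting sectors controls nothing. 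Generic bounds lose the exponent too: $\|T^\Gamma\|_\infty\leq\|T\|_2$, or expanding $T$ in the $C_j$ and bounding the alternating coefficients term by term with Lemma~\ref{lem:K1}. What you need is $\|T^\Gamma\|_\infty\leq\mathrm{poly}(m,k)\,p_{s,m,k,h}$ with $p_{s,m,k,h}=\operatorname{Tr}(\sigma_{s,m,k}T)$. The reason is that this separable acceptance probability has exponent $E_{M,Y}(\alpha)$ at a threshold $\alpha=t-\delta$ just below the typical correlation density, and $E_{M,Y}(\alpha)\to F_{\tau,v}(N)$. The paper gets the bound from two specific facts:
\begin{itemize}
\item Aligned product condensates maximize the expectation of $T$ (Lemma~\ref{lem:M2}). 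There the derivative of the condensate polynomial is written as a positive hypergeometric expectation.
\item $U(s)$ invariance and the multiplicity-free Pieri decomposition make $T^\Gamma$ scalar on irreducibles, each of which has a two-mode representative. This gives $\|T^\Gamma\|_\infty\leq\|B\|_2\leq\operatorname{Tr}B\leq(m+1)(k+1)p_{s,m,k,h}$ for the two-mode compression $B$ (Lemma~\ref{lem:M3}).
\end{itemize}
You also need two further facts, without which the threshold has no rate attached. One is concentration of the ladder label, $H_s/s\to t$, which follows from $C_+C_-/s^2\to\tau N(N+1)$. The other is that $E_{M,Y}$ increases below $t$, which holds because $t(\ell_0-t)-Ny=N(\tau-v)>0$.

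Your converse has a comparable hole. Mentioning sector-wise optimizers points the right way, but your per-use quantity $\sup_{\operatorname{Tr}\hat n\rho\leq x_i}R((\id\otimes\mathcal G_{\tau,v})(\rho))$ is non-amortized and single-copy. In an adaptive protocol the transmitted mode is correlated with both memories, so you must bound the increase of entanglement per use. Nothing bounds a single-copy Rains value at energy $x_i$ by the regularized $F_{\tau,v}(x_i)$. Teleportation stretching with Gaussian resources does not help either: the resource energy must diverge, and that recovers only the unconstrained PLOB value. The paper's Lemma~\ref{lem:L1} combines three ingredients:
\begin{itemize}
\item the exact LOCC simulation of the channel on $\mathcal H_{s,m}$ with the finite resource $J_{s,m}$~\cite{r15};
\item commutation of input number pinching through the phase-covariant channel into output pinching, whose cost Eq.~\eqref{eq:L1} bounds by the count entropy;
\item $s$ replicas of the fixed one-use input, so that this cost is $O(\log s)$ while $\mathbb E f_s(M_s)/s\to F_{\tau,v}(x)$.
\end{itemize}
This bounds the per-use increase of $R^\infty$ by $F_{\tau,v}(x)$, and concavity finishes the argument. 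The replicas are not optional: at $s=1$ the sector resource is a product state, and the entire bound sits in the pinching term. Optimality of $\psi_N$ is not an input; it follows from this bound and concavity. Gaussian extremality makes Gaussian states entanglement-minimal at fixed covariance, so it could not supply this step.

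Some smaller points:
\begin{itemize}
\item For $E_R^\infty(\chi_N)\leq F_{\tau,v}(N)$, use the flagged sector decomposition of $\chi_N^{\otimes s}$ with count entropy $O(\log s)$, as in Corollary~\ref{cor:K3}. A single-copy bound would amount to additivity of $E_R$ on $\chi_N$.
\item $D_{\mathrm{PPT}}\leq R^\infty$ needs no asymptotic continuity; Lemma~\ref{lem:E3} suffices.
\item Your plan omits the cases handled separately in Appendix~\ref{app:N}: the quantum-limited boundary $v=\max\{0,\tau-1\}$, the identity channel, and $N=0$. At the boundary, $z$ in Eq.~\eqref{eq:K1} is infinite and each sector state is $\Pi_q/D_q$.
\end{itemize}
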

Thus the thermal-amplifier capacity is $F_{\kappa,(\kappa-1)(n+1)}(N)$ and the additive-noise capacity is $F_{1,\xi}(N)$. They vanish respectively for $n\geq1/(\kappa-1)$ and $\xi\geq1$. The state equalities do not require single-copy equality of $R$ and $E_R$. At the quantum-limited amplifier boundary, the common PPT and LOCC rate~\cite{r15} is
\begin{equation}
\begin{split}
&Q_{\mathrm{PPT}}(\mathcal A_{\kappa,0},N)=Q_2(\mathcal A_{\kappa,0},N)\\
&=g(\kappa N+\kappa-1)-g((\kappa-1)(N+1)).
\end{split}\label{eq:energyamp}
\end{equation}
At the noiseless additive endpoint, $Q_{\mathrm{PPT}}(\mathcal Z_0,N)=Q_2(\mathcal Z_0,N)=g(N)$. This is finite at fixed $N$, whereas the unrestricted capacity is infinite. For all three noisy families, taking $N\to\infty$ after the coding limit recovers the capacities in Sec.~\ref{sec:unconstrained}. As shown in Appendix~\ref{app:N}, the construction also handles hard total-photon budgets and channel boundaries. Appendix~\ref{app:O} compares alternative constructions. The Rains converse concerns entanglement distribution and hence quantum transmission. It does not by itself upper-bound the ordinary energy-constrained secret-key capacity.

\section{Conclusions}\label{sec:conclusions}
We have determined the PPT-assisted capacities of thermal-loss, thermal-amplifier and additive Gaussian-noise channels, both without an energy constraint and at each prescribed mean input energy. This solves a nontrivial noisy bosonic capacity problem by matching constructive protocols with converses for arbitrary adaptive strategies. The energy-constrained extension also identifies the regularized entanglement of the corresponding Gaussian channel states.

The remaining search for $Q_2=Q_{\mathrm{PPT}}$ requires a capacity-achieving LOCC protocol that preserves the rate and the imposed input-energy budget. In the unconstrained case, such a construction would also establish $K=Q_2=Q_{\mathrm{PPT}}$; unrestricted PPT-assisted key generation itself is trivial. Whether LOCC can attain these rates for noisy channels remains open. A bound valid under arbitrary PPT assistance cannot improve on the unconstrained PLOB value.

\paragraph{Acknowledgments.}
This work was supported by UK Research and Innovation (UKRI) through the Integrated Quantum Networks Research Hub (IQN, EP/Z533208/1). OpenAI's ChatGPT (GPT-6 Astra model), a large language model, was used during the preparation of this manuscript to generate candidate mathematical derivations and proofs, to assist with their development and to improve the presentation of the manuscript. Anthropic's Claude (Opus 5.5) was also used to check derivations numerically, verify references and literature, and assist with preparing the manuscript. Both models were directed by prompts written by the authors concerning the mathematical assumptions, derivations and statements under consideration. All mathematical results, derivations and proofs produced with their assistance were independently examined, verified and revised as necessary by the authors, who take full responsibility for the accuracy and scientific content of the work.

\clearpage
\onecolumn
\setcounter{page}{1}
\renewcommand{\thepage}{S\arabic{page}}
\begin{center}
{\large\bfseries Supplemental Material}\par\vspace{7pt}
{\normalsize\bfseries PPT-assisted capacities of bosonic Gaussian channels}\par\vspace{4pt}
Matthew Barber and Stefano Pirandola
\end{center}
\section*{Proof architecture}
Appendix~\ref{app:A} fixes the operations, states, entanglement measures and operational tasks. The diagrams organize the technical proofs by the statements each appendix supplies. Solid arrows show dependencies of the capacity proofs. Dashed boxes record supplementary consequences or comparisons, which are not premises of the main theorems.

\begin{figure}[h!]
\centering
\begin{tikzpicture}[
 node distance=7mm and 10mm,
 box/.style={draw=black!65,rounded corners=2pt,align=center,font=\small,inner sep=5pt,text width=6.6cm},
 wide/.style={box,text width=13.8cm},
 aux/.style={wide,dashed},
 arr/.style={-{Latex[length=2mm]},line width=0.5pt}]
\node[wide] (a) {\textbf{Definitions and operational tasks}\\Appendix A: PPT maps, channel states, Rains bound, REE and capacities};
\node[wide,below=of a] (b) {\textbf{Finite-energy thermal-loss resource}\\Appendix B: exact squeezed-thermal form and large-energy limits};
\node[box,below=of b.south west,anchor=north west] (c) {\textbf{Bounded partial-transpose identity}\\Appendix C: diagonal generating operator};
\node[box,below=of b.south east,anchor=north east] (e) {\textbf{Matching entanglement bounds}\\Appendix E: uniform Rains witness\\and separable comparison};
\node[box,below=of c] (d) {\textbf{Achievable distillation rate}\\Appendix D: exact threshold norm, \mbox{PPT} map, high fidelity and tail exponent};
\node[box,below=of e] (f) {\textbf{Adaptive channel converse}\\Appendix F: teleportation simulation\\and the required order of limits};
\coordinate (mid) at ($(d.south)!0.5!(f.south)$);
\node[wide,below=8mm of mid] (t1) {\textbf{Thermal-loss capacity: Theorem 1}\\The achievable rate equals the adaptive upper bound};
\node[box,below=of t1.south west,anchor=north west] (g) {\textbf{Thermal amplifier: Theorem 2}\\Appendix G: interchange the thermal\\modes and repeat both bounds};
\node[box,below=of t1.south east,anchor=north east] (h) {\textbf{Additive noise: Theorem 3}\\Appendix H: local attenuation for coding;\\Gaussian comparison for the converse};
\coordinate (mid2) at ($(g.south)!0.5!(h.south)$);
\node[aux,below=8mm of mid2] (extra) {\textbf{Supplementary consequences}\\Appendix I: one-photon resources (uses C and E)\\Appendix J: nonseparability of the threshold maps in D\\Appendices E, G and H: large-energy channel-state limits};
\draw[arr] (a)--(b);\draw[arr] (b.south)-|(c.north);\draw[arr] (b.south)-|(e.north);
\draw[arr] (c)--(d);\draw[arr] (e)--(f);
\draw[arr] (d.south)--(d.south|-t1.north);\draw[arr] (f.south)--(f.south|-t1.north);
\draw[arr] (t1.south)-|(g.north);\draw[arr] (t1.south)-|(h.north);
\end{tikzpicture}
\caption{Unconstrained capacities. The left branch constructs a PPT code; the right branch bounds every adaptive PPT protocol. Both use physical finite-energy resource states, but take the energy and coding limits in different orders. The exact threshold norm in Appendix D also gives improved finite-block tests.}
\label{fig:unconstrained}
\end{figure}
\clearpage
\section*{Energy-constrained proof architecture}
Here the input energy $N$ stays fixed as the block length grows. The photon-sector spectra, exact sector simulation and density limit are inputs from Refs.~\cite{r15,r16}. Appendix~\ref{app:K} explains these inputs and constructs matching PPT witnesses. The proof then separates into a correlation-threshold code and an adaptive converse. It reuses the standard-form map of Appendix~\ref{app:D} and the witness inequality of Appendix~\ref{app:E}.

\begin{figure}[h!]
\centering
\begin{tikzpicture}[
 node distance=9mm and 10mm,
 box/.style={draw=black!65,rounded corners=2pt,align=center,font=\small,inner sep=6pt,text width=6.6cm},
 wide/.style={box,text width=13.8cm},
 aux/.style={box,dashed},
 arr/.style={-{Latex[length=2mm]},line width=0.5pt}]
\node[wide] (k) {\textbf{Photon sectors and exact entanglement costs}\\Appendix K: sector spectra, separable comparison and additive Rains witness\\Regularized channel-state entanglement equals $F_{\tau,v}(N)$};
\node[box,below=of k.south west,anchor=north west] (m) {\textbf{Polynomial partial-transpose bound}\\Appendix M: condensate maximum\\and two-mode compression\\Norm and separable tail have\\the same exponential rate};
\node[box,below=of k.south east,anchor=north east] (l1) {\textbf{Entanglement increase in one use}\\Appendix L: exact sector simulation,\\number pinching and limiting arguments\\Increase at transmitted mean $x$\\is at most $F_{\tau,v}(x)$};
\node[box,below=of m] (n) {\textbf{PPT code at fixed input energy}\\Appendix N: concentration, separable-tail\\exponent and standard-form map\\Fidelity tends to one at every\\rate below $F_{\tau,v}(N)$};
\node[box,below=of l1] (l2) {\textbf{Bound for any adaptive protocol}\\Appendix L: sum one-use increases\\and apply concavity\\No rate exceeds $F_{\tau,v}(N)$\\under the mean-energy budget};
\coordinate (mid) at ($(n.south)!0.5!(l2.south)$);
\node[wide,below=10mm of mid] (t) {\textbf{Exact capacities and state equalities: Theorems 4 and 5}\\Thermal loss, thermal amplification and additive Gaussian noise\\Channel capacity equals channel-state PPT-distillable entanglement};
\node[box,below=of t.south west,anchor=north west] (b) {\textbf{Complete the parameter range}\\Appendix N: zero energy, entanglement\\breaking, quantum-limited channels\\and the identity channel};
\node[aux,below=of t.south east,anchor=north east] (x) {\textbf{Further consequences and comparisons}\\Appendix N: hard total-photon budgets\\Appendix O: benchmarks and a sharper\\finite-block norm conjecture};
\draw[arr] (k.south)-|(m.north);\draw[arr] (k.south)-|(l1.north);
\draw[arr] (m)--(n);\draw[arr] (l1)--(l2);
\draw[arr] (n.south)--(n.south|-t.north);\draw[arr] (l2.south)--(l2.south|-t.north);
\draw[arr] (b.north)--(b.north|-t.south);
\end{tikzpicture}
\caption{Energy-constrained extension. The polynomial factor in the norm bound disappears after taking its logarithm and dividing by block length. The sharp norm conjecture in Appendix O is not needed. Appendix N completes the parameter range and separately establishes the hard-budget extension.}
\label{fig:energy}
\end{figure}
\clearpage
\twocolumn
\appendix
\renewcommand{\thesection}{\Alph{section}}
\titleformat{\section}{\normalfont\small\bfseries\centering}{Appendix \thesection:}{0.5em}{}
\numberwithin{equation}{section}
\renewcommand{\theequation}{\thesection\arabic{equation}}

\section{Preliminary notions}\label{app:A}
This appendix fixes the conventions and operational definitions used in the proofs.

\subsection{PPT operations and operator norms}
The identity channel is denoted by $\operatorname{id}$. We write $\Gamma_B$ for transposition of Bob's indices in the number basis, or $\Gamma$ when the system is clear. For a bipartite operator $X$,
\begin{equation}\label{eq:A1}
 \langle i,j|X^\Gamma|k,l\rangle=\langle i,l|X|k,j\rangle.
\end{equation}
A normalized state is PPT when $X^\Gamma\geq0$. For input systems $A,B$, output systems $A',B'$ and a completely positive trace-preserving map $\Lambda$, define
\begin{equation}\label{eq:A2}
 \Lambda^\Gamma=\Gamma_{B'}\circ\Lambda\circ\Gamma_B.
\end{equation}
We call $\Lambda$ completely PPT preserving when $\Lambda^\Gamma$ is also completely positive and trace preserving. Complete positivity includes arbitrary untouched ancillary systems, as required when a protocol retains quantum memories between channel uses.

On bosonic spaces both maps must be normal: their dual maps preserve increasing bounded limits of positive observables. Partial transpose need not be bounded on the full infinite-dimensional operator space, so identities involving it are first established on finite number-state matrix elements and then supplied with the required bounded or trace-class extension. Composition, tensor products and local classical control preserve the assistance class. LOCC is contained in this class, but a PPT map need not admit an LOCC implementation.

For an operator $X$, we use the trace norm $\|X\|_1=\operatorname{Tr}\sqrt{X^\dagger X}$, the operator norm $\|X\|_\infty=\sup_{\|\psi\|=1}\|X\psi\|$, and, when finite, the Hilbert--Schmidt norm $\|X\|_2=(\operatorname{Tr}X^\dagger X)^{1/2}$. The duality bound $|\operatorname{Tr}XY|\leq\|X\|_1\|Y\|_\infty$ connects Rains comparisons to PPT witnesses. Trace distance between states controls every measurement probability; in particular, for an effect $0\leq Q\leq I$, $|\operatorname{Tr}Q(\rho-\sigma)|\leq\|\rho-\sigma\|_1/2$.

\subsection{Quantum communication and rates}
For a channel $\mathcal N$ from Alice to Bob, $Q_{\mathrm{PPT}}(\mathcal N)$ denotes the quantum capacity with this assistance. A code of length $m$ and dimension $K_m$ acts on a $K_m$-dimensional system $A_0$ held by Alice. Alice and Bob share no entanglement beforehand: any ancillary states start in a separable state that is independent of the input. The protocol interleaves $m$ uses of $\mathcal N$ with completely PPT-preserving operations. It therefore defines a channel $\Lambda_m$ from $A_0$ to a $K_m$-dimensional system $B_0$ held by Bob, identified with $A_0$ through the number basis. Let $\widetilde A$ be a reference system of the same dimension, which no operation acts on. The code's worst-case fidelity is
\begin{equation}\label{eq:A3}
 f_m=\min_\Psi\langle\Psi|(\operatorname{id}_{\widetilde A}\otimes\Lambda_m)(|\Psi\rangle\langle\Psi|)|\Psi\rangle,
\end{equation}
where the minimum is over all unit vectors $\Psi$ in $\widetilde A\otimes A_0$. A rate $R$ is achievable if, for each $m$, there is a code of length $m$ and dimension $K_m$, with worst-case fidelity $f_m$, such that $f_m\to1$ and $\liminf_{m\to\infty}m^{-1}\log_2K_m\geq R$. The supremum of these rates is $Q_{\mathrm{PPT}}(\mathcal N)$, in qubits per channel use. For an integer $K\geq2$, let $|\phi_K\rangle$ be the maximally entangled vector on two $K$-dimensional systems, with number-basis labels $j$, and let $\Phi_K$ be its density operator:
\begin{equation}\label{eq:A4}
 |\phi_K\rangle=K^{-1/2}\sum_{j=1}^K|j\rangle_A|j\rangle_B,
 ~~ \Phi_K=|\phi_K\rangle\langle\phi_K|.
\end{equation}
One ebit is one maximally entangled qubit pair. The capacity $Q_2$ uses LOCC assistance instead.

\subsection{Bosonic and Gaussian conventions}
For each mode, $[a,a^\dagger]=1$ and the photon-number operator is $a^\dagger a$. We use $q=a+a^\dagger$ and $p=(a-a^\dagger)/i$, so $[q,p]=2i$ and the vacuum covariance is $I_2$. For a quadrature vector $r$, let $\Delta r_j=r_j-\langle r_j\rangle$. The covariance entries are
\begin{equation}\label{eq:A5}
 V_{jk}=\tfrac12\langle\Delta r_j\Delta r_k+\Delta r_k\Delta r_j\rangle.
\end{equation}
A Gaussian state is determined by its first moments and covariance~\cite{r2,r3}. The states used below are centered. For two modes, transposition of Bob reverses the sign of $p_B$ in the covariance matrix. Gaussian PPT and separability criteria are developed in Refs.~\cite{r26,r27,r28}.

For mean occupation $x\geq0$, the thermal state is
\begin{equation}\label{eq:A6}
 \tau_x=\frac1{1+x}\sum_{j=0}^\infty\left(\frac{x}{1+x}\right)^j|j\rangle\langle j|.
\end{equation}
Its entropy is $g(x)$ from Eq.~\eqref{eq:entropy}, and its number variance is $x(1+x)$. Gaussian operations alone cannot distill Gaussian entanglement~\cite{r29,r30,r31}; the collective PPT maps in this paper are not restricted to Gaussian operations.

A channel is entanglement breaking if applying it to one part of any state gives a separable state. Here separable states are trace-norm limits of mixtures of product states.

\paragraph{Thermal loss.} Let $\eta\in(0,1)$ be the transmissivity and $n\geq0$ the environment's mean photon number. The thermal-loss channel $\mathcal L_{\eta,n}$ mixes an input annihilation operator $a_{\mathrm{in}}$ with an independent environment mode $e$ in $\tau_n$. The output annihilation operator is $b$ and the beam-splitter relation is
\begin{equation}\label{eq:A7}
 b=\sqrt\eta\,a_{\mathrm{in}}+\sqrt{1-\eta}\,e.
\end{equation}
The channel is entanglement breaking at and above $n=\eta/(1-\eta)$~\cite{r32}.

\paragraph{Thermal amplification.} Let $\kappa>1$ be the amplifier gain and $n\geq0$ the mean occupation of an independent environment mode $e$ in $\tau_n$. The thermal amplifier $\mathcal A_{\kappa,n}$ acts on the input mode $a_{\mathrm{in}}$, with output mode $b$, as
\begin{equation}\label{eq:A8}
 b=\sqrt\kappa\,a_{\mathrm{in}}+\sqrt{\kappa-1}\,e^\dagger.
\end{equation}
It is entanglement breaking for $n\geq1/(\kappa-1)$~\cite{r32}.

\paragraph{Additive Gaussian noise.} Let $\xi>0$ be the mean photon number added by random displacements. For a mode with annihilation operator $a$, write $D(\alpha)=\exp(\alpha a^\dagger-\bar\alpha a)$ for displacement by a complex amplitude $\alpha$. The additive Gaussian-noise channel $\mathcal Z_\xi$ acts on a density operator $\omega$ as
\begin{equation}\label{eq:A9}
 \mathcal Z_\xi(\omega)=\int_{\mathbb C}\frac{d^2\alpha}{\pi\xi}e^{-|\alpha|^2/\xi}D(\alpha)\omega D(\alpha)^\dagger.
\end{equation}
In quadratures whose vacuum covariance is the two-dimensional identity matrix $I_2$, it adds $2\xi I_2$ to the input covariance. It is entanglement breaking for $\xi\geq1$~\cite{r32}.

\subsection{Channel states and entanglement measures}
We give the explicit channel states and entanglement definitions used throughout the proofs. For a mean photon number $N\geq0$, let $|\psi_N\rangle$ be the two-mode squeezed vacuum, with number-basis labels $j$. For any one of the channels $\mathcal N$ above, let $\chi_N^{\mathcal N}$ be the state produced by sending its second mode through $\mathcal N$:
\begin{align}
 |\psi_N\rangle&=\sum_{j=0}^\infty\sqrt{\frac{N^j}{(N+1)^{j+1}}}|j\rangle|j\rangle,\label{eq:A10}\\
 \chi_N^{\mathcal N}&=(\operatorname{id}\otimes\mathcal N)(|\psi_N\rangle\langle\psi_N|).\label{eq:A11}
\end{align}
Each marginal of $|\psi_N\rangle$ has mean photon number $N$. At $N=0$, use the vacuum state $|\psi_0\rangle=|0\rangle|0\rangle$.

For a state $\rho$, let $D_{\mathrm{PPT}}(\rho)$ be its distillable entanglement under the same PPT operations. We use the extended Umegaki quantum relative entropy, with logarithms to base two. For nonzero positive trace-class $\sigma$, its extension to an unnormalized second argument is
\begin{equation}\label{eq:A12}
 D(\rho\|\sigma)=D\left(\rho\left\|\frac{\sigma}{\operatorname{Tr}\sigma}\right.\right)-\log_2\operatorname{Tr}\sigma,
\end{equation}
and $D(\rho\|0)=+\infty$. This agrees with $\operatorname{Tr}\rho(\log_2\rho-\log_2\sigma)$ when that expression is well defined. It is infinite if the support condition fails. The extended definition avoids subtracting two infinite entropies, which matters for unrestricted quantum memories. The Rains bound $R$ minimizes over positive operators with partial-transpose trace norm at most one; the relative entropy of entanglement $E_R$ minimizes over separable states~\cite{r19,r33,r34,r35}:
\begin{align}
 R(\rho)&=\inf_{\substack{\sigma\geq0\\\|\sigma^\Gamma\|_1\leq1}}D(\rho\|\sigma),\label{eq:A13}\\
 E_R(\rho)&=\inf_{\sigma\text{ separable state}}D(\rho\|\sigma).\label{eq:A14}
\end{align}
Let $\ell\geq1$ be an integer number of copies. For either subadditive quantity $E=R$ or $E=E_R$, its regularization is $E^\infty(\rho)=\inf_{\ell\geq1}E(\rho^{\otimes\ell})/\ell$.

The minimization domain for $R$ contains every separable state, so $R\leq E_R$ and $R^\infty\leq E_R^\infty$. The operational bound $D_{\mathrm{PPT}}\leq R^\infty$ follows from the output test proved in Appendix~\ref{app:E}. A separable comparison therefore bounds the PPT rate through $R$; no PPT monotonicity of $E_R$ is required.

The Rains bound is monotone under these maps. If $\sigma$ is feasible in its definition, trace-norm contractivity for Hermitian inputs gives
\begin{equation}\label{eq:A15}
 \|\Lambda(\sigma)^\Gamma\|_1=\|\Lambda^\Gamma(\sigma^\Gamma)\|_1\leq\|\sigma^\Gamma\|_1\leq1.
\end{equation}
Relative-entropy data processing~\cite{r36} then proves monotonicity.

\subsection{Classical flags, entropy and energy budgets}
For a probability distribution $p_j$, write $H(J)=-\sum_jp_j\log_2p_j$ for its Shannon entropy. For a state $\rho$, write $S(\rho)=-\operatorname{Tr}\rho\log_2\rho$ for its von Neumann entropy. The binary entropy is $h_2(p)=-p\log_2p-(1-p)\log_2(1-p)$, with endpoint values understood by continuity.

A local flag is an orthogonal classical register recording a measurement outcome. For a flagged state $\sum_jp_j|j\rangle\langle j|\otimes\rho_j$, both $R$ and $E_R$ equal the average of the conditional values, when the register is local. Appendix~\ref{app:K} proves the identity in the setting needed here. This is distinct from appending a nonlocal test-outcome flag to a PPT distillation channel: the latter can destroy complete PPT preservation.

Number pinching removes coherence between different total photon counts. If $P_j$ is the projector onto count $j$, the pinching channel is $\mathcal P(\rho)=\sum_jP_j\rho P_j$. Its entropy cost is at most $H(J)$; the corresponding Rains bound is proved in Appendix~\ref{app:L}. A nonnegative integer count of mean $E$ has entropy at most $g(E)$. Thus a collective count of mean proportional to the block length $s$ costs only $O(\log s)$ bits to record, which is negligible per channel use.

The energy constraint applies to the unconditional transmitted means in Eq.~\eqref{eq:energy_constraint}. Local memories, ancillary systems and free operations have no separate energy budget. A hard total-photon constraint is stronger: the entire transmitted block must have support on counts at most $sN$.

\subsection{Entanglement distribution and quantum communication}\label{sec:A6}
For a state $\rho$, $D_{\mathrm{PPT}}(\rho)$ counts ebits per independent copy~\cite{r20}. For a channel $\mathcal N$, $D_{\mathrm{PPT}}(\mathcal N)$ counts ebits generated per use, starting from a separable state, with arbitrary adaptive PPT assistance. The output on $K_m$-dimensional systems held by Alice and Bob must have overlap with $\Phi_{K_m}$ tending to one. It equals $Q_{\mathrm{PPT}}(\mathcal N)$ of Sec.~II. The two conversions are as follows.

First, suppose a code has worst-case fidelity at least $1-\epsilon$. Alice applies it to one half of a locally prepared $\Phi_K$. The initial state lies entirely in Alice's laboratory, so this is a distribution protocol. Since $\Phi_K$ is among the states in the minimum, its output has overlap at least $1-\epsilon$ with $\Phi_K$. The reduced input state is maximally mixed, which is an average of pure input states. Since the transmitted mean is linear in the input, the protocol obeys any energy budget that the code obeys for every input.

Second, suppose a distribution protocol outputs $\omega$ with $\operatorname{Tr}(\Phi_K\omega)=f$. Alice and Bob first apply the local $U\otimes\bar U$ twirl, which preserves this overlap and gives the isotropic state
\[
 \bar\omega=f\Phi_K+(1-f)\frac{I-\Phi_K}{K^2-1}.
\]
They then use standard teleportation through $\bar\omega$~\cite{r37}. Each maximally entangled basis state teleports the input up to a Weyl operator, so linearity gives the channel $\rho\mapsto\sum_kp_kW_k\rho W_k^\dagger$, with $W_0=I$, $p_0=f$ and $p_k=(1-f)/(K^2-1)$ for $k\ne0$. Every pure state $\Psi$ of $\widetilde A\otimes A_0$ is therefore reproduced with fidelity
\[
 \sum_kp_k\bigl|\langle\Psi|(I\otimes W_k)|\Psi\rangle\bigr|^2\geq f,
\]
because the identity term alone contributes $f$.

Neither conversion adds a use of $\mathcal N$ or any transmitted photons. Hence
\begin{equation}\label{eq:A16}
 D_{\mathrm{PPT}}(\mathcal N,N)=Q_{\mathrm{PPT}}(\mathcal N,N),
\end{equation}
as well as Eq.~\eqref{eq:distribution} without a constraint.

\subsection{Secret key and the scope of PPT assistance}
Let $K_D(\rho)$ be the ordinary distillable secret key of a state under LOCC and public communication. Security requires the final key to be uniform, identical for Alice and Bob, and decoupled from Eve, including the public transcript. For key alphabet size $d$, the ideal classical--quantum state has the form
\begin{equation}\label{eq:A17}
 \gamma_{ABE}^{(d)}=\frac1d\sum_{j=1}^d|j\rangle\langle j|_A\otimes|j\rangle\langle j|_B\otimes\omega_E.
\end{equation}
The protocol output must approach such a state in trace norm; Eve may initially hold a purification of the shared resource. The capacity $K(\mathcal N)$ uses the same security criterion per channel use with ordinary LOCC assistance. Measuring distilled ebits yields secure bits, so $Q_2(\mathcal N)\leq K(\mathcal N)$.

Let $B_{\mathrm{PLOB}}(\mathcal N)$ denote the corresponding piecewise expression in Theorems~1--3. For these three channel families without an energy constraint, PLOB's key converse and our results give~\cite{r11}
\begin{equation}\label{eq:A18}
 Q_2(\mathcal N)\leq K(\mathcal N)\leq B_{\mathrm{PLOB}}(\mathcal N)=Q_{\mathrm{PPT}}(\mathcal N).
\end{equation}
An LOCC construction attaining the same rates, under the same input-energy budget, would establish $Q_2=Q_{\mathrm{PPT}}$. In the unconstrained case, such a construction would also establish $K=Q_2=Q_{\mathrm{PPT}}$; unrestricted PPT-assisted key generation itself is trivial.

PPT states with $K_D>0$ exist~\cite{r38}. This has a direct consequence for the formal quantity $K_{\mathrm{PPT}}$: here it means the same key task with the unrestricted completely PPT-preserving assistance used for $Q_{\mathrm{PPT}}$, including arbitrary local ancillary dimensions.

\begin{appprop}[Free key under unrestricted PPT assistance]\label{prop:A1}
For every channel and every transmitted-energy budget $N\geq0$, this definition gives
\begin{equation}\label{eq:A19}
 K_{\mathrm{PPT}}(\mathcal N)=K_{\mathrm{PPT}}(\mathcal N,N)=+\infty.
\end{equation}
\end{appprop}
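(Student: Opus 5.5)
The plan is to show that, for any target key length, a single free operation produces a key state without using the channel. Fix a PPT state $\rho_{AB}$ on finite-dimensional systems with $K_D(\rho_{AB})>0$, which exists by Ref.~\cite{r38}. Consider the replacement map $\Lambda_\ell(X)=\operatorname{Tr}(X)\,\rho_{AB}^{\otimes\ell}$, which discards all inputs (including any transmitted mode and ancillas) and prepares $\ell$ copies of $\rho$. I will check that $\Lambda_\ell$ is completely PPT preserving. It is completely positive and trace preserving. Its conjugate is $\Lambda_\ell^\Gamma(X)=\operatorname{Tr}(X^\Gamma)\,(\rho^{\otimes\ell})^\Gamma=\operatorname{Tr}(X)\,(\rho^\Gamma)^{\otimes\ell}$, using $\operatorname{Tr}X^\Gamma=\operatorname{Tr}X$ on trace-class inputs. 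This is again a replacement channel with a positive, unit-trace output, so it is completely positive and trace preserving. Normality is immediate, since the dual map $Y\mapsto\operatorname{Tr}(\rho^{\otimes\ell}Y)\,I$ preserves increasing bounded limits. Tensoring with identities on untouched ancillas keeps both properties.

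Next, I will build the key protocol. For block length $m$, the parties send the vacuum through the channel $m$ times, or simply do nothing with the outputs. This respects every budget $N\geq0$, since the transmitted means are zero. Between uses they apply $\Lambda_{\ell}$ once, with $\ell=\ell(m)$ arbitrary. They then run the ordinary LOCC-plus-public-communication key-distillation protocol for $\rho$ on these $\ell$ copies. LOCC lies in the PPT class, and composition preserves the class, so the whole protocol is admissible. Since $K_D(\rho)>0$, for large $\ell$ this yields about $\ell K_D(\rho)/2$ secret bits with error $\epsilon_\ell\to0$ in the sense of Eq.~\eqref{eq:A17}. Choosing $\ell(m)=m^2$, say, gives $m^{-1}\log_2 d_m\gtrsim mK_D(\rho)/2\to\infty$ with error tending to zero. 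Hence every finite rate is achievable, so $K_{\mathrm{PPT}}(\mathcal N)=K_{\mathrm{PPT}}(\mathcal N,N)=+\infty$.

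The main obstacle is security against Eve. In the definition of $K_D$, Eve holds a purification of the shared resource, and I must make sure the freshly prepared $\rho^{\otimes\ell}$ is secure in that sense. Here I would argue as follows. The free map is implemented by the honest parties, so Eve's most powerful position is to hold a purification $\ket{\psi_\rho}_{ABE}^{\otimes\ell}$ of exactly the prepared copies. That is precisely the setting in which $K_D(\rho)>0$ is defined in Ref.~\cite{r38}. Eve's prior correlations with the discarded channel systems are independent of these copies, since $\Lambda_\ell$ traces them out and outputs a product. Her global state therefore factorizes as (purification of $\rho^{\otimes\ell}$)$\,\otimes\,$(everything else), and the decoupling criterion is unaffected. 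The public transcript is included because the cited key rate already accounts for it. A secondary point is that the cited PPT key states are finite-dimensional. I will embed them in a finite-dimensional subspace of local ancillas, which is allowed because ancillary dimensions are unrestricted. This avoids any infinite-dimensional subtleties with $\Gamma$.
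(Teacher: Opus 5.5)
Your proposal is correct and follows the paper's proof. Both use the replacement map $X\mapsto\operatorname{Tr}(X)\rho^{\otimes\ell}$, whose conjugate $\operatorname{Tr}(X)(\rho^\Gamma)^{\otimes\ell}$ is again a channel, followed by LOCC key distillation on the freely prepared copies while the channel is ignored or fed vacuum; the only differences are that you take $\ell(m)=m^2$ copies instead of a large linear multiple of $m$, and you spell out normality and the factorization of Eve's purification, which the paper simply asserts.
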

\begin{proof}
Choose a finite-dimensional PPT state $\sigma$ with $K_D(\sigma)>0$. The replacement map $\Lambda_\sigma(X)=\operatorname{Tr}(X)\sigma$ is a channel, and its partial-transpose conjugate is
\begin{equation}\label{eq:A20}
 \Lambda_\sigma^\Gamma(X)=\operatorname{Tr}(X)\sigma^\Gamma.
\end{equation}
Because $\sigma^\Gamma$ is positive with unit trace, this is also a channel. The same argument applies to $\sigma^{\otimes L}$ for every finite $L$. Thus the free assistance can prepare arbitrarily many copies of $\sigma$, after which an LOCC key-distillation protocol produces a positive asymptotic number of secret bits per copy, even against a purifying Eve.

For any proposed finite rate per channel use, choose the number of free copies proportional to the number of uses with a sufficiently large proportionality factor. The resulting key rate exceeds the proposal while the security error vanishes. The channel can be ignored, or used on vacuum inputs, so this argument also obeys a zero transmitted-energy budget.
\end{proof}
Ordinary $K$ uses LOCC assistance. Since $R(\sigma)=0$ for every PPT state, including states with $K_D(\sigma)>0$, the key bound in Eq.~\eqref{eq:A18} uses PLOB's secret-key converse.

\subsection{Parameters and orders of limits}
The channel noise $n$ and input mean $N$ are physical parameters. The letters $H,M,u$ describe the squeezed-thermal normal form, with squeezing gain $H$. For thermal loss, its pair $(M,u)$ equals the pair $(u,w)$ in Eq.~\eqref{eq:loss_parameters}. The correlation parameter $t$ in Eq.~\eqref{eq:loss_parameters} is real; a photon-count threshold $t$ is an integer. In the sector arguments, $s$ is the number of modes, $m,k$ are Alice's and Bob's total photon counts, and $h$ labels a collective correlation space.

For unrestricted achievability, first choose finite $N$ large enough for a target rate and then take the coding block to infinity. In the unrestricted converse, first fix the entire adaptive protocol and let the simulation resource energy tend to infinity; only afterwards does the protocol length grow. For the energy-constrained results, $N$ remains fixed throughout.

\section{Thermal-loss resource states}\label{app:B}
The unrestricted construction requires an exact normal form for the physical state $\chi_N^{\mathcal L_{\eta,n}}$ at finite input energy. We express it as a two-mode squeeze of independent thermal states. One thermal occupation diverges with $N$, whereas the other remains finite and determines the error-threshold exponent. This separation is what makes the coding and converse bounds meet.

The normal-mode transformation is a change of coordinates. The protocol is defined directly by its PPT distillation map.

\subsection{Exact squeezed-thermal form}
Let $S_H=\exp[r(a^\dagger b^\dagger-ab)]$, where $r>0$ and $H=\cosh^2r>1$. Thus $S_H^\dagger aS_H=\sqrt H\,a+\sqrt{H-1}\,b^\dagger$, with the analogous relation for $b$. For thermal occupations $M,u\geq0$, set
\begin{equation}\label{eq:B1}
 \rho_{H,M,u}=S_H(\tau_M\otimes\tau_u)S_H^\dagger,
 ~~ \theta=\frac{H-1}{H}.
\end{equation}
The pre-squeezer modes are denoted $c,d$. The error mode is $d$; its mean occupation is $u$.

For thermal loss, abbreviate $\chi_N^{\mathcal L_{\eta,n}}$ to $\chi_N$ and set $G=(1-\eta)^{-1}$. Its covariance has blocks $a_NI_2,b_NI_2,c_NZ$, where $Z=\operatorname{diag}(1,-1)$ and
\begin{align}
 a_N&=2N+1,~~ b_N=2\eta N+2(1-\eta)n+1,\label{eq:B2}\\
 c_N&=2\sqrt{\eta N(N+1)},\label{eq:B3}\\
 d_N&=\sqrt{(a_N+b_N)^2-4c_N^2}.\label{eq:B4}
\end{align}
\begin{applemma}[Exact diagonalization]\label{lem:B1}
For $N>n$, define
\begin{align}
 M_N&=\frac{d_N+a_N-b_N-2}{4},\label{eq:B5}\\
 u_N&=\frac{d_N-a_N+b_N-2}{4},\label{eq:B6}\\
 H_N&=\frac{a_N+b_N+d_N}{2d_N}.\label{eq:B7}
\end{align}
Then $\chi_N=\rho_{H_N,M_N,u_N}$, and
\begin{align}
 H_N&\longrightarrow G,~~ u_N\longrightarrow n,\label{eq:B8}\\
 M_N&=(1-\eta)N+\eta n+O(N^{-1}).\label{eq:B9}
\end{align}
\end{applemma}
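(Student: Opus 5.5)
The plan is to compare covariance matrices, since both $\chi_N$ and $\rho_{H,M,u}$ are centered Gaussian states and a centered Gaussian state is determined by its covariance. First I compute the covariance of $\rho_{H,M,u}$. The thermal product $\tau_M\otimes\tau_u$ has covariance $\operatorname{diag}\bigl((2M+1)I_2,(2u+1)I_2\bigr)$. The squeezer $S_H$ acts by the symplectic matrix with diagonal blocks $\sqrt H\,I_2$ and off-diagonal blocks $\sqrt{H-1}\,Z$, which follows from $S_H^\dagger aS_H=\sqrt H a+\sqrt{H-1}b^\dagger$ in the convention $q=a+a^\dagger$, $p=(a-a^\dagger)/i$. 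The output covariance then has blocks $\alpha I_2,\beta I_2,\gamma Z$ with
\begin{equation*}
\alpha=H(2M+1)+(H-1)(2u+1),\quad \beta=(H-1)(2M+1)+H(2u+1),\quad \gamma=2\sqrt{H(H-1)}\,(M+u+1).
\end{equation*}
The sign of $\gamma$ should be checked against the sign convention of $\chi_N$, since the state is only determined once this sign agrees.

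Next I solve $\alpha=a_N$, $\beta=b_N$, $\gamma=c_N$. Two combinations decouple the system. The difference gives $\alpha-\beta=2(M-u)$, so $M-u=(a_N-b_N)/2$, which is consistent with the stated formulas. The symplectic invariant $(\alpha+\beta)^2-4\gamma^2$ is unchanged by the squeeze and equals $(2M+2u+2)^2$. Hence $2(M+u+1)=d_N$, which reproduces $M_N$ and $u_N$ after adding and subtracting. The sum $\alpha+\beta=(2H-1)(2M+2u+2)=(2H-1)d_N$ then gives $H_N=(a_N+b_N+d_N)/(2d_N)$.

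The remaining conditions are admissibility. The squeeze condition $H_N>1$ reduces to $a_N+b_N>d_N$, which holds when $c_N>0$, that is, for $N>0$. The $\gamma$ equation follows automatically, because $4H(H-1)d_N^2=(a_N+b_N)^2-d_N^2=4c_N^2$. I also need $u_N\ge0$ and $M_N\ge0$, the first amounting to $d_N\ge a_N-b_N+2$. I expect this inequality to be the main point where the hypothesis $N>n$ enters: when $N>n$, $a_N-b_N=2(1-\eta)(N-n)>0$. I would square both sides and reduce to a polynomial inequality in $N$, $\eta$ and $n$, verifying it directly. The lower bound $M_N\ge u_N\ge0$ is then immediate from $a_N>b_N$.

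For the limits, I expand $d_N$ at large $N$. We have $a_N+b_N=2(1+\eta)N+2(1-\eta)n+2$ and $4c_N^2=16\eta N(N+1)$. Hence $d_N^2=4(1-\eta)^2N^2+O(N)$, so $d_N=2(1-\eta)N+\delta+O(N^{-1})$, where the constant $\delta$ is read off from the linear term. Substituting into $u_N$ cancels the leading order, leaving a constant that I expect to equal $n$. Similarly $M_N=(1-\eta)N+\eta n+O(N^{-1})$, and $H_N\to(2+2)(1+\eta)\ldots$ simplifies to $(1+\eta+1-\eta)/(2(1-\eta))=1/(1-\eta)=G$. Getting the constant $\delta$ right is routine but error-prone, and I would double-check it against the $n=0$ case, where $u_N$ should tend to $0$.
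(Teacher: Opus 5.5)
Your proposal is correct and follows the paper's proof essentially step for step. You match the covariance blocks, fix the occupations from $a-b=2(M-u)$ and the invariant $(a+b)^2-4c^2=4(M+u+1)^2$, fix the gain from $a+b=2(2H-1)(M+u+1)$, and get admissibility from $c_N>0$ and $a_N>b_N$. The two computations you deferred close as expected: $d_N^2-(a_N-b_N+2)^2=16(1-\eta)n(N+1)\geq0$, and $\delta=2(1+\eta)n+2$, which gives $u_N\to n$, $M_N=(1-\eta)N+\eta n+O(N^{-1})$ and $H_N\to G$.
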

\begin{proof}
The covariance of Eq.~\eqref{eq:B1} has scalar blocks
\begin{align}
 a&=H(2M+1)+(H-1)(2u+1),\label{eq:B10}\\
 b&=(H-1)(2M+1)+H(2u+1),\label{eq:B11}\\
 c&=2\sqrt{H(H-1)}(M+u+1).\label{eq:B12}
\end{align}
Solving these equations gives Eq.~\eqref{eq:B7}. More explicitly, $a-b=2(M-u)$ and $\sqrt{(a+b)^2-4c^2}=2(M+u+1)$. These two invariants first determine the thermal occupations; $a+b=2(2H-1)(M+u+1)$ then determines the squeezing gain. The parameters are admissible because $a_N-b_N=2(1-\eta)(N-n)>0$ and
\[
 d_N^2-(a_N-b_N+2)^2=16(1-\eta)n(N+1)\geq0.
\]
Thus $u_N\geq0$ and $M_N>0$; moreover, $c_N>0$ gives $a_N+b_N>d_N$ and hence $H_N>1$. Both states are centred and Gaussian, so their covariances establish equality. Finally,
\begin{equation}\label{eq:B13}
 d_N=2(1-\eta)N+2(1+\eta)n+2+O(N^{-1}).
\end{equation}
Substitution gives Eq.~\eqref{eq:B9}.
\end{proof}

\section{Exact partial-transpose identity}\label{app:C}
This appendix proves the generating identity that controls both the threshold test and the Rains witness. Expanding $A_H(z)$ in powers of $z$ separates the projectors onto different error-photon counts. The identity below turns their partial transposes into functions of a single passive mode. Coefficient estimates will control the collective coding test, while positive real values of $z$ will provide converse witnesses.

For $|z|<1$ define the bounded operator
\begin{equation}\label{eq:C1}
 A_H(z)=S_H(I_c\otimes z^{d^\dagger d})S_H^\dagger,
\end{equation}
where $0^{d^\dagger d}=|0\rangle\langle0|$. Let $\|\cdot\|_\infty$ denote the operator norm.
\begin{applemma}[Generating identity]\label{lem:C1}
Define a passive mode $v$ and a scalar function $w$ by
\begin{equation}\label{eq:C2}
 v=\frac{\sqrt{H-1}\,a-\sqrt H\,b}{\sqrt{2H-1}},
 ~~ w(z)=\frac{z-\theta}{1-\theta z}.
\end{equation}
Then, as bounded operators,
\begin{equation}\label{eq:C3}
 A_H(z)^\Gamma=\frac{w(z)^{v^\dagger v}}{H(1-\theta z)}.
\end{equation}
The right-hand side is the identity on the orthogonal mode $(\sqrt H\,a+\sqrt{H-1}\,b)/\sqrt{2H-1}$.
\end{applemma}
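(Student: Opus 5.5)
The identity is an equality between two Gaussian-type operators, so I would prove it by a physical-parameter computation followed by analytic continuation in $z$.

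\textbf{Step 1: relate $A_H(z)$ to the states $\rho_{H,M,u}$.} For real $z\in(\theta,1)$, write $z=u/(1+u)$ with $u>0$. Then $z^{d^\dagger d}=(1+u)\tau_u$. Similarly, $I_c$ is the weak (number-basis matrix-element) limit of $(1+M)\tau_M$ as $M\to\infty$. Hence every finite number-state matrix element satisfies
\[
 \langle i,j|A_H(z)|k,l\rangle=\lim_{M\to\infty}(1+M)(1+u)\langle i,j|\rho_{H,M,u}|k,l\rangle .
\]
By Eq.~\eqref{eq:A1}, partial transposition only permutes these matrix elements. It therefore commutes with the limit, and it suffices to compute $\rho_{H,M,u}^\Gamma$.

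\textbf{Step 2: partial transpose of $\rho_{H,M,u}$ and the large-$M$ limit.} The state $\rho_{H,M,u}$ is centred Gaussian with the scalar blocks $a,b,c$ of Eqs.~\eqref{eq:B10}--\eqref{eq:B12}. Its partial transpose is the Gaussian operator whose covariance has $p_B$ reversed, i.e.\ correlation block $cI_2$ instead of $cZ$.

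A covariance with blocks $aI_2,bI_2,cI_2$ is diagonalized by a passive beam splitter. I would choose the splitter producing the modes $v$ and $(\sqrt H\,a+\sqrt{H-1}\,b)/\sqrt{2H-1}$, and read off the two normal-mode occupations as explicit rational functions of $H,M,u$. Possibly one occupation is negative, in which case the operator is a formal Gaussian $\propto\lambda^{v^\dagger v}$ with $\lambda<0$.

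Multiplying by $(1+M)(1+u)$ and letting $M\to\infty$, I expect the following. The occupation of the orthogonal mode diverges, giving the identity on it. The $v$-mode ratio $\lambda$ tends to $(z-\theta)/(1-\theta z)=w(z)$. The scalar prefactor tends to $1/(H(1-\theta z))$.

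To avoid Gaussian bookkeeping with nonphysical covariances, I would check the $v$-mode eigenvalue ratio and the constant by comparing coherent-state (or vacuum and one-photon) matrix elements of both sides. A number-basis transpose maps coherent amplitudes $\beta\mapsto\bar\beta$ on Bob's mode. The constant is fixed by the vacuum element $\langle0,0|\cdot|0,0\rangle$, which is untouched by $\Gamma$.

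\textbf{Step 3: extension to all $|z|<1$.} Both sides are bounded on the open disk. For the left-hand side, $\|A_H(z)\|_\infty\leq1$. For the right-hand side, $w$ is a disk automorphism since $0<\theta<1$, so $\|w^{v^\dagger v}\|_\infty\leq1$ and the norm is at most $1/(H|1-\theta z|)$.

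Every finite number-state matrix element of either side is holomorphic in $z$:
\begin{itemize}
\item for $A_H(z)$, it is a power series whose coefficients are matrix elements of $S_H(I\otimes|k\rangle\langle k|)S_H^\dagger$;
\item for the right-hand side, it is a rational function of $z$.
\end{itemize}
Equality on $(\theta,1)$ therefore propagates to the whole disk by the identity theorem. Equality of all finite matrix elements of bounded operators gives equality of the bounded operators, with the transpose of $A_H(z)$ defined through Eq.~\eqref{eq:A1}. This also covers $z=0$, where $0^{d^\dagger d}=|0\rangle\langle0|$, and the region $z<\theta$, where $w<0$.

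\textbf{Main obstacle.} The hardest step is Step 2: performing the $M\to\infty$ limit of the partially transposed Gaussian cleanly and pinning down the exact prefactor $1/(H(1-\theta z))$. I would first attempt it directly with coherent-state kernels of $S_H$, since the Bogoliubov relation $S_H^\dagger aS_H=\sqrt H\,a+\sqrt{H-1}\,b^\dagger$ becomes passive once $b^\dagger$ is transposed to a $b$-type operator. I would use the covariance route only to confirm the normal modes.
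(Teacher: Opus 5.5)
Your route differs from the paper's and can be completed, but Step~2 contains a false claim. At finite $M$, the covariance of $\rho_{H,M,u}^\Gamma$ (blocks $aI_2,bI_2,cI_2$ from Eqs.~\eqref{eq:B10}--\eqref{eq:B12}) is \emph{not} diagonalized by the splitter producing $v$ and $(\sqrt H\,a+\sqrt{H-1}\,b)/\sqrt{2H-1}$.

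\begin{itemize}
\item With $\alpha=2M+1$ and $\beta=2u+1$, the matrix $\bigl(\begin{smallmatrix}a&c\\c&b\end{smallmatrix}\bigr)$ sends $(\sqrt H,\sqrt{H-1})$ to $(\sqrt H[(2H-1)\alpha+2(H-1)\beta],\ \sqrt{H-1}[(2H-1)\alpha+2H\beta])$. This is not parallel to the input, since $\beta\geq1$.
\item Since $a+b=(2H-1)(\alpha+\beta)$ and $ab-c^2=\alpha\beta$, the normal-mode variances solve $\nu^2-(2H-1)(\alpha+\beta)\nu+\alpha\beta=0$. They are not rational in $H,M,u$, and the diagonalizing splitter $U_M$ moves with $M$ and $u$.
\item So there is nothing to ``read off'' in the fixed $v$ basis. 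Checking vacuum or one-photon matrix elements of both sides fixes constants only once the form $c\,\lambda^{v^\dagger v}$ is already known.
\end{itemize}

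The repair is routine: keep $U_M$. The $\alpha$-coefficient of the covariance is the rank-one matrix $\bigl(\begin{smallmatrix}H&\sqrt{H(H-1)}\\\sqrt{H(H-1)}&H-1\end{smallmatrix}\bigr)$, with range spanned by $(\sqrt H,\sqrt{H-1})$. Hence $U_M\to U$ strongly, where $U$ produces $v$ and its complement. Write each normal-mode factor as $\frac{2}{\nu+1}\bigl(\frac{\nu-1}{\nu+1}\bigr)^{\hat n}$, which is valid for all $\nu>0$. Then $\nu_+=(2H-1)\alpha+O(1)$ and $\nu_-=\alpha\beta/\nu_+\to\beta/(2H-1)$. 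The two ratios therefore tend to $1$ and $(u+1-H)/(u+H)=w(z)$. The prefactor $4(1+M)(1+u)/[(\nu_++1)(\nu_-+1)]$ tends to $(1+u)/(u+H)=1/[H(1-\theta z)]$. Products of uniformly bounded, strongly convergent operators converge strongly, so your expected limit is correct.

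Also, $\|A_H(z)\|_\infty\leq1$ gives no bound on $A_H(z)^\Gamma$. Boundedness of the partial transpose is an \emph{output} of the identity, since its matrix elements coincide with those of a bounded operator. The paper's final paragraph uses exactly this.

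The paper avoids both the limit and the continuation. It computes the Bargmann kernel of $A_H(z)$ directly for complex $|z|<1$ from the disentangled squeezer, and implements $\Gamma$ as the swap of Bob's output and input variables. It then diagonalizes the coefficient matrix $\bigl(\begin{smallmatrix}C&B\\B&zC\end{smallmatrix}\bigr)$, whose eigenvectors are independent of $z$ and whose eigenvalues are $1$ and $w(z)$. Because $I_c$ enters exactly rather than through $(1+M)\tau_M$, the mode $v$ appears with no rotation.

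Your approach trades that kernel calculus for standard covariance facts about physical Gaussian states: the $p_B\to-p_B$ rule and normal-mode decomposition. These are arguably easier to verify, but cost you a degenerate $M\to\infty$ limit and an identity-theorem step. Your Step~3 is sound. The left-hand matrix elements are power series with coefficients $\langle\phi|S_H(I\otimes|n\rangle\langle n|)S_H^\dagger|\psi\rangle$ bounded in modulus. The right-hand ones are, on each finite total-photon sector, polynomials in $w(z)$ divided by $H(1-\theta z)$, so they are holomorphic on the disk.
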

\begin{proof}
The Bargmann representation sends $|j\rangle$ to $x^j/\sqrt{j!}$ and uses measure $e^{-|x|^2}d^2x/\pi$~\cite{r39}. Write $x,y$ for output variables and $p,q$ for conjugate input variables in a kernel. With $\lambda=\sqrt\theta$, the squeezer kernel is
\begin{equation}\label{eq:C4}
 K_{S_H}(x,y;p,q)=H^{-1/2}\exp\!\left[\lambda xy-\lambda pq+\frac{xp+yq}{\sqrt H}\right].
\end{equation}
It follows, for example, from the disentangling identity
\begin{equation}\label{eq:C5}
 S_H=e^{\lambda a^\dagger b^\dagger}H^{-(a^\dagger a+b^\dagger b+1)/2}e^{-\lambda ab}.
\end{equation}
Composing this kernel, the kernel of $I\otimes z^{d^\dagger d}$ and the adjoint kernel gives
\begin{equation}\label{eq:C6}
 K_{A_H(z)}=\frac{\exp[Cxp+zCyq+B(xy+pq)]}{H(1-\theta z)},
\end{equation}
where $C=(1-\theta)/(1-\theta z)$ and $B=\lambda(1-z)/(1-\theta z)$. The Gaussian integrations give the exponent
\begin{equation}\label{eq:C7}
 \lambda(xy+pq)+\frac{xp}{H}+\frac{z(y-\lambda p)(q-\lambda x)}{H(1-\theta z)}.
\end{equation}
Expanding it yields Eq.~\eqref{eq:C6}; the Gaussian determinant gives $(1-\theta z)^{-1}$.

Partial transposition exchanges $y$ and $q$, which carry Bob's output and input indices. The exponent then has coefficient matrix
\begin{equation}\label{eq:C8}
 \begin{pmatrix}C&B\\B&zC\end{pmatrix}.
\end{equation}
Its orthonormal eigenvectors are $(1,\sqrt\theta)/\sqrt{1+\theta}$ and $(\sqrt\theta,-1)/\sqrt{1+\theta}$, with eigenvalues $1,w(z)$. The second vector gives $v$. The Bargmann kernel of $s^{v^\dagger v}$ is the exponential of $s$ times the corresponding output and conjugate input coordinates. This proves Eq.~\eqref{eq:C3} on finite number-state matrix elements.

Finally,
\begin{equation}\label{eq:C9}
 1-|w(z)|^2=\frac{(1-\theta^2)(1-|z|^2)}{|1-\theta z|^2}\geq0.
\end{equation}
Thus the right-hand side is bounded and supplies the claimed extension. On compact subdisks, $|w(z)|$ is uniformly smaller than one, so it also defines an operator-norm analytic function there. Indeed, the spectral terms of $w(z)^{v^\dagger v}$ above occupation $J$ have total norm at most $\sup_{j>J}|w(z)|^j$, which tends uniformly to zero on each compact subdisk.
\end{proof}

\section{Threshold distillation}\label{app:D}
This appendix constructs the PPT protocol and proves its fidelity and rate, including a finite-block example. The two requirements are separate: the test must accept the resource with probability tending to one, and its partial transpose must have small operator norm. The first is a typicality statement for thermal photons. The second fixes the number of ebits that a completely PPT-preserving map may output.

For $L$ copies let $N_d=\sum_{i=1}^Ld_i^\dagger d_i$ be the total error-photon number. For an integer $t\geq0$, define the projection
\begin{equation}\label{eq:D1}
 Q_{L,t}=S_H^{\otimes L}(I_{c^L}\otimes\mathbf1_{\{N_d\leq t\}})(S_H^\dagger)^{\otimes L}.
\end{equation}
This is a mathematical test, not an assumed local measurement of $N_d$.

\subsection{Standard-form distillation}
\begin{applemma}[Rains's standard form~\cite{r20}]\label{lem:D1}
For $0\leq Q\leq I$ and an integer $K\geq2$, define
\begin{equation}\label{eq:D2}
 \mathcal R_{Q,K}(X)=\operatorname{Tr}(QX)\Phi_K+\operatorname{Tr}[(I-Q)X]\frac{I-\Phi_K}{K^2-1}.
\end{equation}
It is a channel, with fidelity $\operatorname{Tr}(Q\omega)$ on $\omega$. It is completely PPT preserving if and only if
\begin{equation}\label{eq:D3}
 -I/K\leq Q^\Gamma\leq I/K.
\end{equation}
\end{applemma}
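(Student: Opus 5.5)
The proof checks three things directly from Eq.~\eqref{eq:D2}: that $\mathcal R_{Q,K}$ is a channel, that its output fidelity is $\operatorname{Tr}(Q\omega)$, and that the conjugate map of Eq.~\eqref{eq:A2} is completely positive exactly when Eq.~\eqref{eq:D3} holds.

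\emph{Channel and fidelity.} The map is a measure-and-prepare channel built from the two-outcome measurement $\{Q,I-Q\}$, since $0\le Q\le I$. It is therefore completely positive. Trace preservation follows from $\operatorname{Tr}\Phi_K=1$ and $\operatorname{Tr}[(I-\Phi_K)/(K^2-1)]=1$. The fidelity follows from $\operatorname{Tr}\Phi_K(I-\Phi_K)=0$, which gives $\operatorname{Tr}[\Phi_K\mathcal R_{Q,K}(\omega)]=\operatorname{Tr}(Q\omega)$. In infinite dimensions, normality holds because $X\mapsto\operatorname{Tr}(QX)$ is normal for bounded $Q$.

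\emph{Computing the conjugate map.} Write $\mathcal R_{Q,K}^\Gamma=\Gamma_{B'}\circ\mathcal R_{Q,K}\circ\Gamma_B$. Transposition is trace-dual to itself, so $\operatorname{Tr}(QX^\Gamma)=\operatorname{Tr}(Q^\Gamma X)$. In infinite dimensions this is checked on finite number-state matrix elements and then extended, as described in Appendix~\ref{app:A}. We also use $\Phi_K^\Gamma=F/K$, where $F$ is the swap, and $I^\Gamma=I$. Setting $P_\pm=(I\pm F)/2$, this gives
\[
\mathcal R^\Gamma(X)=\operatorname{Tr}(Q^\Gamma X)\frac{F}{K}+\operatorname{Tr}[(I-Q)^\Gamma X]\frac{I-F/K}{K^2-1}
=\operatorname{Tr}(E_+X)\frac{P_+}{d_+}+\operatorname{Tr}(E_-X)\frac{P_-}{d_-},
\]
with $d_\pm=K(K\pm1)/2$. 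The operators $E_\pm$ are determined by matching coefficients of $I$ and $F$, and come out as $E_\pm=\tfrac12\bigl[(1\pm1/K)I\pm(1\mp1/K)\cdots\bigr]$. Explicitly, using $P_\pm/d_\pm=(I\pm F)/(K(K\pm1))$, one finds
\[
E_+=\frac{K+1}{2}\Bigl(Q^\Gamma+\frac{I-Q^\Gamma}{K+1}\Bigr)\cdot\frac{2}{K}\cdot\frac{K}{2}
\]
which simplifies to $E_\pm=\tfrac12\bigl(I\pm KQ^\Gamma\bigr)$ up to this bookkeeping.

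\emph{Conclusion.} Trace preservation of $\mathcal R^\Gamma$ is automatic, because $E_++E_-=I$. The map is now a measure-and-prepare form with orthogonal projector outputs $P_\pm/d_\pm$. Such a map is completely positive if and only if $E_\pm\ge0$. For sufficiency, $E_\pm\ge0$ makes it a measurement followed by state preparation. For necessity, we compose with the projections onto the two symmetric subspaces and test on rank-one inputs. This shows that each functional $X\mapsto\operatorname{Tr}(E_\pm X)$ must be positive on positive $X$, even before ancillas are added. The condition $E_\pm\ge0$ is equivalent to $-I/K\le Q^\Gamma\le I/K$, which is Eq.~\eqref{eq:D3}. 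For $Q^\Gamma$ to make sense as a bounded operator in the bosonic setting, we assume it is defined by the bounded extension of Appendix~\ref{app:A}; the condition Eq.~\eqref{eq:D3} itself implies such boundedness.

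\emph{Main obstacle.} The only delicate step is the coefficient bookkeeping that produces $E_\pm$ exactly. It is routine: compare the $I$ and $F$ parts on the two sides of the displayed identity.
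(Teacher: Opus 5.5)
Your proof is correct and is essentially the paper's own. Your effects $E_\pm=\tfrac12(I\pm KQ^\Gamma)$ with normalized outputs $P_\pm/d_\pm$ are exactly the paper's coefficient operators $A_\pm=(I/K\pm Q^\Gamma)/(K\pm1)$ multiplying $\Pi_\pm$, and both arguments read off complete positivity from positivity of these two operators; the paper phrases the finite-dimensional step through the Choi operator, while you use the equivalent measure-and-prepare form. Your intermediate bookkeeping lines are garbled, but the final $E_\pm$ is right, as comparing the $P_+$ and $P_-$ coefficients of $\operatorname{Tr}(Q^\Gamma X)F/K+\operatorname{Tr}[(I-Q^\Gamma)X](I-F/K)/(K^2-1)$ confirms.
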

\begin{proof}
The two effects are positive and sum to $I$, and the prepared states have unit trace. Hence the map is completely positive and trace preserving. For finite input dimension the Choi criterion characterizes complete positivity~\cite{r40}. Let $V_K$ swap the two output systems and let $\Pi_\pm=(I\pm V_K)/2$. Since $\Phi_K^\Gamma=V_K/K$, the Choi operator of the conjugated map has the two blocks
\begin{equation}\label{eq:D4}
 \frac{I/K+(Q^\Gamma)^T}{K+1}\otimes\Pi_+
 +\frac{I/K-(Q^\Gamma)^T}{K-1}\otimes\Pi_-.
\end{equation}
Here $T$ is a full input transpose. Positivity is equivalent to Eq.~\eqref{eq:D3}. For bosonic inputs, the extension is explicit. Put
\[
 A_\pm=\frac{I/K\pm Q^\Gamma}{K\pm1}.
\]
Then the conjugated map is
\[
 \mathcal R_{Q,K}^\Gamma(X)
 =\operatorname{Tr}(A_+X)\Pi_+
  +\operatorname{Tr}(A_-X)\Pi_-.
\]
Under Eq.~\eqref{eq:D3}, both coefficient operators are positive and bounded, so this map is normal and completely positive. Since $\operatorname{Tr}\Pi_\pm=K(K\pm1)/2$, their weighted sum is $I$, proving trace preservation. Conversely, complete positivity on every finite local input restriction gives Eq.~\eqref{eq:D3} on the dense finite number-state span, and hence the required bounded extension. The two prepared states in Eq.~\eqref{eq:D2} have orthogonal supports, proving the fidelity statement.
\end{proof}
The channel in Eq.~\eqref{eq:D2} has no test-outcome flag. Adding such a flag changes the map and need not preserve its PPT property.

\subsection{Exact norm, fidelity and asymptotic rate}
Put $a=H-1$ and $\theta=a/H$. Let $Z_{L,x}$ be the sum of $L$ independent geometric random variables of mean $x\geq0$, with
\begin{equation}\label{eq:D5}
 \Pr(Z_{L,x}=k)=\binom{L+k-1}{k}\frac{x^k}{(1+x)^{L+k}},~~ k\geq0.
\end{equation}
At $x=0$ this distribution is concentrated at zero. Write $p_{L,t}(x)=\Pr(Z_{L,x}\leq t)$ and set $p_{L,-1}(x)=0$. For $0\leq x<a$, define
\begin{equation}\label{eq:D6}
 F_H(x)=\log_2H-g(x)+x\log_2(1/\theta).
\end{equation}

\begin{appprop}[Exact threshold norm]\label{prop:D2}
For every $L\geq1$ and integer $0\leq t\leq a(L-1)$,
\begin{equation}\label{eq:D7}
 \|Q_{L,t}^\Gamma\|_\infty=p_{L,t}(a).
\end{equation}
In addition, for every $H>1$ and $L\geq1$,
\begin{equation}\label{eq:D8}
 \|Q_{L,1}^\Gamma\|_\infty=H^{-L}(1+L\theta).
\end{equation}
In particular, the norm at $t=0$ is $H^{-L}$.
\end{appprop}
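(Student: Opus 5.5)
The plan is to diagonalize $Q_{L,t}^\Gamma$ with the generating identity of Lemma~\ref{lem:C1} and then compare its eigenvalues. Write $P_k$ for the projector onto $N_d=k$. Then
\[
\sum_{k\geq0}z^k\,S_H^{\otimes L}(I\otimes P_k)(S_H^\dagger)^{\otimes L}=A_H(z)^{\otimes L}.
\]
Partial transposition factorizes over copies, so Lemma~\ref{lem:C1} gives
\[
\sum_k z^k\bigl(S_H^{\otimes L}P_kS_H^{\dagger\otimes L}\bigr)^\Gamma=\frac{w(z)^{N_v}}{H^L(1-\theta z)^L},\qquad N_v=\sum_{i=1}^L v_i^\dagger v_i .
\]
Operator-norm analyticity on compact subdisks lets us extract Taylor coefficients termwise. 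Summing the coefficients up to order $t$ then shows that $Q_{L,t}^\Gamma$ is a function of the total passive count $N_v$ alone:
\[
Q_{L,t}^\Gamma=\sum_{j\geq0}c_t(j)\,\Pi_{N_v=j},\qquad c_t(j)=H^{-L}\sum_{k\leq t}[z^k]\,\frac{(z-\theta)^j}{(1-\theta z)^{j+L}} .
\]
Hence $\|Q_{L,t}^\Gamma\|_\infty=\sup_j|c_t(j)|$.

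The $j=0$ eigenvalue is immediate. We have $c_t(0)=H^{-L}\sum_{k\leq t}\binom{L+k-1}{k}\theta^k$. Since $\theta=a/(1+a)$ and $H^{-L}=(1+a)^{-L}$, this equals $p_{L,t}(a)$ by Eq.~\eqref{eq:D5}. It therefore remains to show $|c_t(j)|\leq c_t(0)$ for all $j\geq1$.

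For $t=1$, direct expansion gives
\[
c_1(j)=H^{-L}(-\theta)^{j-1}\bigl[j-\theta-\theta^2(j+L)\bigr],
\]
with $c_1(0)=H^{-L}(1+L\theta)$. An elementary estimate in $j$ should then give Eq.~\eqref{eq:D8} for every $H>1$, without the constraint $t\leq a(L-1)$. The key point is that $|\theta^{j-1}(j(1-\theta^2)-\theta-\theta^2L)|\leq 1+L\theta$, because $j\theta^{j-1}(1-\theta^2)\leq 1+\theta$. The case $t=0$ is the same argument with only the constant coefficient retained.

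For general $t$, I would rewrite $w(z)=-\theta+(1-\theta^2)z/(1-\theta z)$ and expand $w(z)^j$ binomially. This expresses $c_t(j)$ as a signed combination of partial sums of negative-binomial probabilities $\Pr(Z_{L+i,a}\leq t-i)$. The aim is then to reorganize this combination as an iterated finite difference of the negative-binomial pmf $\pi(k)=H^{-L}\binom{L+k-1}{k}\theta^k$, evaluated at indices not exceeding $t$. The hypothesis $t\leq a(L-1)$ is exactly the statement that $t$ does not exceed the mode $\lfloor\theta(L-1)/(1-\theta)\rfloor$ of this pmf. So $\pi$ is nondecreasing on $\{0,\dots,t\}$, and alternating difference sums over this range should be dominated in absolute value by the total mass $\sum_{k\leq t}\pi(k)=c_t(0)$.

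An alternative route, if this reorganization proves awkward, is to prove $|c_t(j+1)|\leq|c_t(j)|$ directly. This would use the recursion obtained by multiplying by $w(z)$ and truncating at order $t$, with monotonicity of $\pi$ below the mode controlling the boundary term.

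I expect the main obstacle to be this uniform comparison $|c_t(j)|\leq c_t(0)$ for all $j$. Both the signs and the growth in $j$ must be controlled, and the threshold condition has to enter precisely through unimodality of the negative-binomial pmf.
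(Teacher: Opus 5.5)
Your setup matches the paper. Lemma~\ref{lem:C1} with termwise coefficient extraction makes $Q_{L,t}^\Gamma$ a function of $N_v$ with eigenvalues $c_t(j)$, the vacuum value is $p_{L,t}(a)$, and your $t=0,1$ computations are right. For $t=1$, your estimate closes because $H^L|c_1(j)|=|X-Y|$, where $X=j\theta^{j-1}(1-\theta^2)\le1+\theta\le A$ and $Y=\theta^jA\le A$ are both nonnegative, so $|X-Y|\le\max\{X,Y\}$. A plain triangle inequality with your bound would not close, which is why the paper uses $j(1-\theta)\theta^{j-1}\le1-\theta^j$.

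The genuine gap is the general-$t$ inequality $|c_t(j)|\le c_t(0)$. This is the whole content of Eq.~\eqref{eq:D7}, and you leave it as ``should be dominated''.
\begin{itemize}
\item Your fallback is false: $|c_t(j)|$ need not decrease in $j$ under the hypothesis. Take $L=2$, $H=2$ (so $\theta=\tfrac12$, $a=1$) and $t=1\le a(L-1)$. Your own formula gives $H^Lc_1(j)=(-\tfrac12)^{j-1}(\tfrac34j-1)$, so $|c_1(2)|=\tfrac1{16}<|c_1(3)|=\tfrac5{64}$.
\item Your main route lacks its engine. The binomial expansion gives $c_t(j)=\sum_i\binom ji(-\theta)^{j-i}(1+\theta)^i\Pr(Z_{L+i,a}\le t-i)$. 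Unimodality of the pmf alone does not control the cancellations in this signed sum.
\end{itemize}

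The missing idea is that $w$ is a Blaschke factor, with $|w(e^{i\varphi})|=1$ on the unit circle.
\begin{itemize}
\item Write $H^Lc_t(j)=\sum_{m=0}^tc_m[z^m]w(z)^j$, where $c_m=h_{t-m}$ are the partial sums of $T_k=\binom{L+k-1}{k}\theta^k$.
\item Your unimodality observation ($T_k$ nondecreasing for $k\le t$) is exactly convexity of $(c_m)$, including at the end of the sequence. Convexity makes $p(\varphi)=c_0+2\sum_{m\ge1}c_m\cos m\varphi$ a nonnegative combination of Fej\'er kernels.
\item Since $w^j$ has no negative Fourier modes, $H^Lc_t(j)=\int_0^{2\pi}p(\varphi)\,w(e^{i\varphi})^j\,d\varphi/2\pi$.
\item Then $|w|=1$ gives $|c_t(j)|\le H^{-L}c_0=c_t(0)$.
\end{itemize}

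In your own terms, two summations by parts give
\[
H^Lc_t(j)=\sum_{n=0}^t(T_{t-n}-T_{t-n-1})\sum_{m=0}^n(n+1-m)[z^m]w(z)^j,\qquad T_{-1}=0,
\]
with nonnegative weights supplied by unimodality. The inner Fej\'er sums are bounded by $n+1$ only because the Fej\'er kernel is positive and $|w|\le1$ on the circle. Summing then gives $\sum_n(n+1)(T_{t-n}-T_{t-n-1})=c_0$. That bound on the Fej\'er sums is the ingredient your plan lacks.
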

\begin{proof}
Lemma~\ref{lem:C1} diagonalizes the partial transpose in the total occupation $j$ of the passive $v$ modes. Its eigenvalue is
\begin{equation}\label{eq:D9}
 \lambda_j=H^{-L}[z^t]\frac{w(z)^j}{(1-z)(1-\theta z)^L},~~ j\geq0.
\end{equation}
The coefficient identity first holds on finite number-state matrix elements. The bounded analytic identity of Lemma~\ref{lem:C1} then supplies its operator extension.

Set $T_k=\binom{L+k-1}{k}\theta^k$ for $k\geq0$, put $T_{-1}=0$, and let $h_k=\sum_{i=0}^kT_i$. Define $c_m=h_{t-m}$ for $0\leq m\leq t$, and $c_m=0$ for $m>t$. Since
\begin{equation}\label{eq:D10}
 \frac{T_k}{T_{k-1}}=\frac{\theta(L+k-1)}{k},
\end{equation}
our hypothesis implies $T_k\geq T_{k-1}$ for $1\leq k\leq t$. Hence $\Delta^2c_m=c_m-2c_{m+1}+c_{m+2}\geq0$ for every $m\geq0$, including the end of the sequence.

The trigonometric polynomial
\begin{equation}\label{eq:D11}
 p(\varphi)=c_0+2\sum_{m=1}^tc_m\cos(m\varphi)
\end{equation}
is nonnegative. Indeed, if $K_m(\varphi)=(m+1)^{-1}|\sum_{k=0}^me^{ik\varphi}|^2$ is the $m$th Fej\'er kernel, telescoping its Fourier coefficients gives
\begin{equation}\label{eq:D12}
 p(\varphi)=\sum_{m=0}^t(m+1)\Delta^2c_mK_m(\varphi)\geq0.
\end{equation}
The function $w$ is analytic on a neighborhood of the closed unit disk, and $|w(e^{i\varphi})|=1$. Coefficient extraction in Eq.~\eqref{eq:D9} therefore gives
\begin{equation}\label{eq:D13}
 \lambda_j=H^{-L}\int_0^{2\pi}p(\varphi)w(e^{i\varphi})^j\frac{d\varphi}{2\pi}.
\end{equation}
Consequently $|\lambda_j|\leq H^{-L}c_0=\lambda_0$. The vacuum sector attains this value, and $H^{-L}c_0=p_{L,t}(a)$, proving Eq.~\eqref{eq:D7}.

For $t=1$, put $A=1+L\theta$. Direct extraction gives $H^L\lambda_0=A$ and, for $j\geq1$,
\begin{equation}\label{eq:D14}
 H^L\lambda_j=(-\theta)^{j-1}[j(1-\theta^2)-A\theta].
\end{equation}
The finite geometric sum implies $j(1-\theta)\theta^{j-1}\leq1-\theta^j$. Since $A\geq1+\theta$, the triangle inequality yields
\begin{align}
 H^L|\lambda_j|&\leq(1+\theta)(1-\theta^j)+A\theta^j\label{eq:D15}\\
 &\leq A.\label{eq:D16}
\end{align}
The vacuum again attains the bound. This proves the unconditional formula at $t=1$.
\end{proof}

For every integer $t\geq0$ and $0<r<1$, Cauchy's estimate gives
\begin{equation}\label{eq:D17}
 \|Q_{L,t}^\Gamma\|_\infty\leq\frac{r^{-t}}{(1-r)[H(1-\theta r)]^L}.
\end{equation}
Indeed, Eq.~\eqref{eq:C9} bounds the generating operator on $|z|=r$, and Cauchy's coefficient formula applied after multiplication by $(1-z)^{-1}$ gives this inequality. If $0<\alpha=t/L<a$, its exponential factor is minimized at $r=\alpha/[\theta(1+\alpha)]$, giving
\begin{equation}\label{eq:D18}
 \|Q_{L,t}^\Gamma\|_\infty\leq\frac{2^{-LF_H(\alpha)}}{1-r}.
\end{equation}

\begin{appprop}[Achievable state rate]\label{prop:D3}
For $M\geq0$ and $0\leq u<H-1$,
\begin{equation}\label{eq:D19}
 D_{\mathrm{PPT}}(\rho_{H,M,u})\geq F_H(u).
\end{equation}
For finite $L,t$, the threshold fidelity is
\begin{equation}\label{eq:D20}
 p_{L,t}(u)=\sum_{s=0}^t\binom{L+s-1}{s}\frac{u^s}{(1+u)^{L+s}}.
\end{equation}
\end{appprop}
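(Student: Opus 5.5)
The plan is to apply Rains's standard-form map $\mathcal R_{Q,K}$ of Lemma~\ref{lem:D1} to $\rho_{H,M,u}^{\otimes L}$, taking the test to be the threshold projection $Q=Q_{L,t}$ of Eq.~\eqref{eq:D1}. The fidelity is then $\operatorname{Tr}(Q_{L,t}\rho_{H,M,u}^{\otimes L})$.

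\textbf{Fidelity formula (D20).} Since $\rho_{H,M,u}^{\otimes L}=S_H^{\otimes L}(\tau_M\otimes\tau_u)^{\otimes L}(S_H^\dagger)^{\otimes L}$ and $S_H$ is unitary, the squeezers cancel. This leaves
\[
\operatorname{Tr}\bigl[(I\otimes\mathbf 1_{\{N_d\leq t\}})(\tau_M\otimes\tau_u)^{\otimes L}\bigr]=\Pr(Z_{L,u}\leq t),
\]
because under $\tau_u^{\otimes L}$ the error counts are i.i.d.\ geometric with mean $u$. Their sum therefore has the negative-binomial law of Eq.~\eqref{eq:D5}, which is Eq.~\eqref{eq:D20}. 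The $c$ modes trace out, so $M$ plays no role.

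\textbf{Choice of $K$ and PPT condition.} To make $\mathcal R_{Q,K}$ completely PPT preserving we need $\|Q_{L,t}^\Gamma\|_\infty\leq 1/K$. We therefore set
\[
K_L=\bigl\lfloor 1/\|Q_{L,t}^\Gamma\|_\infty\bigr\rfloor,
\]
provided this is at least $2$. The bound on $\|Q_{L,t}^\Gamma\|_\infty$ comes from either the exact norm (D7) or the Cauchy bound (D18).

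\textbf{Choice of threshold and rate.} Fix $\alpha$ with $u<\alpha<H-1=a$ and take $t=\lfloor\alpha L\rfloor$. By the law of large numbers, or Chebyshev's inequality using the variance $Lu(1+u)$, $p_{L,t}(u)\to1$, so the fidelity tends to one.

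For the rate, Eq.~\eqref{eq:D18} with $r=\alpha/[\theta(1+\alpha)]$ gives $\|Q_{L,t}^\Gamma\|_\infty\leq 2^{-LF_H(\alpha)}/(1-r)$. Two checks are needed here:
\begin{itemize}
\item $r<1$, which is equivalent to $\alpha<\theta/(1-\theta)=H-1$;
\item the replacement of $t/L$ by $\alpha$ costs only a vanishing amount per copy. One can either apply the Cauchy estimate with $r$ fixed at the $\alpha$-optimum, since $r^{-t}\leq r^{-\alpha L}$ for $r<1$ and $t\leq\alpha L$, or use the monotonicity of $p_{L,t}(a)$ in $t$.
\end{itemize}
Hence $\log_2K_L\geq LF_H(\alpha)-O(1)$. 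Since $F_H(\alpha)>F_H(u)\geq0$ might fail for small arguments, $K_L\geq2$ should be checked. $F_H$ is continuous and $F_H(0)=\log_2H>0$, while $F_H$ is decreasing where $x<a$: indeed $F_H'(x)=\log_2\bigl(x/(\theta(1+x))\bigr)<0$ for $x<a$. The last fact means we obtain $\liminf L^{-1}\log_2K_L\geq F_H(\alpha)$, and letting $\alpha\downarrow u$ gives $F_H(u)$ by continuity.

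If instead $F_H(u)\leq0$ the claim is trivial. Otherwise the achieved rates below $F_H(u)$ can be made arbitrarily close to it, so $D_{\mathrm{PPT}}(\rho_{H,M,u})\geq F_H(u)$.

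\textbf{Main obstacle.} The delicate point is the order of quantifiers: the threshold must sit strictly above the typical error count $uL$ for fidelity, yet strictly below $aL$ for the norm bound. Recovering exactly $F_H(u)$ therefore needs the continuity limit $\alpha\downarrow u$. This relies on $u<H-1$, which is precisely the hypothesis.
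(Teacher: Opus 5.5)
Your proof is correct, but for the rate it takes a different route from the paper.

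The paper uses the exact threshold norm of Proposition~\ref{prop:D2}. It sets $K_L=\lfloor p_{L,t_L}(a)^{-1}\rfloor$, which needs $t_L\leq a(L-1)$; this holds for large $L$ because $\alpha<a$. It then sandwiches $p_{L,t_L}(a)$ between its last summand and $(t_L+1)$ times that summand, Eq.~\eqref{eq:D22}, so Stirling's formula gives the exact limit $L^{-1}\log_2K_L\to F_H(\alpha)$. The case $u=0$ is handled separately with $t=0$ and $K_L=\lfloor H^L\rfloor$.

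You instead use only the Cauchy estimate \eqref{eq:D17}, with $r$ frozen at the $\alpha$-optimum. Then $r^{-t}\leq r^{-\alpha L}$ absorbs the rounding of $t$ and gives $\log_2K_L\geq LF_H(\alpha)+\log_2(1-r)-o(1)$. This relies only on Lemma~\ref{lem:C1} and Eq.~\eqref{eq:C9}. It bypasses the Fej\'er-kernel argument behind Eq.~\eqref{eq:D7}, the condition $t\leq a(L-1)$ and the Stirling asymptotics. It also covers $u=0$ uniformly, since then $p_{L,t}(0)=1$.

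What you give up is sharpness at finite $L$. The paper's $K_L$ is the largest dimension admissible for that test. This matters for the finite-block example in Appendix~\ref{app:D}, where the Cauchy bound gives $\simeq0.9757$ instead of $\simeq0.9872$. It does not matter for the asymptotic statement, where a $\liminf$ lower bound suffices.

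One small correction: you wrote $F_H(\alpha)>F_H(u)$, which is backwards, because $F_H$ strictly decreases on $(0,a)$. What you need is $F_H(\alpha)>0$. This holds for every $\alpha<a$ because $F_H(a)=0$, so your case ``$F_H(u)\leq0$'' never occurs. Your continuity argument for $\alpha$ near $u$ still closes this step, so it is a presentational slip rather than a gap.
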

\begin{proof}
The error modes have independent geometric photon distributions. Counting the arrangements of $s$ photons in $L$ modes gives Eq.~\eqref{eq:D20}, independently of $M$.

For $u>0$, choose $u<\alpha<a$ and put $t_L=\lfloor\alpha L\rfloor$. For sufficiently large $L$, $t_L\leq a(L-1)$, so choose
\begin{equation}\label{eq:D21}
 K_L=\left\lfloor p_{L,t_L}(a)^{-1}\right\rfloor.
\end{equation}
Proposition~\ref{prop:D2} and Lemma~\ref{lem:D1} make the map PPT whenever $K_L\geq2$. The summands in $p_{L,t_L}(a)$ increase up to $t_L$. Thus
\begin{equation}\label{eq:D22}
 \Pr(Z_{L,a}=t_L)\leq p_{L,t_L}(a)\leq(t_L+1)\Pr(Z_{L,a}=t_L).
\end{equation}
Stirling's formula now gives $-L^{-1}\log_2p_{L,t_L}(a)\to F_H(\alpha)$. The function $F_H$ vanishes at $a$ and has derivative $\log_2[x/(\theta(1+x))]<0$ for $0<x<a$, so this limit is positive and $K_L\geq2$ eventually.

The total error count has mean $Lu$ and variance $Lu(1+u)$. Chebyshev's inequality gives $p_{L,t_L}(u)\to1$, while $L^{-1}\log_2K_L\to F_H(\alpha)$. Let $\alpha\downarrow u$ to prove the claim. For $u=0$, take $t=0$ and $K_L=\lfloor H^L\rfloor$. The fidelity is one and the rate tends to $\log_2H$.
\end{proof}
The exponent has a direct counting interpretation. For $H=G$ and $\theta=\eta$,
\begin{equation}\label{eq:D23}
 LF_G(t/L)=L\log_2G+t\log_2(1/\eta)-Lg(t/L).
\end{equation}
For $t/L\to\alpha>0$, Stirling's formula gives $L^{-1}\log_2\binom{L+t}{t}\to g(\alpha)$. The entropy term counts the photon arrangements. This interpretation does not assume that the individual error labels are measured or announced.

\subsection{Optimal tests depending on total error number}
For $f:\mathbb N_0\to[0,1]$, consider
\begin{equation}\label{eq:D24}
 Q_f=S_H^{\otimes L}(I_{c^L}\otimes f(N_d))(S_H^\dagger)^{\otimes L}.
\end{equation}
Its fidelity on $\rho_{H,M,u}^{\otimes L}$ is $\mathbb E f(Z_{L,u})$. Since $(S_H^\dagger)^{\otimes L}$ sends the product vacuum to a state with error-mode marginal $\tau_a^{\otimes L}$,
\begin{equation}\label{eq:D25}
 \langle0,0|Q_f|0,0\rangle=\mathbb E f(Z_{L,a}).
\end{equation}
Partial transpose preserves this vacuum expectation, so PPT feasibility requires
\begin{equation}\label{eq:D26}
 \mathbb E f(Z_{L,a})\leq1/K.
\end{equation}

\begin{appprop}[Optimality within the number-test family]\label{prop:D4}
Let $0<u<a$ and $K\geq2$. Choose $t\geq0$ so that $p_{L,t-1}(a)\leq1/K\leq p_{L,t}(a)$, and put
\begin{equation}\label{eq:D27}
 q=\frac{K^{-1}-p_{L,t-1}(a)}{\Pr(Z_{L,a}=t)}.
\end{equation}
The function
\begin{equation}\label{eq:D28}
 f_*(k)=\mathbf1_{\{k<t\}}+q\mathbf1_{\{k=t\}}
\end{equation}
maximizes fidelity subject to Eq.~\eqref{eq:D26}. If the partial transposes of both neighbouring threshold operators $Q_{L,t-1}$ and $Q_{L,t}$ have operator norm equal to their vacuum eigenvalue, then $Q_{f_*}$ is PPT feasible and maximizes fidelity among all PPT-feasible tests in this family. Here $Q_{L,-1}=0$. In particular, this conclusion holds when $t\leq a(L-1)$, and also when $t\leq1$.
\end{appprop}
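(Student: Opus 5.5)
The plan has two parts. First, I treat the relaxed problem as a linear program over the monotone-likelihood-ratio structure of the two geometric-sum laws. Then I deal with PPT feasibility of the optimizer.

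\emph{Step 1: optimality under the vacuum constraint.} Every test $Q_f$ in the family has fidelity $\mathbb E f(Z_{L,u})$. Every PPT-feasible $f$ obeys the necessary constraint Eq.~\eqref{eq:D26}. So it suffices to maximize $\sum_k f(k)P_u(k)$ subject to $0\le f\le1$ and $\sum_k f(k)P_a(k)\le 1/K$, where $P_x(k)=\Pr(Z_{L,x}=k)$. The likelihood ratio is
\[
P_u(k)/P_a(k)=\bigl(\tfrac{1+a}{1+u}\bigr)^L\bigl(\tfrac{u(1+a)}{a(1+u)}\bigr)^k .
\]
Because $u<a$, this ratio is strictly decreasing in $k$. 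This is a Neyman--Pearson problem, and the optimal $f$ fills the smallest counts first. Concretely, let $\lambda=P_u(t)/P_a(t)$ and take any feasible $f$. Then
\[
\sum_k (f_*-f)(k)\,(P_u(k)-\lambda P_a(k))\ge0,
\]
since $f_*-f\ge0$ where the bracket is positive ($k<t$), $f_*-f\le0$ where it is negative ($k>t$), and the bracket vanishes at $k=t$. Next, $f_*$ saturates the constraint: $\sum f_*P_a=p_{L,t-1}(a)+qP_a(t)=1/K$. The choice of $t$ guarantees $0\le q\le1$. Together these give $\mathbb E f_*(Z_{L,u})\ge\mathbb E f(Z_{L,u})$.

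\emph{Step 2: PPT feasibility of $Q_{f_*}$.} By linearity, $Q_{f_*}=(1-q)Q_{L,t-1}+qQ_{L,t}$, a convex combination of the neighbouring thresholds. If each $Q_{L,s}^\Gamma$ has operator norm equal to its vacuum eigenvalue $p_{L,s}(a)$, the triangle inequality gives
\[
\|Q_{f_*}^\Gamma\|_\infty\le(1-q)p_{L,t-1}(a)+q\,p_{L,t}(a)=1/K .
\]
Lemma~\ref{lem:D1} then makes $\mathcal R_{Q_{f_*},K}$ completely PPT preserving. The convention $Q_{L,-1}=0$ handles $t=0$.

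Since PPT-feasible tests form a subset of the tests satisfying Eq.~\eqref{eq:D26}, and $f_*$ is optimal over that larger set and is itself feasible, $f_*$ is optimal among PPT-feasible tests in the family.

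\emph{Step 3: the sufficient conditions.} If $t\le a(L-1)$, then $t-1$ also satisfies this bound, and Proposition~\ref{prop:D2} gives both norms as vacuum values, $p_{L,t-1}(a)$ and $p_{L,t}(a)$. If $t\le1$, the neighbours are $Q_{L,0}$ and $Q_{L,1}$, or $0$. The unconditional formulas of Proposition~\ref{prop:D2}, namely $H^{-L}$ and $H^{-L}(1+L\theta)$, are exactly $p_{L,0}(a)$ and $p_{L,1}(a)$. To check this, use $1/(1+a)=1/H$ and $a/(1+a)=\theta$.

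The main point needing care is Step 1's monotone-ratio comparison, including the boundary cases $q\in\{0,1\}$ and $P_a(t)>0$. The latter holds since $a=H-1>0$. The rest is bookkeeping with earlier results.
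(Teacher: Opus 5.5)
Your proposal is correct and follows essentially the same route as the paper. The paper uses the strictly decreasing likelihood ratio $\ell_k$ with the saturated vacuum budget to get optimality under Eq.~\eqref{eq:D26}. It then writes $Q_{f_*}=(1-q)Q_{L,t-1}+qQ_{L,t}$ and bounds its partial-transpose norm by convexity. Your Step~3 additionally checks the ``in particular'' clause against Proposition~\ref{prop:D2} ($p_{L,0}(a)=H^{-L}$, $p_{L,1}(a)=H^{-L}(1+L\theta)$), which the paper's proof leaves implicit.
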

\begin{proof}
The ratio of the two probability masses is
\begin{equation}\label{eq:D29}
 \frac{\Pr(Z_{L,u}=k)}{\Pr(Z_{L,a}=k)}=
 \left(\frac{1+a}{1+u}\right)^L\left(\frac{u(1+a)}{a(1+u)}\right)^k.
\end{equation}
This ratio $\ell_k$ decreases strictly with $k$, so $(f_*(k)-f(k))(\ell_k-\ell_t)\geq0$ at every $k$. Multiply by $\Pr(Z_{L,a}=k)$ and sum. Since $f_*$ saturates the budget, $\mathbb E f_*(Z_{L,u})\geq\mathbb E f(Z_{L,u})$.

The operator $Q_{f_*}$ equals $(1-q)Q_{L,t-1}+qQ_{L,t}$. Under the stated norm hypothesis, convexity gives
\begin{align}
 \|Q_{f_*}^\Gamma\|_\infty&\leq(1-q)p_{L,t-1}(a)+qp_{L,t}(a)\label{eq:D30}\\
 &=1/K.\label{eq:D31}
\end{align}
Its vacuum expectation is $1/K$, so equality holds and the classical optimum is PPT feasible.
\end{proof}
For $u=0$, only $f(0)$ affects the fidelity. Its optimal value is $\min\{1,H^L/K\}$, attained by $f(k)=\min\{1,H^L/K\}\mathbf1_{\{k=0\}}$. This test is PPT feasible by the exact norm at $t=0$.

\subsection{Finite blocks followed by hashing}
The output in Eq.~\eqref{eq:D2} is isotropic, meaning invariant under $U\otimes\bar U$. Its marginals are maximally mixed and its eigenvalues are $p$ and $(1-p)/(K^2-1)$, the latter repeated $K^2-1$ times. The one-way hashing inequality~\cite{r8} gives the rate per original copy
\begin{equation}\label{eq:D32}
 r_{L,t,K}=\frac{[\log_2K-h_2(p)-(1-p)\log_2(K^2-1)]_+}{L},
\end{equation}
where $p=p_{L,t}(u)$. Hashing acts on many independent outputs of the finite block.

For $\eta=0.8$, $n=0.5$, $H=G=5$, $L=512$ and $t=305$, the condition $t\leq(H-1)(L-1)$ holds. The exact norm gives
\begin{align}
 p&\simeq0.9927,\label{eq:D33}\\
 -\log_2\|Q_{512,305}^\Gamma\|_\infty&\simeq513.15.\label{eq:D34}
\end{align}
Thus $K=2^{513}$ is feasible and yields
\begin{equation}\label{eq:D35}
 r_{512,305,2^{513}}\simeq0.9872>I_{\mathrm{RC}}(0.8,0.5)\simeq0.9445.
\end{equation}
Here $I_{\mathrm{RC}}$ is the reverse-coherent-information benchmark in Eq.~\eqref{eq:O3}. The Cauchy bound gives the lower rate $\simeq0.9757$. Choosing the general integer $K=\lfloor p_{512,305}(4)^{-1}\rfloor$ instead gives $\simeq0.9875$. The limiting capacity is $\simeq1.1054$. The displayed rates use the high-energy limiting parameters. Since the PPT inequality has strict slack for $K=2^{513}$, Lemma~\ref{lem:B1} and continuity imply that sufficiently large finite input energy still gives a strict improvement over the benchmark.

\section{Uniform Rains bounds}\label{app:E}
This appendix matches the achievable exponent by bounds that are uniform in the number of copies. The lower bound is a feasible PPT witness; the upper bound is the relative entropy to an explicit separable state. Matching these two bounds uniformly before regularization avoids exchanging an uncontrolled copy-number limit with an energy limit. Partial-transpose norms also underpin the computable entanglement bounds of Refs.~\cite{r41,r42}.

\subsection{A tensor-product witness}
\begin{applemma}[Witness inequality]\label{lem:E1}
Let $W\geq0$ be bounded, with bounded partial transpose satisfying $\|W^\Gamma\|_\infty\leq1$. Then
\begin{equation}\label{eq:E1}
R(\omega)\geq\operatorname{Tr}(\omega\log_2 W),
\end{equation}
whenever the right-hand side is defined, allowing $-\infty$. The same inequality holds on tensor powers with $W^{\otimes\ell}$.
\end{applemma}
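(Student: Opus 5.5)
Fix a feasible $\sigma\geq0$ with $\|\sigma^\Gamma\|_1\leq1$ in the definition~\eqref{eq:A13} of $R(\omega)$; it suffices to show $D(\omega\|\sigma)\geq\operatorname{Tr}(\omega\log_2W)$ and then take the infimum. The plan is to combine the variational (Gibbs/Donsker--Varadhan) lower bound for relative entropy with the duality bound $|\operatorname{Tr}XY|\leq\|X\|_1\|Y\|_\infty$ from Appendix~\ref{app:A}. For a bounded Hermitian $Y$, the Gibbs variational principle gives
\[
 D(\omega\|\sigma)\geq\operatorname{Tr}(\omega Y)-\log_2\operatorname{Tr}\exp\bigl(\ln\sigma+Y\ln 2\bigr),
\]
and, by the Golden--Thompson inequality, $\operatorname{Tr}\exp(\ln\sigma+\ln W)\leq\operatorname{Tr}(\sigma W)$. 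With $Y=\log_2W$ this yields
\[
 D(\omega\|\sigma)\geq\operatorname{Tr}(\omega\log_2W)-\log_2\operatorname{Tr}(\sigma W).
\]

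The partial transpose is an isometry for the trace pairing, so $\operatorname{Tr}(\sigma W)=\operatorname{Tr}(\sigma^\Gamma W^\Gamma)\leq\|\sigma^\Gamma\|_1\|W^\Gamma\|_\infty\leq1$, and the last logarithm is nonpositive. Because $W$ and $W^\Gamma$ are only bounded, I will first check this identity on finite number-state truncations via Eq.~\eqref{eq:A1} and then pass to the limit by trace-class/bounded duality. The extension~\eqref{eq:A12} for unnormalized $\sigma$ is handled by rescaling, since it only contributes the term $-\log_2\operatorname{Tr}\sigma$, which is absorbed consistently.

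The main obstacle is making the variational step rigorous in infinite dimensions, where $W$ may have a kernel or be unbounded below in logarithm and $\sigma$ may fail the support condition. I plan to regularize by $W_\epsilon=W+\epsilon I$, or more safely by $W_\epsilon=(1-\epsilon)W+\epsilon\,\cdot$ an appropriate positive operator. The simplest choice is to rescale to $W_\epsilon=(W+\epsilon I)/(1+\epsilon\,c)$ so that $\|W_\epsilon^\Gamma\|_\infty\leq1$ still holds; here $I^\Gamma=I$, so $\|W_\epsilon^\Gamma\|_\infty\leq(1+\epsilon)/(1+\epsilon)$ when $c=1$. Then $\log_2W_\epsilon$ is bounded, and I would apply the finite-dimensional inequality on compressions $P_k\,\cdot\,P_k$ to number cutoffs. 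Lower semicontinuity of relative entropy under these compressions passes the bound to the limit.

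Finally, let $\epsilon\downarrow0$ using monotone convergence of $\operatorname{Tr}\omega\log_2(W+\epsilon)$, which is valid whenever the right side is defined, including the value $-\infty$. For tensor powers, I will note that $(W^{\otimes\ell})^\Gamma=(W^\Gamma)^{\otimes\ell}$ and that operator norms are multiplicative, so $W^{\otimes\ell}$ is again a feasible witness and the same argument applies verbatim.
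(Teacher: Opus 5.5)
\paragraph{A different route to the key bound.} You reach the same intermediate inequality as the paper, $D(\omega\|\sigma)\geq\operatorname{Tr}(\omega\log_2T)-\log_2\operatorname{Tr}(\sigma T)$ for bounded $T\geq\epsilon' I>0$, but by a different mechanism. You use Petz's Gibbs variational formula followed by Golden--Thompson. The paper instead measures the spectral projections of $T=W+\epsilon I$, applies data processing to that measurement channel, and finishes with the classical log-sum inequality, handling continuous spectrum by finite partitions. The trace-duality step and the tensor-power step are the same as the paper's. Your $\epsilon$-regularization is equivalent to the paper's: your rescaling by $1+\epsilon$ uses $I^\Gamma=I$, while the paper uses $\operatorname{Tr}\sigma\leq\|\sigma^\Gamma\|_1\leq1$.

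The paper's route is more economical. It never needs $\log\sigma$, which for trace-class $\sigma$ is unbounded below or undefined on $\ker\sigma$. It needs no matrix trace inequality, and it works directly in infinite dimensions. It even proves the bound for the measured relative entropy. Your route, as you set it up, is finite-dimensional: you need a full-rank perturbation $\sigma+\delta I$ before writing $\ln\sigma$. So everything rests on your truncation step.

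\paragraph{The truncation step.} As written, that step points the wrong way. Lower semicontinuity gives $D(\omega\|\sigma)\leq\liminf_k D(\omega_k\|\sigma_k)$ for approximants. You need the opposite: $D(\omega\|\sigma)$ must dominate the truncated quantities. The correct tool is data processing for the trace-preserving padded compression $\Phi_k(X)=P_kXP_k\oplus\operatorname{Tr}[(I-P_k)X]$, similar to the padded compressions in Appendix~L. This gives $D(\omega\|\sigma)\geq D(\Phi_k(\omega)\|\Phi_k(\sigma))$ for the extended relative entropy. You can then apply your finite-dimensional bound with $T_k=P_kW_\epsilon P_k\oplus1$.

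You must also control the compressed witness, because $W_\epsilon$ is not diagonal in the number basis. Operator Jensen (Davis--Choi) gives $\log_2(P_kW_\epsilon P_k)\geq P_k(\log_2W_\epsilon)P_k$ on the range of $P_k$. Hence the first term is at least $\operatorname{Tr}(P_k\omega P_k\log_2W_\epsilon)\to\operatorname{Tr}(\omega\log_2W_\epsilon)$. Meanwhile $\operatorname{Tr}(P_k\sigma P_kW_\epsilon)+\operatorname{Tr}[(I-P_k)\sigma]\to\operatorname{Tr}(\sigma W_\epsilon)\leq1$.

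With this repair your argument is complete. Alternatively, adopting the paper's spectral measurement of $T$ removes the truncation step entirely.
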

\begin{proof}
For every feasible $\sigma$ in the Rains minimization, trace duality gives
\begin{equation}\label{eq:E2}
\operatorname{Tr}(\sigma W)=\operatorname{Tr}(\sigma^\Gamma W^\Gamma)
\leq\|\sigma^\Gamma\|_1\|W^\Gamma\|_\infty\leq1.
\end{equation}
On finite local number cutoffs the pairing is a finite-dimensional identity. As the cutoffs increase strongly to the identity, the compressed $\sigma$ and $\sigma^\Gamma$ converge in trace norm. Boundedness of $W$ and $W^\Gamma$ then passes both pairings to the limit.

For a bounded test $T\geq\epsilon I>0$, measuring its spectral projections and using relative-entropy data processing yields
\begin{equation}\label{eq:E3}
D(\omega\|\sigma)\geq\operatorname{Tr}(\omega\log_2 T)-\log_2\operatorname{Tr}(\sigma T).
\end{equation}
For a finite spectrum this is the classical log-sum inequality: for measured distributions $p,q$ and positive test values $t_j$, one has $D(p\|q)\geq\sum_jp_j\log_2t_j-\log_2\sum_jq_jt_j$. Finite spectral partitions and their limits cover a continuous spectrum.

Apply this with $T=W+\epsilon I$. Since $\operatorname{Tr}\sigma\leq\|\sigma^\Gamma\|_1\leq1$, the last trace is at most $1+\epsilon$. Let $\epsilon\downarrow0$ and take the infimum over $\sigma$. Finally, $\|(W^{\otimes\ell})^\Gamma\|_\infty=\|W^\Gamma\|_\infty^\ell\leq1$, proving the tensor statement.
\end{proof}

\begin{appprop}[Uniform sandwich]\label{prop:E2}
Fix $H>1$ and $0\leq u<H-1$. For $M>H-1$ and $w>0$, define
\begin{align}
w_M&=\frac{(H-1)(M+1)}{M-H+1},\label{eq:E4}\\
d(u\|w)&=\log_2(w+1)+u\log_2\frac{w+1}{w}-g(u).\label{eq:E5}
\end{align}
For every integer $\ell\geq1$,
\begin{equation}\label{eq:E6}
\ell F_H(u)\leq R(\rho_{H,M,u}^{\otimes\ell})
\leq E_R(\rho_{H,M,u}^{\otimes\ell})\leq\ell d(u\|w_M).
\end{equation}
Moreover, $w_M\to H-1$ and $d(u\|w_M)\to F_H(u)$ as $M\to\infty$.
\end{appprop}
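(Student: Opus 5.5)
The plan is to prove the three inequalities in Eq.~\eqref{eq:E6} separately. The middle one, $R\leq E_R$, is immediate, because the Rains domain contains all separable states (Appendix~\ref{app:A}). The lower bound will come from Lemma~\ref{lem:E1} with a witness built from the generating operator of Lemma~\ref{lem:C1}. The upper bound will come from an explicit Gaussian separable state that shares the squeezer and the $c$-mode occupation with $\rho_{H,M,u}$.

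\emph{Lower bound.} For real $0<z<1$ set $W_z=H(1-\theta z)A_H(z)$.
\begin{itemize}
\item Positivity: $W_z\geq0$, since $A_H(z)=S_H(I\otimes z^{d^\dagger d})S_H^\dagger$ is a unitary conjugate of a positive operator.
\item Witness norm: by Eq.~\eqref{eq:C3}, $W_z^\Gamma=w(z)^{v^\dagger v}$. By Eq.~\eqref{eq:C9} we have $|w(z)|\leq1$, so $\|W_z^\Gamma\|_\infty\leq1$.
\item Evaluation: the same squeezer diagonalizes $\rho_{H,M,u}$, so $\operatorname{Tr}(\rho_{H,M,u}\log_2W_z)=\log_2H+\log_2(1-\theta z)+u\log_2z$, using that the error mode $d$ is in $\tau_u$.
\end{itemize}
Maximizing over $z$ gives $z_*=u/[\theta(1+u)]$, which lies in $(0,1)$ exactly because $u<H-1=a$. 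At $z_*$ the value simplifies to $\log_2H-g(u)+u\log_2(1/\theta)=F_H(u)$. The tensor-power clause of Lemma~\ref{lem:E1}, applied with $W_{z_*}^{\otimes\ell}$, yields $\ell F_H(u)\leq R(\rho^{\otimes\ell})$. For $u=0$ I would use the limit $z\downarrow0$. Equivalently, take $W=S_H(I\otimes H|0\rangle\langle0|)S_H^\dagger$, which has $\|W^\Gamma\|_\infty=1$ by the $z=0$ case of Eq.~\eqref{eq:C3}. Here $\rho$ is supported where $W=H$, so $\operatorname{Tr}\rho\log_2W=\log_2H$, with the convention $0\log0=0$.

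\emph{Upper bound.} Take $\sigma_M=\rho_{H,M,w_M}$, the same squeezed-thermal form with only the error occupation raised to $w_M$.
\begin{itemize}
\item Relative entropy: by unitary invariance and additivity under tensor products,
\[
D(\rho_{H,M,u}^{\otimes\ell}\|\sigma_M^{\otimes\ell})=\ell D(\tau_u\|\tau_{w_M}),
\]
since the $c$-mode factors $\tau_M$ cancel. A direct computation with the geometric distributions gives $D(\tau_u\|\tau_w)=d(u\|w)$ as in Eq.~\eqref{eq:E5}.
\item Separability: since $\sigma_M^{\otimes\ell}$ is a product of separable states, it remains to show that $\sigma_M$ itself is separable. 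It is a two-mode centred Gaussian state with covariance blocks $aI_2,bI_2,cZ$ given by Eqs.~\eqref{eq:B10}--\eqref{eq:B12}, now with $u$ replaced by $w$. For such states PPT is equivalent to separability (Simon's criterion~\cite{r26,r27,r28}). In this phase-insensitive form, PPT reduces to $(a-1)(b-1)\geq c^2$.
\end{itemize}
Substituting Eqs.~\eqref{eq:B10}--\eqref{eq:B12} into this inequality and solving for $w$ should give exactly $w\geq(H-1)(M+1)/(M-H+1)$, which is finite and positive because $M>H-1$. Hence the choice $w=w_M$ sits on the separability boundary.

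\emph{Limits.} The limit $w_M\to H-1$ is immediate from Eq.~\eqref{eq:E4}. Then $d(u\|w)\to\log_2H+u\log_2(H/(H-1))-g(u)=F_H(u)$ by continuity of $d$ in $w>0$.

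\emph{Main obstacle.} The step I expect to be delicate is the separability algebra. One must verify that $(a-1)(b-1)-c^2$ factorizes so that it vanishes exactly at $w_M$, and that the sign is correct for larger $w$. I would first expand in the variables $M+1$ and $w$ and look for a factor of $(M+1)w-(H-1)(M+1+w)$, which is the linear relation encoded by Eq.~\eqref{eq:E4}. A secondary point is justifying $D(\cdot\|\cdot)$ for these infinite-dimensional faithful Gaussian states. This is harmless, because everything reduces to two geometric distributions on the error mode.
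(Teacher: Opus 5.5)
Your proposal is correct and matches the paper's proof. It uses the same witness $H(1-\theta z_*)A_H(z_*)$, with the same $u=0$ fallback, the same comparison state $S_H(\tau_M\otimes\tau_{w_M})S_H^\dagger$, and the same separability condition $(a-1)(b-1)\geq c^2$; the paper reads that condition as $V_{\sigma_M}-I_4\geq0$, so $\sigma_M$ is directly a mixture of coherent product states, rather than invoking Simon's PPT criterion. One small correction to your anticipated algebra: the expansion gives $(a-1)(b-1)-c^2=4\{Mw-(H-1)(M+w+1)\}$, with $Mw$ rather than $(M+1)w$, and it is this expression that vanishes exactly at $w=w_M$.
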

\begin{proof}
For $u>0$ set $x=u/[\theta(1+u)]$, $A=H/(1+u)$, and
\begin{equation}\label{eq:E7}
W=AS_H(I\otimes x^{d^\dagger d})S_H^\dagger.
\end{equation}
Here $0<x<1$. The prefactor in Eq.~\eqref{eq:C3} cancels because $A/[H(1-\theta x)]=1$. The vacuum of $v$ attains the norm, so $\|W^\Gamma\|_\infty=1$. Also
\begin{align}
\operatorname{Tr}(\rho_{H,M,u}\log_2W)
&=\log_2\frac{H}{1+u}+u\log_2\frac{u}{\theta(1+u)}\label{eq:E8}\\
&=F_H(u).\label{eq:E9}
\end{align}
Lemma~\ref{lem:E1} gives the lower bound on every tensor power. For $u=0$, use $W=HS_H(I\otimes|0\rangle\langle0|)S_H^\dagger$. Its logarithm equals $\log_2H$ on the support of the resource state, and the same regularization applies.

For the upper bound choose
\begin{equation}\label{eq:E10}
\sigma_M=S_H(\tau_M\otimes\tau_{w_M})S_H^\dagger.
\end{equation}
For the covariance entries in Eq.~\eqref{eq:B12}, replacing $u$ by $w$ gives
\begin{equation}\label{eq:E11}
(a-1)(b-1)-c^2=4\{Mw-(H-1)(M+w+1)\}.
\end{equation}
The choice $w=w_M$ makes this zero. Both diagonal entries $a-1,b-1$ are nonnegative, hence $V_{\sigma_M}-I_4\geq0$. The state is a Gaussian mixture of coherent product states, with displacement covariance $V_{\sigma_M}-I_4$, and is separable. A limit covers the singular covariance.

Unitary invariance and the common first thermal factor give
\begin{equation}\label{eq:E12}
D(\rho_{H,M,u}\|\sigma_M)=D(\tau_u\|\tau_{w_M})=d(u\|w_M).
\end{equation}
The product comparison state gives the upper bound for each $\ell$. Every separable state is feasible for the Rains bound, proving the middle inequality. Finally,
\begin{equation}\label{eq:E13}
d(u\|H-1)=\log_2H+u\log_2\frac{H}{H-1}-g(u)=F_H(u).
\end{equation}
\end{proof}

\subsection{Distillation and regularization}
\begin{applemma}[Output bound]\label{lem:E3}
For a state $\omega$ on $K\times K$, let $f=\operatorname{Tr}(\Phi_K\omega)$. Then
\begin{equation}\label{eq:E14}
R(\omega)\geq f\log_2K-h_2(f).
\end{equation}
Consequently, $D_{\mathrm{PPT}}(\rho)\leq R^\infty(\rho)$.
\end{applemma}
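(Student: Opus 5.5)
The plan is to prove the bound $R(\omega)\geq f\log_2K-h_2(f)$ directly from the definition of $R$, by a two-outcome measurement argument, and then regularize to get $D_{\mathrm{PPT}}\leq R^\infty$.

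\emph{Step 1 (the bound).} Fix a feasible $\sigma\geq0$ with $\|\sigma^\Gamma\|_1\leq1$. Apply the binary measurement $\{\Phi_K,I-\Phi_K\}$ and use data processing of the extended relative entropy. The measured distributions are $(f,1-f)$ for $\omega$ and $(q,q')$ for $\sigma$, where
\[
q=\operatorname{Tr}(\Phi_K\sigma),\qquad q'=\operatorname{Tr}[(I-\Phi_K)\sigma].
\]
Since $\Phi_K^\Gamma=V_K/K$ has operator norm $1/K$, trace duality gives $q=\operatorname{Tr}(\sigma^\Gamma\Phi_K^\Gamma)\leq1/K$. This is the finite-dimensional case of the pairing in Lemma~\ref{lem:E1}. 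Also $q+q'=\operatorname{Tr}\sigma\leq\|\sigma^\Gamma\|_1\leq1$, so $q'\leq1$. Hence
\[
D(\omega\|\sigma)\geq f\log_2\frac fq+(1-f)\log_2\frac{1-f}{q'}\geq f\log_2K-h_2(f),
\]
because $-\log_2 q\geq\log_2K$ and $-\log_2q'\geq0$. The unnormalized second argument is harmless: either use Eq.~\eqref{eq:A12}, or apply the log-sum form of Eq.~\eqref{eq:E3}. Taking the infimum over $\sigma$ gives the bound.

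\emph{Step 2 (regularization).} Take any PPT distillation protocol on $\rho^{\otimes\ell}$ producing $\omega_\ell$ on $K_\ell\times K_\ell$ with fidelity $f_\ell\to1$ and $\ell^{-1}\log_2K_\ell\to r$. The initial state is $\rho^{\otimes\ell}$, possibly tensored with a separable ancilla, which does not increase $R$ by the feasibility of separable states. By the Rains monotonicity shown in Eq.~\eqref{eq:A15},
\[
R(\rho^{\otimes\ell})\geq R(\omega_\ell)\geq f_\ell\log_2K_\ell-1.
\]
Dividing by $\ell$ gives $\liminf_\ell R(\rho^{\otimes\ell})/\ell\geq r$.

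\emph{Step 3 (from liminf to $R^\infty$).} $R^\infty$ is defined as an infimum over $\ell$, so I need the limit to exist and equal that infimum. Subadditivity of $R$ under tensor products holds because products of feasible operators are feasible, since $\|(\sigma_1\otimes\sigma_2)^\Gamma\|_1=\|\sigma_1^\Gamma\|_1\|\sigma_2^\Gamma\|_1$. Fekete's lemma then identifies $\lim_\ell R(\rho^{\otimes\ell})/\ell$ with $\inf_\ell R(\rho^{\otimes\ell})/\ell$.

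There is one subtlety: if $R(\rho)=+\infty$ the bound is trivial. Otherwise, subadditivity makes every $R(\rho^{\otimes\ell})$ finite, so Fekete applies. This gives $r\leq R^\infty(\rho)$ and hence $D_{\mathrm{PPT}}(\rho)\leq R^\infty(\rho)$.

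The main technical care is in Step 1 for infinite-dimensional protocol inputs. There I need monotonicity of $R$ under normal PPT maps in the extended-relative-entropy sense, which rests on Eq.~\eqref{eq:A15} and the data processing cited earlier. I also need the duality pairing $\operatorname{Tr}(\sigma\Phi_K)=\operatorname{Tr}(\sigma^\Gamma\Phi_K^\Gamma)$; this is routine on the finite $K\times K$ output.
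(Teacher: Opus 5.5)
Your proposal is correct and follows the paper's proof. Both use the binary measurement $\{\Phi_K,I-\Phi_K\}$ with data processing, trace duality via $\Phi_K^\Gamma=V_K/K$ to get $q\leq1/K$, and $\operatorname{Tr}\sigma\leq\|\sigma^\Gamma\|_1\leq1$ for the complementary mass, then Rains monotonicity under the $\ell$-copy PPT protocol and subadditivity to reach $R^\infty$. Your explicit Fekete step is a good addition, but dismissing the case $R(\rho)=+\infty$ as trivial needs $R(\rho^{\otimes\ell})\geq R(\rho)$, which follows from monotonicity under discarding copies. Fekete's lemma alone does not cover subadditive sequences that take the value $+\infty$.
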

\begin{proof}
For feasible $\sigma$ put $q=\operatorname{Tr}(\Phi_K\sigma)$ and $s=\operatorname{Tr}\sigma\leq1$. Since $\Phi_K^\Gamma=V_K/K$, trace duality gives $q\leq1/K$. Measuring $\{\Phi_K,I-\Phi_K\}$ yields
\begin{align}
D(\omega\|\sigma)&\geq f\log_2(f/q)+(1-f)\log_2\frac{1-f}{s-q}\label{eq:E15}\\
&\geq f\log_2K-h_2(f).\label{eq:E16}
\end{align}
Zero probabilities are handled by limits. Take the infimum over $\sigma$. For an $L$-copy distillation protocol, monotonicity bounds the output Rains quantity by $R(\rho^{\otimes L})$. Divide by $L$ and let fidelity tend to one. Subadditivity makes $R(\rho^{\otimes L})/L$ converge to $R^\infty(\rho)$.
\end{proof}

\begin{appcor}[Uniform energy limits]\label{cor:E4}
For fixed $H>1$ and $0\leq u<H-1$, each of $D_{\mathrm{PPT}},R^\infty,E_R^\infty,R$ and $E_R$, evaluated on $\rho_{H,M,u}$, tends to $F_H(u)$ as $M\to\infty$. The conclusion also holds along sequences $H_N\to H$, $u_N\to u$ and $M_N\to\infty$.
\end{appcor}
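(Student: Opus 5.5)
The proof is a sandwich. Proposition~\ref{prop:D3} supplies a lower bound, Proposition~\ref{prop:E2} and Lemma~\ref{lem:E3} supply upper bounds, and then the limit $M\to\infty$ is taken.

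For fixed $H>1$, $0\leq u<H-1$ and $M>H-1$, the quantities are chained as
\[
F_H(u)\leq D_{\mathrm{PPT}}(\rho_{H,M,u})\leq R^\infty(\rho_{H,M,u})\leq E_R^\infty(\rho_{H,M,u})\leq d(u\|w_M).
\]
Each link has a source:
\begin{itemize}
\item The first inequality is Proposition~\ref{prop:D3}. It is uniform in $M$, since the threshold fidelity and the norm do not depend on $M$.
\item The second inequality is Lemma~\ref{lem:E3}.
\item The third holds because separable states are Rains-feasible, so $R\leq E_R$ on every tensor power.
\item The last comes from dividing the upper bound in Eq.~\eqref{eq:E6} by $\ell$.
\end{itemize}
The unregularized quantities are covered by setting $\ell=1$ in Eq.~\eqref{eq:E6}, which gives $F_H(u)\leq R(\rho)\leq E_R(\rho)\leq d(u\|w_M)$. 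Proposition~\ref{prop:E2} gives $d(u\|w_M)\to F_H(u)$ as $M\to\infty$, so all five quantities are squeezed to $F_H(u)$.

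For the sequence statement, $H_N\to H$, $u_N\to u$ and $M_N\to\infty$. The lower bound becomes $F_{H_N}(u_N)$, valid once $u_N<H_N-1$; this holds eventually because $u<H-1$. Since $F_H(x)=\log_2H-g(x)+x\log_2(H/(H-1))$ is jointly continuous in $(H,x)$ on $H>1$, we get $F_{H_N}(u_N)\to F_H(u)$.

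For the upper bound I use $d(u_N\|w_{M_N}(H_N))$, where
\[
w_{M}(H_N)=\frac{(H_N-1)(M+1)}{M-H_N+1}.
\]
When $M_N\to\infty$ and $H_N\to H$, this tends to $H-1>0$. Also $d(u\|w)=\log_2(w+1)+u\log_2\frac{w+1}{w}-g(u)$ is jointly continuous for $w>0$. Hence the upper bound tends to $d(u\|H-1)=F_H(u)$ by Eq.~\eqref{eq:E13}.

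The main obstacle is checking that Proposition~\ref{prop:E2} applies uniformly along the sequence. It needs $M_N>H_N-1$ and $u_N<H_N-1$ for all large $N$, and both follow from the limits. Its bounds also hold for every $\ell$ with constants that do not depend on $\ell$, so regularization commutes with the limit and no interchange of limits is needed. The case $u=0$ needs the separate witness of Proposition~\ref{prop:E2} and the $t=0$ code of Proposition~\ref{prop:D3}, both of which are already covered. For $u_N$ positive but tending to $0$, continuity of $g$ at $0$ gives the same conclusion.
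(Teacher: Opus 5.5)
Your proposal is correct and follows essentially the same route as the paper's proof. The paper sandwiches the five quantities between the $M$-independent lower bound of Proposition~\ref{prop:D3} (via Lemma~\ref{lem:E3} and $R\leq E_R$) and the $\ell$-independent separable-comparison bound $d(u\|w_M)$ of Proposition~\ref{prop:E2}. It then handles the sequence version and the case $u=0$ by continuity of the endpoint expressions, $w_{M_N}(H_N)\to H-1$, and the zero-noise witness.
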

\begin{proof}
Combine Proposition~D.3 and Proposition~\ref{prop:E2} with Lemma~\ref{lem:E3}. After dividing Eq.~\eqref{eq:E6} by $\ell$, both endpoints are independent of $\ell$ and converge to $F_H(u)$. Taking the infimum over $\ell$ therefore preserves the bounds. The hypotheses hold eventually along the stated sequences, and $w_{M_N}(H_N)\to H-1$. Continuity of the endpoint expressions completes the proof, including $u=0$ by the zero-noise witness.
\end{proof}

\subsection{Channel-state consequence}
The preceding uniform bounds also identify the high-energy limits of all five channel-state entanglement quantities.
\begin{appcor}[Common channel-state limits]\label{cor:E5}
For each channel in Theorems~1--3, away from the noiseless identity endpoint, the following energy-constrained state limits exist:
\begin{equation}\label{eq:E17}
\begin{aligned}
\lim_{N\to\infty}D_{\mathrm{PPT}}(\chi_N^{\mathcal N})
&=\lim_{N\to\infty}R^\infty(\chi_N^{\mathcal N})
=\lim_{N\to\infty}E_R^\infty(\chi_N^{\mathcal N})\\
&=\lim_{N\to\infty}R(\chi_N^{\mathcal N})
=\lim_{N\to\infty}E_R(\chi_N^{\mathcal N})\\
&=Q_{\mathrm{PPT}}(\mathcal N).
\end{aligned}
\end{equation}
\end{appcor}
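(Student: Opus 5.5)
The plan is to reduce each channel family to the normal form $\rho_{H,M,u}$ and then apply Corollary~\ref{cor:E4} along a sequence $H_N\to H$, $u_N\to u$, $M_N\to\infty$. Two things then remain: identify the limit $F_H(u)$ with the PLOB expression, and use Theorems~1--3 to identify that expression with $Q_{\mathrm{PPT}}(\mathcal N)$.

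\textbf{Thermal loss, below the entanglement-breaking threshold.} Lemma~\ref{lem:B1} gives $\chi_N=\rho_{H_N,M_N,u_N}$ for $N>n$, with $H_N\to G=(1-\eta)^{-1}$, $u_N\to n$ and $M_N\to\infty$. Here $\theta=\eta$ and $H-1=\eta/(1-\eta)$, so the hypothesis $u<H-1$ of Corollary~\ref{cor:E4} is exactly the condition $n<\eta/(1-\eta)$. The five limits therefore exist and all equal
\[
F_G(n)=-\log_2(1-\eta)+n\log_2(1/\eta)-g(n),
\]
which is $B_{\mathrm{loss}}$ and hence $Q_{\mathrm{PPT}}$ by Theorem~\ref{thm:loss}.

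\textbf{Thermal loss, entanglement-breaking region.} For $n\geq\eta/(1-\eta)$ the channel is entanglement breaking, so every $\chi_N$ is separable. Then $E_R(\chi_N)=0$, and since $0\leq D_{\mathrm{PPT}}\leq R^\infty\leq R\leq E_R$ and $R^\infty\leq E_R^\infty\leq E_R$, all five quantities vanish. This matches $Q_{\mathrm{PPT}}=0$.

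\textbf{Amplifier and additive noise.} For the amplifier I would use the analogous normal form from Appendix~\ref{app:G}, in which the roles of the two thermal modes are interchanged. It should give $H\to\kappa/(\kappa-1)$ and $u_N\to n$, so that $\theta=1/\kappa$ and $H-1=1/(\kappa-1)$. The limit is then
\[
\log_2\frac{\kappa}{\kappa-1}+n\log_2\kappa-g(n),
\]
which equals the expression in Theorem~\ref{thm:amp}. The case $n\geq1/(\kappa-1)$ is handled by separability, as for loss.

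For additive noise ($0<\xi<1$) the state $\chi_N$ is again a centred Gaussian state with scalar blocks $a=2N+1$, $b=2N+2\xi+1$ and $c=2\sqrt{N(N+1)}$. I would diagonalize it directly with the invariants in the proof of Lemma~\ref{lem:B1}, using $a-b=2(M-u)$ and $\sqrt{(a+b)^2-4c^2}=2(M+u+1)$. Here $a<b$, so the squeezer and the error mode should be assigned to the opposite side, as in the amplifier case. The asymptotics I expect are $H_N\sim N/\xi\to\infty$ and $u_N\to\xi/(1-\xi)$ or a similar finite value. Because $H_N$ diverges, Corollary~\ref{cor:E4} does not apply directly.

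\textbf{Main obstacle.} The hardest step is this additive-noise case. Two routes are available. The first is to rerun the uniform sandwich of Proposition~\ref{prop:E2} with explicit $N$-dependent parameters and check that both endpoints converge to $(\xi-1)/\ln2-\log_2\xi$. The second, which I would try first, follows Appendix~\ref{app:H}: $\mathcal Z_\xi$ is the large-$\kappa$ limit of a local attenuation composed with an amplifier of matching added noise. Monotonicity of $R$, $E_R$ and $D_{\mathrm{PPT}}$ under the local channel then sandwiches the additive-noise quantities between amplifier/loss-type limits that converge to the same value. In both routes the essential point is that the bounds hold uniformly before regularization, so the limits $N\to\infty$ and $\ell\to\infty$ can be taken without exchanging uncontrolled limits. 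Finally, the identity endpoint $\xi=0$ is excluded, because there $F$ diverges.
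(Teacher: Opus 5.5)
Your thermal-loss, amplifier and entanglement-breaking cases are the paper's argument: Corollary~E.4 applied to Lemma~B.1 and its relabeling in Appendix~G, plus separability of entanglement-breaking channel states. The gap is the additive-noise case, where neither route works as you state it.

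\emph{Your first route fails.} The invariants give $M_N+u_N+1=\sqrt{4\xi N+(1+\xi)^2}$ and $|M_N-u_N|=\xi$. Hence both occupations grow like $\sqrt{\xi N}$, nothing stays finite, and $H_N\sim\sqrt{N/(4\xi)}$ rather than $N/\xi$. Proposition~E.2 cannot be rerun with these parameters. With $u_N$ the smaller occupation, $u_N/H_N\to2\xi$, so the witness condition $u_N<H_N-1$ survives only for $\xi\le1/2$. There $F_{H_N}(u_N)\to-1-\log_2\xi+(2\xi-1)/\ln2$, which falls short of $(\xi-1)/\ln2-\log_2\xi$ by $1-\xi/\ln2>0$. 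The comparison $\sigma_M$ instead needs $M_N>H_N-1$, i.e.\ $\xi\ge1/2$. This is the singularity noted at the start of Appendix~H.

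\emph{Your second route has the right idea but the wrong regime, and it is not a limit.} Composing a gain-$\kappa$ amplifier with an attenuation by $1/\kappa$, in either order, adds at least $1-1/\kappa$ photons. Such a composition can therefore equal $\mathcal Z_\xi$ only for $\kappa\le1/(1-\xi)$, never at large $\kappa$. A genuine limit argument would also need continuity of the measures in the channel, which you do not have. What works are two exact identities at fixed $\eta<1$ near one, using the same local attenuation on Bob's side.

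The first is $\mathcal L_{\eta,0}\circ\mathcal Z_\xi=\mathcal L_{\eta,n_\eta}$ with $n_\eta=\eta\xi/(1-\eta)$. It gives $D_{\mathrm{PPT}}(\chi_N^\xi)\ge D_{\mathrm{PPT}}(\chi_N^{\mathcal L_{\eta,n_\eta}})$.

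The second is $\mathcal Z_\xi=\mathcal L_{\eta,0}\circ\mathcal A_{1/\eta,n'}$ with $n'=\xi/(1-\eta)-1\ge0$ for $\eta\ge1-\xi$. It gives $E_R(\chi_N^\xi)\le E_R(\chi_N^{\mathcal A_{1/\eta,n'}})$.

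Take $N\to\infty$ at fixed $\eta$ using the loss and amplifier cases, then let $\eta\uparrow1$, i.e.\ $\kappa=1/\eta\downarrow1$. Both $B_{\mathrm{loss}}(\eta,n_\eta)$ and $(n'+1)\log_2(1/\eta)-\log_2[(1-\eta)/\eta]-g(n')$ tend to $(\xi-1)/\ln2-\log_2\xi$. The chains $D_{\mathrm{PPT}}\le R^\infty\le R\le E_R$ and $R^\infty\le E_R^\infty\le E_R$ then squeeze all five quantities.

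The paper uses the same attenuation for the lower bound but obtains the upper bound differently. It builds an explicit separable Gaussian comparison $\sigma_N^\xi$ with correlation $\sqrt{(a-1)(b-1)}$, computes its relative entropy to $\chi_N^\xi$ in closed form, and expands it in $N$. The corrected amplifier decomposition would replace that computation with data processing and reuse Appendix~G. Either way, this matching upper bound is the step your proposal still has to supply.
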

\begin{proof}
For thermal loss, apply Corollary~\ref{cor:E4} to the exact parameters in Lemma~B.1; their limits give the rate in Theorem~1. The relabeled diagonalization in Appendix~\ref{app:G} proves the amplifier case. For additive noise, the local-attenuation lower bound and the separable-comparison upper bound in Appendix~\ref{app:H} squeeze all five quantities to the value in Theorem~3. Entanglement-breaking channel states are separable, so every quantity is zero there.
\end{proof}
All states in these limits are normalized physical states. No infinite-energy density operator is asserted. The energy-constrained theorems identify $D_{\mathrm{PPT}},R^\infty$ and $E_R^\infty$ without requiring single-copy equality of $R$ and $E_R$.

\section{Adaptive thermal-loss converse}\label{app:F}
This appendix completes the thermal-loss theorem by controlling teleportation simulation and taking the energy and block limits in a fixed order. The simulator consumes one bipartite resource state per channel use. Once every use is replaced, the entire adaptive protocol is a single free map acting on the resource copies. The comparison-state bound then limits the output entanglement, irrespective of the memories and classical feedback inside that free map.

\subsection{Uniform simulation with finite-resource energy}
For a difference of channels $\Delta$, define its diamond norm by $\|\Delta\|_\diamond=\sup_{R,\omega}\|(\mathrm{id}_R\otimes\Delta)(\omega)\|_1$, allowing arbitrary reference systems and input states. Continuous-variable teleportation with resource $|\psi_N\rangle\langle\psi_N|$ implements additive noise with mean added photon number
\begin{equation}\label{eq:F1}
\delta_N=2N+1-2\sqrt{N(N+1)}\longrightarrow0.
\end{equation}
The protocol measures commuting quadrature differences, communicates the outcomes and displaces Bob's mode, so it is LOCC~\cite{r10}. With resource $\chi_N$ and displacement gain $\sqrt\eta$, it simulates
\begin{equation}\label{eq:F2}
\mathcal L_{\eta,n}^{(N)}=\mathcal L_{\eta,n}\circ\mathcal Z_{\delta_N}
=\mathcal L_{\eta,n+\eta\delta_N/(1-\eta)}.
\end{equation}
This follows by propagating the random displacement through the beam splitter, or by comparing the action on displacement operators~\cite{r11,r43}.

Both channels in Eq.~\eqref{eq:F2} use the same beam-splitter dilation; only the environmental thermal state differs. Contractivity under the dilation and partial trace gives
\begin{equation}\label{eq:F3}
\begin{aligned}
\epsilon_N&:=\|\mathcal L_{\eta,n}^{(N)}-\mathcal L_{\eta,n}\|_\diamond\\
&\leq\|\tau_{n+\eta\delta_N/(1-\eta)}-\tau_n\|_1\longrightarrow0.
\end{aligned}
\end{equation}
The geometric probabilities in Eq.~\eqref{eq:A6} prove convergence, including $n=0$. Here $N$ is the simulator's resource energy. The diamond bound is uniform over input energies and reference systems~\cite{r43}.

Replace the uses of the channel in a fixed $m$-use adaptive protocol by this simulation. All teleportation steps are PPT, so they combine with the original operations into a PPT map $\Lambda_{m,N}$. If $\omega_m$ and $\omega_{m,N}$ denote the true and simulated outputs, then
\begin{equation}\label{eq:F4}
\omega_{m,N}=\Lambda_{m,N}(\chi_N^{\otimes m}),~~
\|\omega_{m,N}-\omega_m\|_1\leq m\epsilon_N.
\end{equation}
The second inequality follows by replacing one use at a time and applying trace-norm contractivity. Each replacement contributes at most $\epsilon_N$, even for the correlated input at that stage. Subsequent channels and adaptive operations cannot increase that contribution, so the $m$ contributions add. Classical branches are already included as registers.

\subsection{Completion of the capacity proof}
\begin{proof}[Proof of Theorem 1]
First suppose $n<G-1$. For sufficiently large $N$, Proposition~\ref{prop:E2} applies to the parameters of Lemma~B.1. Put
\begin{equation}\label{eq:F5}
b_N^{\mathrm{ref}}=d(u_N\|w_{M_N}(H_N))\longrightarrow B_{\mathrm{loss}}(\eta,n).
\end{equation}
By Appendix~\ref{app:A}, it suffices to bound distribution protocols. Fix one, with target dimension $K_m$, output $\omega_m$ and overlap $f_m=\operatorname{Tr}(\Phi_{K_m}\omega_m)$ tending to one. Write $f_{m,N}=\operatorname{Tr}(\Phi_{K_m}\omega_{m,N})$ for the overlap of the simulated output. The output bound, monotonicity and the product comparison give
\begin{equation}\label{eq:F6}
\begin{aligned}
f_{m,N}\log_2K_m-h_2(f_{m,N})&\leq R(\omega_{m,N})\\
&\leq R(\chi_N^{\otimes m})\leq mb_N^{\mathrm{ref}}.
\end{aligned}
\end{equation}
For this fixed protocol and fixed $m$, first let $N\to\infty$. The target dimension $K_m$ is fixed in this step, so the factor $\log_2K_m$ does not obstruct convergence. Equation~\eqref{eq:F4} gives $f_{m,N}\to f_m$, and hence
\begin{equation}\label{eq:F7}
f_m\log_2K_m-h_2(f_m)\leq mB_{\mathrm{loss}}(\eta,n).
\end{equation}
Only now let $m\to\infty$ along a successful code sequence. Since $f_m\to1$, its rate cannot exceed $B_{\mathrm{loss}}$. The argument uses Rains monotonicity; the separable state merely evaluates an upper bound on that quantity.

For achievability, fix finite $N$, prepare $|\psi_N\rangle$ locally and send its second mode through each use. This produces independent copies of $\chi_N$. Proposition~D.3 gives rate at least $F_{H_N}(u_N)$, which tends to $B_{\mathrm{loss}}$. Teleportation then converts the distilled ebits into a code at the same rate (Appendix~\ref{app:A}). For any strict target rate below this limit, choose sufficiently large finite $N$, then a threshold ratio above $u_N$ and finally take the block length to infinity. Corollary~\ref{cor:E4} gives the channel-state limits of Corollary~\ref{cor:E5} for thermal loss.

If $n\geq\eta/(1-\eta)$, the channel is entanglement breaking~\cite{r32}. Its measure-and-prepare implementation is a free PPT operation. The full protocol therefore preserves PPT, and its overlap with $\Phi_K$ is at most $1/K$ by trace duality. No positive rate is possible. Its channel states are separable, so all five entanglement quantities vanish at every finite energy.
\end{proof}

\section{Thermal amplifiers}\label{app:G}
This appendix proves the amplifier theorem by interchanging the two mode labels in the squeezed-thermal analysis.

For $\chi_N^{\mathcal A_{\kappa,n}}$, the covariance entries are
\begin{align}
a_N&=2N+1,\label{eq:G1}\\
b_N&=2\kappa N+2(\kappa-1)(n+1)+1,\label{eq:G2}\\
c_N&=2\sqrt{\kappa N(N+1)}.\label{eq:G3}
\end{align}
Apply Lemma~B.1 with the roles of $a_N$ and $b_N$ exchanged. This is a relabeling of the normal form. Partial transpose on Alice and partial transpose on Bob differ by full transpose, which preserves the operator norm. Hence the PPT norm estimates are unchanged. The diverging thermal occupation is now associated with Bob's mode. Expansion gives
\begin{align}
d_N&=2(\kappa-1)N+2(\kappa+1)n+2\kappa+O(N^{-1}),\label{eq:G4}\\
H_N&\longrightarrow\frac{\kappa}{\kappa-1},\quad
\theta_N\longrightarrow\frac1\kappa,\quad u_N\longrightarrow n,\label{eq:G5}\\
M_N&=(\kappa-1)N+\kappa n+\kappa-1+O(N^{-1}).\label{eq:G6}
\end{align}
The threshold protocol and uniform sandwich remain valid after this relabeling. The relevant conditions depend only on the two thermal occupations and the squeezing gain. In particular, the finite error occupation again tends to the environmental noise $n$, and the threshold condition $u<H-1$ becomes $n<1/(\kappa-1)$, exactly the non-entanglement-breaking condition. Their common limit is the first line of Eq.~\eqref{eq:amplifier-capacity}.

Teleportation with gain $\sqrt\kappa$ and resource $\chi_N^{\mathcal A_{\kappa,n}}$ simulates
\begin{equation}\label{eq:G7}
\mathcal A_{\kappa,n}\circ\mathcal Z_{\delta_N}
=\mathcal A_{\kappa,n+\kappa\delta_N/(\kappa-1)}.
\end{equation}
As for thermal loss, a common dilation bounds the diamond distance by the trace distance between the environmental thermal states. It tends to zero. The proof in Appendix~\ref{app:F} then gives both the adaptive converse and achievability with a fixed mean input energy. For $n\geq1/(\kappa-1)$ the channel is entanglement breaking, so the capacity and all channel-state entanglement quantities vanish. This proves Theorem~2 and the amplifier part of Corollary~\ref{cor:E5}.

\section{Additive Gaussian noise}\label{app:H}
This appendix proves the additive-noise theorem using local attenuation for achievability and a separate Gaussian comparison state for the converse. The direct squeezed-thermal limit is singular here: both normal-mode occupations grow. A small amount of local loss converts the problem to the already controlled thermal-loss family. The converse can then be evaluated directly on the original additive-noise resource.

\subsection{Achievability by local attenuation}
For any $0<\eta<1$, Bob can apply pure loss after $\mathcal Z_\xi$. The resulting channel obeys
\begin{equation}\label{eq:H1}
\mathcal L_{\eta,0}\circ\mathcal Z_\xi=\mathcal L_{\eta,n_\eta},~~
n_\eta=\frac{\eta\xi}{1-\eta}.
\end{equation}
Both sides take $V$ to $\eta V+(1-\eta+2\eta\xi)I_2$, and their displacement characteristic functions agree. If $\xi<1$, then $n_\eta<\eta/(1-\eta)$, so Theorem~1 gives
\begin{equation}\label{eq:H2}
\begin{aligned}
Q_{\mathrm{PPT}}(\mathcal Z_\xi)\geq{}&-\log_2(1-\eta)-g\!\left(\frac{\eta\xi}{1-\eta}\right)\\
&+\frac{\eta\xi}{1-\eta}\log_2(1/\eta).
\end{aligned}
\end{equation}
As $x\to\infty$, $g(x)=\log_2x+1/\ln2+O(x^{-1})$, and $\eta\ln(1/\eta)/(1-\eta)\to1$. Letting $\eta\uparrow1$ in this numerical bound yields $-\log_2\xi+(\xi-1)/\ln2$.

Operationally, first fix $\eta<1$ close enough to one, then choose finite input energy, and only then let the number of channel uses grow.

\subsection{A separable comparison state}
Write $\chi_N^\xi=\chi_N^{\mathcal Z_\xi}$. Its covariance entries are
\begin{equation}\label{eq:H3}
a=2N+1,\quad b=2N+2\xi+1,\quad c=2\sqrt{N(N+1)}.
\end{equation}
Choose a Gaussian state $\sigma_N^\xi$ with the same $a,b$ and correlation
\begin{equation}\label{eq:H4}
c_s=\sqrt{(a-1)(b-1)}=2\sqrt{N(N+\xi)}.
\end{equation}
Its covariance minus $I_4$ is positive. Thus it is a mixture of coherent product states and is separable.

Set $s=\sqrt{4\xi N+(1+\xi)^2}$ and $s_s=\sqrt{4N+(1+\xi)^2}$. The thermal occupations in the diagonal form of the resource are $x_1,x_2$, and those of the comparison are $y_1,y_2$, where
\begin{align}
x_1&=(s-\xi-1)/2,\quad x_2=(s+\xi-1)/2,\label{eq:H5}\\
y_1&=(s_s-\xi-1)/2,\quad y_2=(s_s+\xi-1)/2.\label{eq:H6}
\end{align}
For $N>0$, both $y_j$ are positive. The comparison state's squeezing gain is $H_s=(a+b+2s_s)/(4s_s)$. In the modes that diagonalize that state, the resource's mean occupations $e_1,e_2$ satisfy
\begin{align}
2e_1+1&=H_sa+(H_s-1)b-2\sqrt{H_s(H_s-1)}\,c,\label{eq:H7}\\
2e_2+1&=(H_s-1)a+H_sb-2\sqrt{H_s(H_s-1)}\,c.\label{eq:H8}
\end{align}
Expanding the logarithm of the comparison state's thermal factors gives the exact relative entropy
\begin{equation}\label{eq:H9}
\begin{aligned}
D(\chi_N^\xi\|\sigma_N^\xi)={}&-g(x_1)-g(x_2)+\log_2[(y_1+1)(y_2+1)]\\
&+e_1\log_2(1+1/y_1)+e_2\log_2(1+1/y_2).
\end{aligned}
\end{equation}
The logarithm of each comparison thermal factor is affine in its number operator, so its expectation depends only on $e_1,e_2$. The resource entropy uses its own thermal eigenvalues $x_1,x_2$. Thus the calculation applies even though the two states need not share an eigenbasis.

At fixed $\xi>0$, their asymptotics are
\begin{align}
g(x_1)+g(x_2)&=\log_2(\xi N)+2/\ln2+o(1),\label{eq:H10}\\
\log_2[(y_1+1)(y_2+1)]&=\log_2N+o(1),\label{eq:H11}\\
y_j&=\sqrt N+O(1),\label{eq:H12}\\
e_j&=\frac{1+\xi}{2}\sqrt N+O(1).\label{eq:H13}
\end{align}
For the last line, one may use $e_1-e_2=-\xi$ and
\begin{equation}\label{eq:H14}
e_1+e_2=\frac{(a+b)^2-4c_sc}{4s_s}-1.
\end{equation}
Substitution in Eq.~\eqref{eq:H9} gives
\begin{equation}\label{eq:H15}
D(\chi_N^\xi\|\sigma_N^\xi)\longrightarrow-\log_2\xi+\frac{\xi-1}{\ln2}.
\end{equation}

\subsection{Converse and state limits}
Teleportation with $\chi_N^\xi$ simulates $\mathcal Z_{\xi+\delta_N}$. For $\xi>0$, its diamond distance from $\mathcal Z_\xi$ is bounded by the $L^1$ distance of their Gaussian displacement densities in Eq.~\eqref{eq:A9}. This distance tends to zero, by convergence on a compact disk and a bound on the Gaussian tails. Hence the stretching estimate of Eq.~\eqref{eq:F4} applies.

The product comparison state gives, for every $m$,
\begin{equation}\label{eq:H16}
R((\chi_N^\xi)^{\otimes m})\leq mD(\chi_N^\xi\|\sigma_N^\xi).
\end{equation}
The output-fidelity argument from Appendix~\ref{app:F}, followed by Eq.~\eqref{eq:H15}, proves the converse for $0<\xi<1$.

For the state limits, apply the local attenuation of Eq.~\eqref{eq:H1} to Bob's half of $\chi_N^\xi$. It produces the thermal-loss channel state at the same $N$. Therefore, for fixed $\eta$,
\begin{equation}\label{eq:H17}
\liminf_{N\to\infty}D_{\mathrm{PPT}}(\chi_N^\xi)\geq B_{\mathrm{loss}}(\eta,n_\eta).
\end{equation}
Let $\eta\uparrow1$ and use Eq.~\eqref{eq:H15} as the common upper bound. Together with $D_{\mathrm{PPT}}\leq R^\infty\leq E_R^\infty$ and the single-copy bounds, this proves every limit in Corollary~\ref{cor:E5} for additive noise.

For $\xi\geq1$ the channel is entanglement breaking~\cite{r32}, so capacity and channel-state entanglement vanish. At $\xi=0$, arbitrary finite-dimensional subspaces can be transmitted noiselessly without an energy constraint. This proves Theorem~3.

\section{The one-photon branch}\label{app:I}
This appendix gives the partial-transpose spectrum and distillation rate for a supplied resource containing exactly one error photon. Rates are per supplied branch copy.

For $G>1$, let $\eta=(G-1)/G$. Define the branch projection and energy-constrained branch state by
\begin{equation}\label{eq:I1}
\begin{aligned}
P_j&=S_G(I\otimes|j\rangle\langle j|)S_G^\dagger,\\
\rho_M^{(j)}&=S_G(\tau_M\otimes|j\rangle\langle j|)S_G^\dagger.
\end{aligned}
\end{equation}
These branches have orthogonal supports, and
\begin{equation}\label{eq:I2}
\rho_{G,M,n}=\sum_{j=0}^\infty\frac{n^j}{(1+n)^{j+1}}\rho_M^{(j)}.
\end{equation}
The projection $P_1$ selects exactly one photon. It differs from the threshold $Q_{1,1}=P_0+P_1$.

\begin{appprop}[One-photon norm and rate]\label{prop:I1}
The eigenvalues of $P_1^\Gamma$, indexed by the occupation $j$ of mode $v$, are
\begin{align}
\lambda_0&=\eta(1-\eta),\label{eq:I3}\\
\lambda_j&=(1-\eta)(-\eta)^{j-1}[j(1-\eta^2)-\eta^2],\quad j\geq1.\label{eq:I4}
\end{align}
Each is repeated over the occupation of the orthogonal mode. For $\eta\geq1/2$,
\begin{equation}\label{eq:I5}
\|P_1^\Gamma\|_\infty=\eta(1-\eta)\max\{1,2-3\eta^2\}.
\end{equation}
For $\eta\geq1/\sqrt3$,
\begin{equation}\label{eq:I6}
\lim_{M\to\infty}D_{\mathrm{PPT}}(\rho_M^{(1)})=\log_2G+\log_2(1/\eta).
\end{equation}
\end{appprop}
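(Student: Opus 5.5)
The plan is to read the spectrum of $P_1^\Gamma$ off the generating identity of Lemma~\ref{lem:C1}, then get the rate from a one-photon version of the threshold code together with the separable comparison of Proposition~\ref{prop:E2}.

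\emph{Spectrum.} With $H=G$ we have $\theta=(G-1)/G=\eta$ and $1/G=1-\eta$. Since $S_G(I\otimes z^{d^\dagger d})S_G^\dagger=\sum_j z^jP_j$, the projection is the coefficient $P_1=[z^1]A_G(z)$. Lemma~\ref{lem:C1}, together with its operator-norm analyticity on compact subdisks, then gives
\[
P_1^\Gamma=(1-\eta)\,[z^1]\,\frac{w(z)^{v^\dagger v}}{1-\eta z}.
\]
So $P_1^\Gamma$ is diagonal in the occupation $j$ of $v$, and each eigenvalue is repeated over the occupation of the orthogonal mode. The eigenvalues are:
\begin{itemize}
\item for $j=0$, the coefficient of $z$ in $(1-\eta z)^{-1}$ is $\eta$, giving Eq.~\eqref{eq:I3};
\item for $j\geq1$, write $w^j/(1-\eta z)=(z-\eta)^j(1-\eta z)^{-(j+1)}$. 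Its linear coefficient is $j(-\eta)^{j-1}+(-\eta)^j(j+1)\eta=(-\eta)^{j-1}[j(1-\eta^2)-\eta^2]$, giving Eq.~\eqref{eq:I4}.
\end{itemize}

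\emph{Norm.} The aim is to show that, for $\eta\geq1/2$,
\[
\sup_{j\geq0}\eta^{j-1}\bigl|j(1-\eta^2)-\eta^2\bigr|=\eta\max\{1,2-3\eta^2\},
\]
where the $j=0$ entry is read as $\eta$. The value $\eta\cdot 1$ comes from $j=0$ and the value $\eta(2-3\eta^2)$ from $j=2$. The remaining cases are as follows.
\begin{itemize}
\item For $j=1$ the value is $|1-2\eta^2|$. When $\eta^2\leq1/2$, the inequality $1-2\eta^2\leq\eta(2-3\eta^2)$ reduces on $[1/2,1/\sqrt2]$ to a cubic inequality that is checked at the endpoints and by monotonicity. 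When $\eta^2>1/2$, we have $2\eta^2-1\leq\eta$ because $\eta\leq1$.
\item For $j\geq3$, when the bracket is negative its absolute value is at most $\eta^2$, so the term is at most $\eta^{j+1}\leq\eta$.
\item For $j\geq3$ with a positive bracket, compare consecutive terms through the ratio $\eta[(j+1)(1-\eta^2)-\eta^2]/[j(1-\eta^2)-\eta^2]$. The plan is to show this ratio is at most $1$ for $j\geq2$ when $\eta\geq1/2$, so the terms decrease from $j=2$ onward.
\end{itemize}
This elementary maximisation is the step I expect to be the main obstacle. The ratio bound holds once $j(1-\eta)(1-\eta^2)$ dominates suitably; near $\eta=1/2$ with small $j$ it has to be verified by hand, and if the clean ratio argument fails I would fall back on the crude bound $\eta^{j-1}j\leq\eta\cdot 2\eta$ for $j\geq3$.

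\emph{Rate for $\eta\geq1/\sqrt3$.} In this range $2-3\eta^2\leq1$, so $\|P_1^\Gamma\|_\infty=\eta(1-\eta)$.
\begin{itemize}
\item Achievability: on $L$ copies of $\rho_M^{(1)}$, use the test $Q=P_1^{\otimes L}$. Its partial-transpose norm is multiplicative, so it equals $[\eta(1-\eta)]^L$, and its fidelity on the resource is exactly one. Lemma~\ref{lem:D1} with $K_L=\lfloor[\eta(1-\eta)]^{-L}\rfloor$ gives a PPT distillation map at rate $-\log_2[\eta(1-\eta)]=\log_2G+\log_2(1/\eta)$ for every $M$.
\item Converse: use the separable comparison $\sigma_M$ of Proposition~\ref{prop:E2}. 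By unitary invariance,
\[
D(\rho_M^{(1)}\|\sigma_M)=D(|1\rangle\langle1|\,\|\,\tau_{w_M})=\log_2(w_M+1)+\log_2\frac{w_M+1}{w_M}.
\]
This tends to $\log_2G+\log_2(G/(G-1))$ since $w_M\to G-1$, and $G/(G-1)=1/\eta$. With Lemma~\ref{lem:E3} this gives $D_{\mathrm{PPT}}\leq R^\infty\leq E_R\leq D(\rho_M^{(1)}\|\sigma_M)$.
\end{itemize}
Squeezing the lower and upper bounds as $M\to\infty$ proves Eq.~\eqref{eq:I6}.
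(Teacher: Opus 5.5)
Your spectrum computation is correct. So is the rate argument: the test $P_1^{\otimes L}$ with Lemma~\ref{lem:D1}, and the comparison $\sigma_M$ giving $D(|1\rangle\langle1|\,\|\,\tau_{w_M})$. Both coincide with the paper's proof.

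\textbf{The gap: the norm step for $j\geq3$ with positive bracket.} Your consecutive-term ratio $\eta[(j+1)(1-\eta^2)-\eta^2]/[j(1-\eta^2)-\eta^2]$ is at most one if and only if $j(1-\eta^2)\geq\eta+2\eta^2$. So the terms do not decrease from $j=2$ onward.
\begin{itemize}
\item Already at $\eta=0.6$ your normalized terms $\eta^{j-1}|j(1-\eta^2)-\eta^2|$ are $0.552$ at $j=2$ and $0.5616$ at $j=3$.
\item As $\eta\to1$ the terms increase over a range of $j$ of length of order $(1-\eta)^{-1}$. At $\eta=0.9$ they increase from $j=5$ up to $j=14$.
\item So the difficulty lies at large $\eta$, not near $\eta=1/2$ as you suggest.
\item Your fallback $\eta^{j-1}j\leq2\eta^2$ for $j\geq3$ is false already at $j=3$, where it reads $3\eta^2\leq2\eta^2$.
\end{itemize}
As written, the proposal therefore does not show $|\lambda_j|\leq\eta(1-\eta)\max\{1,2-3\eta^2\}$ in this case, which is the heart of Eq.~\eqref{eq:I5}.

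\textbf{The missing idea.} Compare each $j\geq3$ directly with $\lambda_0$, uniformly in $\eta$, rather than comparing consecutive $j$ at fixed $\eta$. Since
\[
|\lambda_j|/\lambda_0=\eta^{j-2}|j-(j+1)\eta^2|,
\]
in the positive-bracket region maximize $\eta^{j-2}(j-(j+1)\eta^2)$ over $\eta\in(0,1)$ at fixed $j$.
\begin{itemize}
\item The maximum occurs at $\eta^2=(j-2)/(j+1)$ and equals $2[(j-2)/(j+1)]^{(j-2)/2}$.
\item This is at most one because $(1+3/k)^k\geq4$ for $k=j-2\geq1$: the left side increases in $k$ and equals $4$ at $k=1$.
\end{itemize}
Hence $|\lambda_j|\leq\lambda_0$ for every $j\geq3$ and every $\eta$. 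Together with your correct treatment of $j=0,1,2$ and of the negative-bracket case, this proves Eq.~\eqref{eq:I5}. It is exactly the paper's argument.
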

\begin{proof}
Differentiate Eq.~\eqref{eq:C3} at $z=0$. In the $j$th occupation sector the scalar function is $(1-\eta)(z-\eta)^j/(1-\eta z)^{j+1}$. Its derivative gives the stated eigenvalues.

Divide their absolute values by $\eta(1-\eta)$ and put $v_0=\eta^2$. The $j=0$ value is one; the $j=1$ value is $|1-2v_0|/\sqrt{v_0}\leq1$ on $1/4\leq v_0\leq1$; the $j=2$ value is $|2-3v_0|$. For $j\geq3$ the value is
\begin{equation}\label{eq:I7}
v_0^{(j-2)/2}|j-(j+1)v_0|.
\end{equation}
Where the bracket is negative, this increases to one at $v_0=1$. Where it is positive, its unrestricted maximum occurs at $v_0=(j-2)/(j+1)$ and equals $2[(j-2)/(j+1)]^{(j-2)/2}\leq1$. The last inequality is equivalent, for $k=j-2\geq1$, to $(1+3/k)^k\geq4$; the left-hand side increases from its value four at $k=1$. This proves Eq.~\eqref{eq:I5}.

On $(\rho_M^{(1)})^{\otimes L}$, the test $P_1^{\otimes L}$ has fidelity one. For $\eta\geq1/\sqrt3$, its partial-transpose norm is $[\eta(1-\eta)]^L$. Choosing $K_L=\lfloor[\eta(1-\eta)]^{-L}\rfloor$ proves the lower bound in Eq.~\eqref{eq:I6}, for every $M$.

For the upper bound use the separable comparison of Eq.~\eqref{eq:E10}, with $H=G$. Its relative entropy from $\rho_M^{(1)}$ is
\begin{equation}\label{eq:I8}
\log_2(w_M+1)+\log_2\frac{w_M+1}{w_M}
\longrightarrow\log_2G+\log_2(1/\eta).
\end{equation}
The product comparison bounds the regularized Rains quantity and hence $D_{\mathrm{PPT}}$, completing the proof.
\end{proof}
At $\eta=0.8$ the rate in Eq.~\eqref{eq:I6} is $\simeq2.644$ ebits per branch copy. At $\eta=0.5$, Eq.~\eqref{eq:I5} gives the rigorous lower bound $-\log_2(5/16)\simeq1.678$, whereas the comparison-state upper bound is two.

\section{Nonseparability and approximation bounds}\label{app:J}
We prove that every finite error-photon threshold map with target dimension $K\geq2$ is nonseparable, and obtain a positive energy-constrained distance from separable channels for each fixed map. The proof uses finite local probes and Schmidt-number witnesses~\cite{r45,r46}. This obstruction concerns each fixed map. It neither excludes another LOCC construction nor proves a gap between the optimal capacities.

\subsection{A filtered Schmidt-number condition}
The Schmidt rank of a pure bipartite vector is the number of nonzero coefficients in its Schmidt decomposition, equivalently the smallest number of product vectors whose sum is that vector. Let $\mathcal C_K$ be the trace-norm closed cone generated by projectors onto vectors of Schmidt rank at most $K$. A nonzero positive operator has Schmidt number at most $K$ when it belongs to $\mathcal C_K$~\cite{r45,r46}.

We call a channel separable when its extension by arbitrary local ancillary identities preserves separable states. On finite input restrictions this is equivalent to a separable Choi operator, or a product-Kraus decomposition. We include limits for which all such restrictions remain separable. This class contains LOCC and its operational closure.

Let $R_A,R_B$ be positive diagonal finite-rank operators in local number bases, normalized by $\operatorname{Tr}R_A^2=\operatorname{Tr}R_B^2=1$. With $D=R_A\otimes R_B$ and full input transpose $T$, define the finite reference operator
\begin{equation}\label{eq:J1}
F=DQ^T D.
\end{equation}
A Hermitian $Z$ is $K$-block-positive if $\langle\psi|Z|\psi\rangle\geq0$ for every vector of Schmidt rank at most $K$. Define the normalized witness functional
\begin{equation}\label{eq:J2}
\mathcal W_K(F)=\min_{\substack{Z=Z^\dagger,\ \|Z\|_\infty\leq1\\Z\ K\text{-block-positive}}}\operatorname{Tr}(ZF).
\end{equation}
The minimum exists in finite dimension. Since $Z=0$ is allowed, $\mathcal W_K(F)\leq0$. Separation of a point from a closed convex cone gives
\begin{equation}\label{eq:J3}
F\in\mathcal C_K\quad\Longleftrightarrow\quad\mathcal W_K(F)\geq0.
\end{equation}
Indeed, each generator of the cone has nonnegative expectation against every allowed witness. Conversely, normalizing a separating Hermitian functional gives a negative value in Eq.~\eqref{eq:J2}.

\begin{appprop}[Necessary condition]\label{prop:J1}
If $\mathcal R_{Q,K}$ is separable, every filter in Eq.~\eqref{eq:J1} satisfies $F\in\mathcal C_K$ and $\mathcal W_K(F)\geq0$.
\end{appprop}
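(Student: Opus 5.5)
The plan is to go through the Choi operator of $\mathcal R_{Q,K}$ on a finite local input restriction. If the channel is separable, then its Choi operator on any finite restriction is separable across the Alice/Bob cut, with Alice's input copy and output grouped together, and likewise for Bob. We then compress with the local filter $D=R_A\otimes R_B$ and contract the outputs against $\Phi_K$. Since both operations are local, Schmidt number cannot increase. The contraction should produce a positive multiple of $F=DQ^TD$, which would therefore lie in $\mathcal C_K$, and Eq.~\eqref{eq:J3} then gives $\mathcal W_K(F)\geq0$.

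First, fix finite-rank diagonal $R_A,R_B$ with supports in finite number cutoffs $P_A,P_B$, and restrict the channel to inputs supported on $P_A\otimes P_B$. By the definition of separable channel, this restriction has a product-Kraus decomposition, equivalently a separable Choi operator
\[
C=\sum_{i,k}|i\rangle\langle k|\otimes\mathcal R_{Q,K}(|i\rangle\langle k|).
\]
Here the indices range over product number states of $A_{\mathrm{in}}B_{\mathrm{in}}$, and the separability is with respect to $(A_{\mathrm{ref}}A_{\mathrm{out}}):(B_{\mathrm{ref}}B_{\mathrm{out}})$.

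Second, compute $C$ explicitly from Eq.~\eqref{eq:D2}:
\[
C=Q^T\otimes\Phi_K+(I-Q)^T\otimes\frac{I-\Phi_K}{K^2-1},
\]
with $T$ the full transpose on the cutoff space. Apply the local operator $D\otimes I$ to the reference factors and then project the outputs onto $\Phi_K$. We will use a rank-$K$ local-type contraction that keeps the output Schmidt number within $K$. The cleanest version is probably the following: write each separable term of $C$ as a product vector $\alpha\otimes\beta$, and pair it with $\langle\phi_K|$ on the output systems. This yields a vector on the reference systems $A_{\mathrm{ref}}B_{\mathrm{ref}}$ of the form $\sum_{j=1}^K(\langle j|\otimes I)\alpha\otimes(\langle j|\otimes I)\beta$, which has Schmidt rank at most $K$. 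The operator $(I\otimes\langle\phi_K|)C(I\otimes|\phi_K\rangle)$ therefore lies in $\mathcal C_K$, and so does its conjugate by $D$. Because $\Phi_K$ is orthogonal to $I-\Phi_K$, this operator equals $Q^T$, so the result is $DQ^TD=F\in\mathcal C_K$.

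The main obstacle is getting the infinite-dimensional bookkeeping right, including the transpose convention. We must check that restricting to $P_A\otimes P_B$ and using the transpose relative to the number basis gives exactly $DQ^TD$, with $P\,Q^T P=(PQP)^T$ since the filters are diagonal. We must also check that the limits included in the definition of separable channel keep each finite restriction separable, so that the closedness of $\mathcal C_K$ absorbs any trace-norm limits. After that, Eq.~\eqref{eq:J3} converts membership in $\mathcal C_K$ into $\mathcal W_K(F)\geq0$, which completes the proof.
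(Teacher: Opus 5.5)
Your proposal is correct and is essentially the paper's proof. The paper's filtered probes $|\Omega_A\rangle\otimes|\Omega_B\rangle$ are your cutoff Choi vector with the filter applied to the references first, so its $J_Q$ equals $(D\otimes I)C(D\otimes I)$; it then contracts with $\phi_K$ exactly as you do, obtaining vectors of Schmidt rank at most $K$ and hence $F\in\mathcal C_K$. The only loose point is your opening claim that both operations are local: the $\phi_K$ contraction is not local, which is why the bound is Schmidt rank $K$ rather than one, but your detailed step already handles this correctly.
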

\begin{proof}
Prepare two local entangled probes
\begin{align}
|\Omega_A\rangle&=\sum_i(R_A)_{ii}|i\rangle_{A_0}|i\rangle_A,\label{eq:J4}\\
|\Omega_B\rangle&=\sum_j(R_B)_{jj}|j\rangle_B|j\rangle_{B_0}.\label{eq:J5}
\end{align}
Their product is separable across the two laboratories. Applying the map to $AB$, with references placed first, gives
\begin{equation}\label{eq:J6}
J_Q=F\otimes\Phi_K+(R_A^2\otimes R_B^2-F)\otimes\frac{I-\Phi_K}{K^2-1}.
\end{equation}
For a separable map, $J_Q$ is separable across $A_0A':B'B_0$. Contract the output systems with $\langle\varphi_K|$ and $|\varphi_K\rangle$. The result is $F$. On a pure product term, write the local vectors as $\sum_{r=1}^K|a_r\rangle_{A_0}|r\rangle_{A'}$ and $\sum_{s=1}^K|s\rangle_{B'}|b_s\rangle_{B_0}$. The contraction is $K^{-1/2}\sum_{r=1}^K|a_r\rangle|b_r\rangle$, of Schmidt rank at most $K$. Mixtures and limits put $F$ in $\mathcal C_K$, and Eq.~\eqref{eq:J3} gives the witness condition. This contraction is a proof device, not a proposed free operation.
\end{proof}

\subsection{The range of a hard threshold}
\begin{applemma}[No finite Schmidt rank]\label{lem:J2}
For $L\geq1$, a finite integer $t\geq0$ and $0<\lambda<1$, let $\mathcal V_{\lambda,t}$ contain the normalizable Bargmann functions
\begin{equation}\label{eq:J7}
f(x,y)=e^{\lambda x\cdot y}\sum_{|j|\leq t}y^j f_j(x),
\end{equation}
where $x,y\in\mathbb C^L$ and all coefficients are entire. This is a closed subspace, and none of its nonzero vectors has finite Schmidt rank across $x:y$. For $\lambda=\sqrt\theta$, $\operatorname{Ran}Q_{L,t}\subseteq\mathcal V_{\lambda,t}$.
\end{applemma}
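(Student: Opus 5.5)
The range inclusion is the most direct claim, so I would prove it first using the disentangling identity~\eqref{eq:C5} in the Bargmann representation. As in Lemma~\ref{lem:C1}, $a,b$ act as $\partial_x,\partial_y$ and $a^\dagger,b^\dagger$ as multiplication by $x,y$ (multi-indices over the $L$ modes). Since $Q_{L,t}$ is a projection, its range is spanned, as a closed span, by the vectors $S_H^{\otimes L}(\phi\otimes|k\rangle)$, where $\phi$ is a number state of the $c$ modes and $|k|\leq t$. In Bargmann form such an input is $g(p)\,q^k$ with $g$ a monomial. I would apply the three factors of~\eqref{eq:C5} in turn:
\begin{itemize}
\item $e^{-\lambda ab}=e^{-\lambda\partial_p\cdot\partial_q}$ acts on $g(p)q^k$ as a finite sum $\sum_i c_i\,q^{k-i}\partial_p^{i}g(p)$, so the total degree in $q$ is still at most $t$;
\item $H^{-(N_a+N_b+1)/2}$ only rescales the variables;
\item $e^{\lambda a^\dagger b^\dagger}$ multiplies the result by $e^{\lambda x\cdot y}$.
\end{itemize}
The output therefore has the form~\eqref{eq:J7} with polynomial (hence entire) coefficients. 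Once $\mathcal V_{\lambda,t}$ is known to be closed, linear combinations and norm limits stay inside it, which gives $\operatorname{Ran}Q_{L,t}\subseteq\mathcal V_{\lambda,t}$.

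For closedness I would describe $\mathcal V_{\lambda,t}$ by infinitely many continuous linear conditions. A function $f$ lies in $\mathcal V_{\lambda,t}$ exactly when $e^{-\lambda x\cdot y}f$ is a polynomial of degree at most $t$ in $y$. Since $\partial_y^\alpha(e^{-\lambda x\cdot y}f)=e^{-\lambda x\cdot y}(\partial_y-\lambda x)^\alpha f$, this is equivalent to $(b-\lambda a^\dagger)^\alpha f=0$ for every multi-index with $|\alpha|=t+1$. To avoid domain questions I would state these conditions weakly:
\[
\langle (b^\dagger-\lambda a)^\alpha e,f\rangle=0
\]
for all finite number-state vectors $e$ and all $|\alpha|=t+1$. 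Each map $f\mapsto\langle (b^\dagger-\lambda a)^\alpha e,f\rangle$ is a bounded linear functional, because $(b^\dagger-\lambda a)^\alpha e$ is again a finite vector. So $\mathcal V_{\lambda,t}$ is an intersection of closed hyperplanes and is closed. It remains to check that the weak condition implies the pointwise identity for Bargmann functions. I would do this by pairing with coherent-state (reproducing-kernel) vectors, which are limits of finite vectors on which the adjoint action is explicit. This reconciliation of weak and classical derivatives on the Bargmann space is the step I expect to need the most care.

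For the Schmidt-rank claim, suppose $f\neq0$ has finite Schmidt rank $R$, so $f(x,y)=\sum_{r=1}^R g_r(x)h_r(y)$. Then for every fixed $x$ the function $f(x,\cdot)$ lies in the $R$-dimensional span of the $h_r$. Write $P_x(y)=\sum_{|j|\leq t}y^jf_j(x)$. Because $f\neq0$, some $f_j$ is a nonzero entire function, so $P_x\neq0$ for all $x$ outside a nowhere-dense analytic zero set. Now choose $R+1$ such points $x^{(1)},\dots,x^{(R+1)}$ with pairwise distinct values of $\lambda x^{(i)}$. The functions $y\mapsto e^{\lambda x^{(i)}\cdot y}P_{x^{(i)}}(y)$ are exponential polynomials with distinct frequency vectors and nonzero polynomial parts. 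By the standard independence lemma, they are linearly independent: restrict to a generic complex line in $y$ and use the growth or Wronskian argument. This contradicts the bound $R$ on the dimension of the span, so no nonzero vector of $\mathcal V_{\lambda,t}$ has finite Schmidt rank across $x:y$.
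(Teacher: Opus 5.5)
Your proposal is correct and follows essentially the paper's route: the range inclusion via the disentangled squeezer (equivalent to the paper's use of the kernel~\eqref{eq:C4}), closedness via the characterization $(\partial_y-\lambda x)^\alpha f=0$ for $|\alpha|=t+1$, and the rank contradiction via linear independence of exponential polynomials with distinct frequencies and nonzero polynomial parts. The differences are minor. The paper gets closedness directly from the fact that Bargmann-norm convergence implies locally uniform convergence of functions and all their derivatives, which avoids your weak/classical reconciliation; that step is in any case easy, since pairing with a monomial just reads off a Taylor coefficient. The paper also proves independence on a one-variable slice in $(x_1,y_1)$ using the annihilators $\prod_{i\ne j}(\partial_{y_1}-\lambda z_i)^{t+1}$, rather than restricting to a generic line.
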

\begin{proof}
Multiplication by $e^{-\lambda x\cdot y}$ turns Eq.~\eqref{eq:J7} into a polynomial in $y$ of total degree at most $t$. This is equivalent to the vanishing of all derivatives $\partial_y^\beta$ with $|\beta|=t+1$. Bargmann norm convergence implies uniform convergence on compact sets, also for derivatives. These equations persist under norm limits, proving closedness.

Suppose a nonzero function in this subspace had finite Schmidt rank $r$. Then $f(x,y)=\sum_{i=1}^r u_i(x)v_i(y)$ for entire $u_i,v_i$. Fix $x_2,\ldots,x_L$ and $y_2,\ldots,y_L$ so that the polynomial in Eq.~\eqref{eq:J7} is not identically zero in $x_1,y_1$. Such a choice exists. At least one polynomial coefficient is a nonzero entire function of $x_1$, whose zeros are discrete. For infinitely many distinct $x_1=z_j$, the slice in $y_1$ has the form
\begin{equation}\label{eq:J8}
e^{\lambda z_jy_1}p_j(y_1),~~ p_j\ne0,\quad\deg p_j\leq t,
\end{equation}
up to a common nonzero factor. All slices would lie in the span of $r$ fixed slices of the $v_i$.

But any finite collection of Eq.~\eqref{eq:J8} is linearly independent. To isolate its $j$th member in a putative relation, apply $\prod_{i\ne j}(\partial_{y_1}-\lambda z_i)^{t+1}$. It kills every other member. On the remaining polynomial, each factor has a nonzero constant term after the exponential is removed, and is invertible on polynomials of degree at most $t$. Thus the $j$th coefficient must vanish, contradicting finite rank.

Finally, apply the kernel Eq.~\eqref{eq:C4} to an arbitrary useful-mode function and an error-mode monomial of total degree at most $t$. The result is $e^{\lambda x\cdot y}$ times a polynomial in $y$ of that degree, with entire coefficients in $x$. Finite sums and norm limits prove the range inclusion.
\end{proof}
A state in $\mathcal C_K$ admits a probability-measure decomposition into pure states of Schmidt rank at most $K$~\cite[Proposition~3A]{r46}. This decomposition need not be countable. If the state is supported in a closed subspace $\mathcal V$, almost every vector in the measure belongs to $\mathcal V$: integrate the nonnegative quantity $\langle\psi|\Pi_{\mathcal V^\perp}|\psi\rangle$. It follows that no nonzero state supported in $\mathcal V_{\lambda,t}$ belongs to any $\mathcal C_K$.

For completeness, this measure statement can also be obtained from the closed-cone definition. Approximate a state in $\mathcal C_K$ by finite ensembles of rank-at-most-$K$ vectors. Trace-norm convergence of the barycenters makes the measures tight. To see this, choose increasing finite-rank projections with uniformly small expected tails. Markov's inequality confines all but arbitrarily small measure to vectors with uniformly vanishing tails; this set is norm compact in projective Hilbert space. A weakly convergent subsequence has the required barycenter. Rank-at-most-$K$ vectors form a closed set, so the limiting measure has the required support. Restricting the argument to countable decompositions would not suffice.

\begin{appthm}[Finite threshold maps are nonseparable]\label{thm:J3}
For every $H>1$, $L\geq1$, finite $t\geq0$ and integer $K\geq2$, the map $\mathcal R_{Q_{L,t},K}$ is nonseparable. Some finite-rank local filters satisfy $\mathcal W_K(F)<0$. This holds whether or not the map obeys the PPT criterion.
\end{appthm}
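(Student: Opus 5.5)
The plan is to argue by contradiction through Proposition~\ref{prop:J1}. Assume that every finite-rank diagonal filter produces $F=DQ^TD\in\mathcal C_K$, with $Q=Q_{L,t}$. Then pass to a limit of such filters, obtaining a nonzero trace-class operator that still lies in the closed cone $\mathcal C_K$ but is supported in a subspace of the type in Lemma~\ref{lem:J2}. The measure decomposition stated after Lemma~\ref{lem:J2} forbids this. Consequently some finite-rank filter has $F\notin\mathcal C_K$, hence $\mathcal W_K(F)<0$ by Eq.~\eqref{eq:J3}, and Proposition~\ref{prop:J1} gives nonseparability. PPT feasibility plays no role, which explains the final sentence of the theorem.

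\emph{Choice of filters.} Fix $0<r<1$ and take truncated geometric filters $R_A^{(k)}\propto P_{\leq k}\,r^{N_A}$ and $R_B^{(k)}\propto P_{\leq k}\,r^{N_B}$ on the $L$ modes of each side, with $N_A,N_B$ the local total photon numbers. The normalization $\operatorname{Tr}R^2=1$ is only a positive scalar, and $\mathcal C_K$ is a cone, so it can be dropped. Set $D_\infty=r^{N_A}\otimes r^{N_B}$. This operator is trace class, since its eigenvalues $r^{|i|+|j|}$ are summable. Because $\|Q^T\|_\infty\leq1$, the estimate $\|D_kQ^TD_k-D_\infty Q^TD_\infty\|_1\leq\|D_k-D_\infty\|_1(\|D_k\|_\infty+\|D_\infty\|_\infty)$ shows convergence in trace norm. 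The cone $\mathcal C_K$ is trace-norm closed, so $F_\infty=D_\infty Q^TD_\infty\in\mathcal C_K$. It is positive and nonzero: its vacuum matrix element equals $\langle0,0|Q|0,0\rangle>0$ by Eq.~\eqref{eq:D25}.

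\emph{Support of the limit.} The full transpose in the number basis acts on Bargmann functions as complex conjugation of coefficients, i.e.\ $f\mapsto\overline{f(\bar x,\bar y)}$. Since $\lambda=\sqrt\theta$ is real, this maps $\mathcal V_{\lambda,t}$ onto itself. By Lemma~\ref{lem:J2}, $\operatorname{Ran}Q^T\subseteq\mathcal V_{\lambda,t}$. The operator $D_\infty$ acts on Bargmann functions by the dilation $f(x,y)\mapsto f(rx,ry)$. It sends $e^{\lambda x\cdot y}\sum_{|j|\leq t}y^jf_j(x)$ to $e^{\lambda r^2x\cdot y}\sum_{|j|\le t}y^j r^{|j|}f_j(rx)$, which lies in $\mathcal V_{\lambda r^2,t}$ and is normalizable because $D_\infty$ is bounded. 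Lemma~\ref{lem:J2} shows that $\mathcal V_{\lambda r^2,t}$ is closed, with $0<\lambda r^2<1$. Therefore the closure of $\operatorname{Ran}F_\infty$ lies in a closed subspace containing no nonzero finite-Schmidt-rank vector.

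\emph{Conclusion.} Normalize $F_\infty$ to a state. The decomposition result quoted after Lemma~\ref{lem:J2} puts almost every vector of its measure in $\mathcal V_{\lambda r^2,t}$, and each such vector would have Schmidt rank at most $K$. No nonzero such vector exists, which is a contradiction. Hence $D_kQ^TD_k\notin\mathcal C_K$ for some finite $k$, giving an explicit finite-rank filter with $\mathcal W_K(F)<0$.

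The main obstacle is bookkeeping rather than new estimates. First, one must check that the transposed range and the dilation stay inside a subspace of the form \eqref{eq:J7}, with the normalizability and closedness needed to invoke Lemma~\ref{lem:J2}. Second, the trace-norm limit must be justified, and the filters must be truly weighted: uniform truncations $P_{\le k}$ alone would not converge, since $Q^T$ has infinite rank. The geometric weight $r$ resolves this, and I would try it first.
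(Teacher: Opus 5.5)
Your proposal is correct and follows the paper's own argument. The untruncated geometric filters $r^{N_A}\otimes r^{N_B}$ (the paper's $q^{N/2}$ with $q=r^2$) place $F$ in the closed subspace $\mathcal V_{\lambda r^2,t}$, and the measure decomposition then excludes $F$ from $\mathcal C_K$. Trace-norm convergence of the photon-truncated filters then gives a finite filter with $\mathcal W_K(F)<0$, so Proposition~\ref{prop:J1} applies. The differences are cosmetic: you handle the transpose through conjugation-invariance of $\mathcal V_{\lambda,t}$, where the paper notes $Q_{L,t}^T=Q_{L,t}$ from the real squeezer, and you drop the filter normalization using homogeneity of the cone.
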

\begin{proof}
First use filters without a photon cutoff. Fix $0<q<1$ and let $N_A,N_B$ be the total local photon numbers. Set
\begin{equation}\label{eq:J9}
R_A=(1-q)^{L/2}q^{N_A/2},~~ R_B=(1-q)^{L/2}q^{N_B/2}.
\end{equation}
Both squared filters have trace one. The real squeezing convention gives $Q_{L,t}^T=Q_{L,t}$. Thus $F=DQ_{L,t}D$ is nonzero, positive and trace class. Its support is the closure of $\operatorname{Ran}(DQ_{L,t})$, since $F=(DQ_{L,t})(DQ_{L,t})^\dagger$.

In the Bargmann representation, $D$ replaces $x,y$ by $\sqrt q\,x,\sqrt q\,y$, up to a scalar. Lemma~\ref{lem:J2} puts its range in the closed subspace $\mathcal V_{q\sqrt\theta,t}$. That subspace has no nonzero finite-Schmidt-rank vectors. The probability-measure argument therefore places $F$ outside $\mathcal C_K$ for every finite $K$.

Truncate each filter to at most $d$ photons per local block and renormalize the trace of its square to one. The resulting finite operator $F_d$ converges to $F$ in trace norm, by local compression and convergence of the normalizations. If all $F_d$ belonged to $\mathcal C_K$, closedness would put $F$ there too. Thus some finite $d$ has $F_d\notin\mathcal C_K$, and Eq.~\eqref{eq:J3} gives $\mathcal W_K(F_d)<0$. Proposition~\ref{prop:J1} rules out separability of the map.
\end{proof}

\subsection{A quantitative energy-constrained obstruction}
\begin{appprop}[Distance from separable channels]\label{prop:J4}
For a finite filtered operator~\eqref{eq:J1}, suppose $\mathcal W_K(F)=-\delta<0$. Every separable channel $\Sigma$ with the same input and output spaces then satisfies
\begin{equation}\label{eq:J10}
\frac12\|\mathcal R_{Q,K}-\Sigma\|_\diamond\geq\delta.
\end{equation}
The same bound holds with the diamond supremum restricted to an energy budget that includes the finite-support probes in Eqs.~\eqref{eq:J4} and~\eqref{eq:J5}. If a standard-form map $\mathcal R_{T,K}$ is separable, then also
\begin{equation}\label{eq:J11}
\|Q-T\|_\infty\geq\delta.
\end{equation}
\end{appprop}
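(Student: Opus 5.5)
The plan is to evaluate both channels on the same finite probe used in Proposition~\ref{prop:J1} and to turn the witness value $-\delta$ into a trace-distance bound. Prepare the product of local probes $|\Omega_A\rangle\otimes|\Omega_B\rangle$ from Eqs.~\eqref{eq:J4}--\eqref{eq:J5}. It is separable across the laboratories and has finite photon support, so it is admissible for the unrestricted diamond norm and for any energy budget that contains it. Applying $\mathcal R_{Q,K}$ gives $J_Q$ of Eq.~\eqref{eq:J6}; applying $\Sigma$ gives some $J_\Sigma$. Since $\Sigma$ is separable and the probe is separable, $J_\Sigma$ is separable across $A_0A':B'B_0$.

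Next, define the contraction map $\mathcal C(X)=(I\otimes\langle\varphi_K|)X(I\otimes|\varphi_K\rangle)$ on the output pair. Let $Z$ be a minimizer in Eq.~\eqref{eq:J2}, so $\|Z\|_\infty\leq1$, $Z$ is $K$-block-positive, and $\operatorname{Tr}(ZF)=-\delta$. Build the test operator $Y=Z\otimes|\varphi_K\rangle\langle\varphi_K|$ on $A_0B_0\otimes A'B'$, with the systems ordered as in Eq.~\eqref{eq:J6}. Then $\operatorname{Tr}(YJ)=\operatorname{Tr}(Z\,\mathcal C(J))$. Also $\|Y\|_\infty\leq1$, and the orthogonality of $\Phi_K$ and $I-\Phi_K$ gives $\mathcal C(J_Q)=F$, so $\operatorname{Tr}(YJ_Q)=-\delta$.

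For the separable output, the computation in the proof of Proposition~\ref{prop:J1} shows that $\mathcal C$ maps each pure product term to a vector of Schmidt rank at most $K$. Hence $\mathcal C(J_\Sigma)\in\mathcal C_K$ and $\operatorname{Tr}(YJ_\Sigma)\geq0$. Duality then gives
\[
\delta\leq\operatorname{Tr}\bigl(Y(J_\Sigma-J_Q)\bigr)\leq\|J_\Sigma-J_Q\|_1.
\]
The main obstacle is the factor $\tfrac12$. The easy route gives only $\|\cdot\|_1\geq\delta$, hence $\tfrac12\|\cdot\|_\diamond\geq\delta/2$. To reach $\delta$ I would try the following. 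The isotropic structure of $J_Q$ lets one twirl the outputs with $U\otimes\bar U$. This twirl is LOCC, so it preserves separability of $\Sigma$ and the probe output, and it does not increase the trace distance. After twirling, both $J_Q$ and $J_\Sigma$ are block forms $X\otimes\Phi_K+Y'\otimes(I-\Phi_K)/(K^2-1)$ with the same total $X+Y'=R_A^2\otimes R_B^2$. Their difference is then $\Delta\otimes[\Phi_K-(I-\Phi_K)/(K^2-1)]$ with $\Delta=F-X_\Sigma$, whose trace norm is $2\|\Delta\|_1$. Since $X_\Sigma=\mathcal C(\text{twirled }J_\Sigma)\in\mathcal C_K$, the witness gives $\|\Delta\|_1\geq\operatorname{Tr}(Z(X_\Sigma-F))\geq\delta$. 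This yields $\tfrac12\|\cdot\|_\diamond\geq\delta$.

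For Eq.~\eqref{eq:J11}, apply the same probe to $\mathcal R_{T,K}$. Its contracted reference operator is $F_T=DT^TD$, which lies in $\mathcal C_K$ by Proposition~\ref{prop:J1}. Then
\[
\delta\leq\operatorname{Tr}\bigl(Z(F_T-F)\bigr)\leq\|D(T-Q)^TD\|_1\leq\|Q-T\|_\infty\,\|D\|_2^2.
\]
Here $\|D\|_2^2=\operatorname{Tr}R_A^2\operatorname{Tr}R_B^2=1$ and the transpose preserves the operator norm, which gives the claim.
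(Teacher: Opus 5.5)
Your proposal is correct and follows the paper's proof essentially step for step. You use the minimizing witness on the contracted separable output to get $\delta\leq\|F_\Sigma-F\|_1$, the LOCC $U\otimes\bar U$ output twirl to reduce both probe outputs to isotropic block form with difference of trace norm $2\|F-F_\Sigma\|_1$, and H\"older's inequality with $\operatorname{Tr}D^2=1$ for the standard-form case; your preliminary $\delta/2$ bound via $Z\otimes|\varphi_K\rangle\langle\varphi_K|$ is harmless but is superseded by the twirl, exactly as in the paper.
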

\begin{proof}
Choose a minimizing witness $Z$ in Eq.~\eqref{eq:J2}. Apply the probes to both maps and denote the output of $\Sigma$ by $J_\Sigma$. Let $F_\Sigma=\langle\varphi_K|J_\Sigma|\varphi_K\rangle$ act on the references. The contraction argument gives $F_\Sigma\in\mathcal C_K$, hence
\begin{equation}\label{eq:J12}
\delta\leq\operatorname{Tr}[Z(F_\Sigma-F)]\leq\|F_\Sigma-F\|_1.
\end{equation}
Twirl both output systems by the LOCC channel averaging $U\otimes\overline U$. This leaves $J_Q$ unchanged. Both maps preserve trace, so their reference marginals are $R_A^2\otimes R_B^2$. The difference of their twirled outputs is therefore
\begin{equation}\label{eq:J13}
(F-F_\Sigma)\otimes\left(\Phi_K-\frac{I-\Phi_K}{K^2-1}\right).
\end{equation}
The two output states have orthogonal supports and trace one. Its trace norm is $2\|F-F_\Sigma\|_1$. Contractivity under twirling proves Eq.~\eqref{eq:J10}, including its restricted-energy version.

For $\Sigma=\mathcal R_{T,K}$, the reference coefficient is $DT^TD$. The Schatten norm inequality gives
\begin{equation}\label{eq:J14}
\|D(Q-T)^TD\|_1\leq\|Q-T\|_\infty\operatorname{Tr}D^2=\|Q-T\|_\infty,
\end{equation}
which proves the last claim.
\end{proof}
The witness gives a strictly positive bound for every fixed finite instance. It is not a closed formula for the optimal distance as a function of block size, threshold and output dimension, and need not stay positive along an asymptotic code sequence. It also does not rule out approximation only on the resource states used by a protocol.

The obstruction does not depend on the thermal noise. It holds at $n=0$, including threshold $t=0$. Yet $Q_2(\mathcal L_{\eta,0})=-\log_2(1-\eta)$~\cite{r11}. Nonseparability of this construction therefore does not prove that LOCC has a smaller noisy-channel capacity.

\section{Photon sectors and exact Rains witnesses}\label{app:K}

This appendix supplies the state bounds for the energy-constrained extension. We use the channel $\mathcal G_{\tau,v}$ and energy function $F_{\tau,v}$ defined in Sec.~III. For the strictly noisy case, assume
\begin{equation}\label{eq:K1}
 \max\{0,\tau-1\}<v<\tau,~~
 z=\frac{\tau}{v(1+v-\tau)}>1.
\end{equation}
Thermal loss corresponds to $\tau=\eta$ and $v=(1-\eta)n$.

The purpose of the sector resolution is to turn an infinite-dimensional Gaussian-state problem into finite-dimensional blocks without losing the asymptotic rate. Local photon counts cost only a sublinear amount of entropy. Within each block, passive-mode symmetry makes the resource and its optimal comparison simultaneously diagonal, allowing an exact Rains witness.

\subsection{Photon-sector spectra}

For $s$ modes, let $\mathcal H_{s,m}\simeq\operatorname{Sym}^{m}\mathbb C^s$ be the subspace with total photon number $m$. Its dimension is
\begin{equation}\label{eq:K2}
 d_{s,m}=\binom{s+m-1}{m},~~ d_{s,-1}=0.
\end{equation}
The binomial coefficient counts ways to distribute $m$ indistinguishable photons among $s$ modes. Equivalently, the mode labels form the symmetric $m$-particle representation of $\mathbb C^s$. Prepare a maximally entangled state on two copies of $\mathcal H_{s,m}$ and send Bob's share through $s$ channel uses. Denote the resulting state by $J_{s,m}$. Bob's output photon count $k$ is a local classical flag; write $\rho_{s,m,k}$ for the conditional state.

Fix $s\geq2$, $m,k\geq0$, and put $q=\min(m,k)$ and $A=s+m+k-1$. For reference and output annihilation operators $a_i,b_i$, define
\begin{equation}\label{eq:K3}
 C_-=\sum_{i=1}^{s}a_ib_i,~~ C_+=C_-^\dagger,~~
 C_j=C_+^jC_-^j.
\end{equation}
The commutator $[C_-,C_+]=s+\widehat N_A+\widehat N_B$ gives orthogonal ladder spaces with projectors $\Pi_h$, $0\leq h\leq q$. Each is the image under $C_+^h$ of the kernel of $C_-$ at photon numbers $(m-h,k-h)$. For a nonnegative integer $a$, write $a^{\underline j}=a(a-1)\cdots(a-j+1)$ and $a^{\underline0}=1$. The dimensions and eigenvalues are
\begin{align}
 D_h&=d_{s,m-h}d_{s,k-h}-d_{s,m-h-1}d_{s,k-h-1},\label{eq:K4}\\
 C_j\Pi_h&=h^{\underline j}(A-h)^{\underline j}\Pi_h.\label{eq:K5}
\end{align}
Indeed, the commutator makes the raising map injective. Subtracting the dimension of its preceding number space gives Eq.~\eqref{eq:K4}; iterating the commutator gives Eq.~\eqref{eq:K5}. The dimensions sum to $d_{s,m}d_{s,k}$. The ladder label $h$ counts collective pair raising from a vector annihilated by $C_-$. It quantifies a correlation sector, rather than a local photon count or an environmental loss event. A projector onto a range of $h$ will define the distillation test, but its implementation is justified through the PPT norm condition.

Define the polynomial $H_h(z)$ and normalizing constant $Z_{s,m,k}$ by
\begin{equation}\label{eq:K6}
 \begin{split}
 H_h(z)&=\sum_{j=0}^{h}\binom hj\binom{A-h}{j}z^j,\\
 Z_{s,m,k}&=\sum_{h=0}^{q}D_hH_h(z).
 \end{split}
\end{equation}
The conditional state and a separable comparison have the spectral forms
\begin{equation}\label{eq:K7}
 \rho_{s,m,k}=\frac{1}{Z_{s,m,k}}\sum_{h=0}^{q}H_h(z)\Pi_h,
 ~~ \sigma_{s,m,k}=\sum_{h=0}^{q}\mu_h\Pi_h,
\end{equation}
where
\begin{equation}\label{eq:K8}
 \mu_h=\frac{m!k!(s-1)!}{h!(A-h)!}>0.
\end{equation}
To describe the comparison directly, let $e$ be a unit vector in $\mathbb C^s$, let $de$ be normalized invariant measure, and let $|e;r\rangle$ be the condensate of $r$ photons in direction $e$. Then
\begin{equation}\label{eq:K9}
 \begin{split}
 |e;r\rangle&=\frac{(\sum_i e_i a_i^\dagger)^r}{\sqrt{r!}}|0\rangle,\\
 \sigma_{s,m,k}&=\int |\bar e;m\rangle\langle\bar e;m|
 \otimes |e;k\rangle\langle e;k|\,de.
 \end{split}
\end{equation}
The spectra in Eq.~\eqref{eq:K7} are the photon-sector identities of Ref.~\cite{r16}, Appendix~C. The comparison $\sigma_{s,m,k}$ is visibly separable in its condensate representation: it shares a random mode direction but contains only product states. Its spectral form is useful because it commutes with the resource, so the relative entropy is an ordinary weighted sum over the ladder spaces.

\begin{applemma}[Partial swaps]\label{lem:K1}
For every $0\leq j\leq q$,
\begin{equation}\label{eq:K10}
 \|C_j^\Gamma\|_\infty\leq m^{\underline j}k^{\underline j}.
\end{equation}
All these bounds are saturated by the same aligned condensate after partial transpose.
\end{applemma}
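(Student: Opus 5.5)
The plan is to identify $C_j^\Gamma$, restricted to $\mathcal H_{s,m}\otimes\mathcal H_{s,k}$, as a scalar multiple of a compressed permutation operator in first quantization, and then read off both the bound and its saturation.

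\emph{Step 1: compute the partial transpose.} Expand
\[
C_j=\sum_{I,K}a_I^\dagger b_I^\dagger a_K b_K,
\]
where $I=(i_1,\dots,i_j)$ and $K=(k_1,\dots,k_j)$ run over mode multi-indices, and $a_I^\dagger=a_{i_1}^\dagger\cdots a_{i_j}^\dagger$. All creation and annihilation operators have real matrix elements in the number basis. Hence transposing Bob's factor gives $(b_I^\dagger b_K)^T=b_K^\dagger b_I$, and so
\[
C_j^\Gamma=\sum_{I,K}a_I^\dagger a_K\otimes b_K^\dagger b_I .
\]
Operationally, this removes $j$ photons from Alice in modes $K$ and $j$ photons from Bob in modes $I$, then returns them exchanged. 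Photon numbers are preserved, so the sector $\mathcal H_{s,m}\otimes\mathcal H_{s,k}$ is invariant. The identity is checked on finite number-state matrix elements, so no extension issue arises within a fixed sector.

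\emph{Step 2: pass to first quantization.} Embed $\mathcal H_{s,m}\otimes\mathcal H_{s,k}$ as $P_m\otimes P_k$ acting on $(\mathbb C^s)^{\otimes m}\otimes(\mathbb C^s)^{\otimes k}$, where $P_r$ is the symmetrizer. The standard rule is that $a_I^\dagger a_K$ acts on $\operatorname{Sym}^m$ as $m^{\underline j}$ times the symmetrized operator $|i_1\rangle\langle k_1|\otimes\cdots\otimes|i_j\rangle\langle k_j|\otimes I$ on the first $j$ factors. Summing over $I,K$ with Bob's matching factor, each $\sum_i|i\rangle\langle k|\otimes|k\rangle\langle i|$ becomes the swap of one Alice particle with one Bob particle. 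I therefore expect
\[
C_j^\Gamma\big|_{m,k}=m^{\underline j}k^{\underline j}\,(P_m\otimes P_k)\,S_j\,(P_m\otimes P_k),
\]
where $S_j$ is the unitary permutation exchanging the first $j$ Alice factors with the first $j$ Bob factors. This is the ``partial swap'' of the lemma's name. Since $S_j$ is unitary and $P_m\otimes P_k$ is a projection, the compressed operator has norm at most one, which gives Eq.~\eqref{eq:K10}.

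\emph{Step 3: saturation.} Take the aligned condensate $|e;m\rangle\otimes|e;k\rangle$, the same direction $e$ on both sides and not $\bar e$. In first quantization it is $e^{\otimes(m+k)}$, which every permutation fixes, including $S_j$. So it is an eigenvector of $C_j^\Gamma$ with eigenvalue exactly $m^{\underline j}k^{\underline j}$, simultaneously for all $j$. One can also check this directly: $a_K$ and $b_I$ acting on condensates produce the factors $e_K\sqrt{m^{\underline j}}$ and $e_I\sqrt{k^{\underline j}}$, and the swapped recreation restores the same vector. For $j>q$ both sides vanish, consistent with the falling factorials.

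\emph{Main obstacle.} The delicate step is the first-quantization identity in Step 2. I need the normalization $m^{\underline j}$ (rather than $m^j$ or a binomial) and the absence of extra symmetrization factors. I would verify it by evaluating both sides on product condensates $|e;m\rangle|f;k\rangle$, which span the sector by polarization. On these the left side gives $m^{\underline j}k^{\underline j}\langle \cdot\rangle$ with $e$ and $f$ exchanged in $j$ slots. A secondary point is the sign and reality convention of the partial transpose. Because the squeezer and ladder operators are real in the number basis, Bob's transpose introduces no complex conjugation, so the formula above should hold as written.
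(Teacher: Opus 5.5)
Your proposal is correct and follows the paper's proof: both write $C_j^\Gamma=m^{\underline j}k^{\underline j}V^\dagger F_jV$, with $F_j$ the swap of the first $j$ Alice and Bob tensor factors. Both then bound the norm of this compressed unitary by one, and note that $e^{\otimes m}\otimes e^{\otimes k}$ is fixed by every such swap, which gives simultaneous saturation. Your explicit partial transpose $\sum_{I,K}a_I^\dagger a_K\otimes b_K^\dagger b_I$ and the direct condensate check are correct and simply spell out steps the paper compresses into one sentence.
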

\begin{proof}
Let $V$ embed the two symmetric photon spaces into $(\mathbb C^s)^{\otimes m}\otimes(\mathbb C^s)^{\otimes k}$. Let $F_j$ swap the first $j$ tensor factors between the two parties. Expansion of the ordered annihilation strings, followed by transposition of Bob's indices, gives
\begin{equation}\label{eq:K11}
 C_j^\Gamma=m^{\underline j}k^{\underline j}V^\dagger F_jV.
\end{equation}
The falling factorials count ordered particle selections. The operator $F_j$ is a self-adjoint unitary, so its compression has norm at most one. It fixes $e^{\otimes m}\otimes e^{\otimes k}$, which proves simultaneous saturation.
\end{proof}

\begin{appthm}[Exact sector Rains value]\label{thm:K2}
In the region \eqref{eq:K1},
\begin{equation}\label{eq:K12}
 R(\rho_{s,m,k})=E_R(\rho_{s,m,k})=D(\rho_{s,m,k}\|\sigma_{s,m,k}).
\end{equation}
These quantities are additive on arbitrary finite products of such states. The equalities and additivity also hold for arbitrary finite products of the flagged states $J_{s,m}$.
\end{appthm}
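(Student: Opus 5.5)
The proof is a sandwich, as in Proposition~\ref{prop:E2}. The upper bound comes from the separable comparison $\sigma_{s,m,k}$, which commutes with $\rho_{s,m,k}$. The lower bound comes from a PPT witness built as a polynomial in the $C_j$, fed into Lemma~\ref{lem:E1}. Since $R\le E_R\le D(\rho_{s,m,k}\|\sigma_{s,m,k})$ is immediate from Eq.~\eqref{eq:K9} and the inclusion of separable states in the Rains domain, everything reduces to exhibiting a bounded $W\geq0$ with $\|W^\Gamma\|_\infty\le1$ and $\operatorname{Tr}(\rho_{s,m,k}\log_2W)=D(\rho_{s,m,k}\|\sigma_{s,m,k})$.

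The natural witness is the ratio $W=\sum_h (r_h/\mu_h)\Pi_h$ with $r_h=H_h(z)/Z_{s,m,k}$, i.e.\ $W=\rho\sigma^{-1}$ on the sector. Then $\operatorname{Tr}\rho\log_2 W=D(\rho\|\sigma)$ exactly, because all operators are simultaneously diagonal on the ladder spaces. The substance is the norm condition. I would expand $W$ in the $C_j$. Since $C_j\Pi_h=h^{\underline j}(A-h)^{\underline j}\Pi_h$, the binomial coefficients in $H_h(z)$ combined with $1/\mu_h\propto h!(A-h)!$ suggest writing
\[
\frac{H_h(z)}{\mu_h}=\sum_j c_j\,h^{\underline j}(A-h)^{\underline j}
\]
with coefficients $c_j\ge0$ (proportional to $z^j/(j!^2)$ times factorial ratios). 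The explicit match should make $W=\sum_j c_jC_j/Z$ a nonnegative combination of the $C_j$ restricted to the sector. Lemma~\ref{lem:K1} then gives
\[
\|W^\Gamma\|_\infty\le \sum_j c_j m^{\underline j}k^{\underline j}/Z,
\]
with equality on the aligned condensate $e^{\otimes m}\otimes e^{\otimes k}$, since all bounds are saturated simultaneously. The remaining check is that this value equals one. Equivalently, $\operatorname{Tr}(\sigma W)=1$ should be evaluated on the condensate representation, where $\sigma$ is an average of product condensates whose partial transposes are aligned condensates. I would verify $\sum_h D_h r_h=1$ together with the condensate expectation identity as a direct computation. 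I expect the main obstacle to be this normalization, namely confirming that the condensate value of $\sum_j c_j C_j^\Gamma$ equals $Z_{s,m,k}$, and the sign positivity $c_j\ge0$. Positivity is where the region \eqref{eq:K1}, with $z>1$ playing its role through $H_h(z)$, enters.

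Additivity then follows as in Lemma~\ref{lem:E1}. Tensor products of witnesses satisfy $\|(\otimes W_i)^\Gamma\|_\infty=\prod\|W_i^\Gamma\|_\infty\le1$, giving $R(\otimes\rho_i)\ge\sum D(\rho_i\|\sigma_i)$. The product of comparisons is separable, giving the matching upper bound on $E_R$. For the flagged states $J_{s,m}=\sum_k p_k|k\rangle\langle k|\otimes\rho_{s,m,k}$, I would use the local-flag averaging identity stated in Appendix~\ref{app:A}. Concretely, I would build the witness $\sum_k|k\rangle\langle k|\otimes W_k$, whose partial transpose is block diagonal with blocks of norm at most one. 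I would take the comparison $\sum_k p_k|k\rangle\langle k|\otimes\sigma_{s,m,k}$, which is separable with the local flag. Both bounds then equal $\sum_kp_kD(\rho_{s,m,k}\|\sigma_{s,m,k})$, and products of such flagged states are handled by the same tensor argument.
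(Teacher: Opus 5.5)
\textbf{Gap: nonnegativity of the $C_j$ coefficients.} Your architecture matches the paper's: the ratio witness $W=\rho_{s,m,k}\sigma_{s,m,k}^{-1}$, a nonnegative expansion in the $C_j$, Lemma~\ref{lem:K1} with simultaneous saturation, Lemma~\ref{lem:E1}, and tensor products of witnesses and comparisons. But the step that carries the content of the theorem is not supplied, namely nonnegativity of the expansion coefficients. The mechanism you suggest for it does not work.

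Since $\mu_h\propto\binom Ah$, the factor $h!(A-h)!$ depends on $h$. It cannot be absorbed by expanding $H_h(z)$ in powers of $z$. Coefficients of the form $z^j/(j!)^2$ times constants would also be positive for every $z>0$, leaving no role for $z>1$.

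Already for $m=1\le k$ the guess fails. The eigenvalues of $W$ on $\Pi_0,\Pi_1$ are proportional to $A!$ and $(A-1)!\,[1+(A-1)z]$, while $C_1$ has eigenvalues $0$ and $A-1$. Hence $W\propto A!\,I+(A-1)!\,(z-1)\,C_1$, and the coefficient changes sign at $z=1$.

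The missing idea is to expand about $z=1$. There Vandermonde gives $H_h(1)=\binom Ah\propto\mu_h$, so the formula reduces to $W=I$. Writing $z^j=\sum_i\binom ji(z-1)^i$, and using $\binom hj\binom ji=\binom hi\binom{h-i}{j-i}$ together with Vandermonde, yields
\[
\frac{H_h(z)}{\binom Ah}=\sum_{i=0}^{h}\frac{h^{\underline i}(A-h)^{\underline i}}{i!\,A^{\underline i}}\,(z-1)^i .
\]
Therefore $W=c\sum_{j}(z-1)^jC_j/(j!\,A^{\underline j})$ with $c>0$. These coefficients are nonnegative precisely because $z>1$ in region~\eqref{eq:K1}.

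Once this is in hand, the normalization you regard as the main obstacle is immediate. We have $\operatorname{Tr}(\sigma W)=\operatorname{Tr}\rho=1$, and $\operatorname{Tr}(\sigma C_j)=m^{\underline j}k^{\underline j}$ from Eq.~\eqref{eq:K9}. So the bound from Lemma~\ref{lem:K1} equals one.

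\textbf{Flagged states.} Your block-diagonal witness $\bigoplus_kW_k$ and comparison $\bigoplus_kp_k\sigma_{s,m,k}$ for $J_{s,m}$ are a valid, more direct alternative to the paper's pinching-based local-flags identity. $J_{s,m}$ is block diagonal in Bob's total count by phase covariance, and partial transposition in the number basis preserves these blocks. Since $k$ is unbounded, Lemma~\ref{lem:E1} needs a bounded witness. Either verify $\sup_k\|W_k\|_\infty<\infty$, or use $\bigoplus_{k\le K}W_k\oplus I$ and let $K\to\infty$, noting that each conditional cost is at most $\log_2d_{s,m}$.
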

\begin{proof}
Write $W=\rho_{s,m,k}\sigma_{s,m,k}^{-1}$. This is well defined on the finite sector because every $\mu_h$ is positive. Since the two operators commute, $\log_2 W=\log_2\rho-\log_2\sigma$ on the resource support. The remaining task is to prove that this likelihood-ratio operator satisfies the partial-transpose norm constraint. Expanding $H_h$ about $z=1$ and using Vandermonde's identity gives
\begin{equation}\label{eq:K13}
 \frac{H_h(z)}{\binom Ah}
 =\sum_{j=0}^{h}\frac{h^{\underline j}(A-h)^{\underline j}}
 {j!A^{\underline j}}(z-1)^j.
\end{equation}
Thus, for a positive normalizing constant $c$,
\begin{equation}\label{eq:K14}
 W=c\sum_{j=0}^{q}\frac{(z-1)^j}{j!A^{\underline j}}C_j.
\end{equation}
Every coefficient is nonnegative. Equation~\eqref{eq:K9} gives $\operatorname{Tr}(\sigma C_j)=m^{\underline j}k^{\underline j}$. Since $\operatorname{Tr}(\sigma W)=1$, Lemma~\ref{lem:K1} implies
\begin{equation}\label{eq:K15}
 \|W^\Gamma\|_\infty\leq
 c\sum_{j=0}^{q}\frac{(z-1)^jm^{\underline j}k^{\underline j}}
 {j!A^{\underline j}}=1.
\end{equation}
The aligned condensate saturates this inequality. Lemma~E.1 then gives $R(\rho)\geq\operatorname{Tr}(\rho\log_2W)=D(\rho\|\sigma)$. The reverse inequalities follow from $R\leq E_R$ and separability of $\sigma$. Tensor products of the witnesses give the lower bound for products, while products of the comparisons give the upper bound.

Both measures obey the local-flags identity. For $R$, pinch a comparison into its flagged blocks $\omega_i$ and put $s_i=\|\omega_i^\Gamma\|_1$. Feasibility is $\sum_i s_i\leq1$. For flag probabilities $p_i$, relative entropy splits into the conditional costs and $\sum_i p_i\log_2(p_i/s_i)$. The latter has minimum zero at $s_i=p_i$. This proves the identity for finitely many flags. For countably many flags, the direct sum of optimal comparisons gives the upper bound. For the lower bound, retain finitely many flag outcomes and group the rest into one outcome. The finite-flags identity and nonnegativity bound the result below by the retained conditional costs. Increasing the retained set gives the full sum. Here every conditional cost is at most $\log_2d_{s,m}$. Applying the identity to Bob's flag proves the assertion for $J_{s,m}$. If $s=1$, Alice's sector has dimension one and both measures vanish.
\end{proof}

\subsection{Regularized Gaussian-state entanglement}

Set $f_s(m)=R(J_{s,m})=E_R(J_{s,m})$. We use the deterministic sector-density limit of Ref.~\cite{r16}, Appendix~C, Lemma~3:
\begin{equation}\label{eq:K16}
 \frac{f_s(m_s)}s\longrightarrow F_{\tau,v}(N)
 \quad\text{whenever}\quad\frac{m_s}s\longrightarrow N.
\end{equation}
Also $0\leq f_s(m)/s\leq g(m/s)$. The upper bound follows from the input Schmidt rank and $\log_2d_{s,m}\leq sg(m/s)$.

\begin{appcor}[Energy-constrained state measures]\label{cor:K3}
For every finite $N$ in the strictly noisy region,
\begin{equation}\label{eq:K17}
 R^\infty(\chi_N^{\mathcal G_{\tau,v}})
 =E_R^\infty(\chi_N^{\mathcal G_{\tau,v}})=F_{\tau,v}(N).
\end{equation}
\end{appcor}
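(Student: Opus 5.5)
The proof reduces $\chi_N^{\otimes s}$ to the flagged sector states $J_{s,m}$ by local photon-number pinching, evaluates those exactly with Theorem~\ref{thm:K2}, and passes to the limit with the density result \eqref{eq:K16}. Since $R^\infty\leq E_R^\infty$, it suffices to prove a lower bound $R^\infty(\chi_N)\geq F_{\tau,v}(N)$ and an upper bound $E_R^\infty(\chi_N)\leq F_{\tau,v}(N)$.

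\emph{Reduction to sectors.} Write $\chi_N^{\otimes s}$ through the two-mode squeezed vacuum $|\psi_N\rangle^{\otimes s}=\sum_m\sqrt{p_s(m)}\,\sqrt{d_{s,m}}\,|\phi_{s,m}\rangle$. Here $p_s(m)=\Pr(Z_{s,N}=m)$ is the negative-binomial law \eqref{eq:D5}, and $|\phi_{s,m}\rangle$ is maximally entangled on two copies of $\mathcal H_{s,m}$. Alice's total photon count is a local observable, and the channels act only on Bob's modes. Hence pinching Alice's count $m$ is a local operation commuting with the channels, and it produces the locally flagged state $\bigoplus_m p_s(m)\,J_{s,m}$. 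The coherences between different $m$ are lost, which is not harmless for an upper bound. I would handle the two directions differently.

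\emph{Lower bound.} Pinching is a local (hence PPT) channel, so Rains monotonicity and the local-flags identity give
\[
R(\chi_N^{\otimes s})\geq\sum_m p_s(m)f_s(m).
\]
The law of $m/s$ concentrates at $N$ by Chebyshev's inequality. Combining \eqref{eq:K16} with the bound $0\leq f_s(m)/s\leq g(m/s)$ gives the following. The contribution from $|m/s-N|>\epsilon$ is negligible, since $g$ grows only logarithmically and the tails are geometric. On the typical window, uniform convergence follows from applying \eqref{eq:K16} along arbitrary sequences together with continuity of $F_{\tau,v}$. Hence $\liminf_s R(\chi_N^{\otimes s})/s\geq F_{\tau,v}(N)$. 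Because $R$ is subadditive, Fekete's lemma makes the infimum in the definition of $R^\infty$ equal to this limit.

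\emph{Upper bound.} Here I would use the separable comparison $\sigma_s=\bigoplus_m p_s(m)\,\sigma^{(m)}$. Each $\sigma^{(m)}$ is the block-diagonal separable state built from $\sigma_{s,m,k}$ with Bob's flag weights, so $D(J_{s,m}\|\sigma^{(m)})=f_s(m)$. Then $\sigma_s$ is separable, a mixture of product states supported on Alice's sectors. The standard estimate $D(\rho\|\mathcal P(\sigma))\leq D(\mathcal P(\rho)\|\mathcal P(\sigma))+H(\text{pinching})$ for a pinching $\mathcal P$, with $\sigma$ invariant under $\mathcal P$, gives
\[
E_R(\chi_N^{\otimes s})\leq\sum_m p_s(m)f_s(m)+H(p_s).
\]
Equivalently, $D(\rho\|\sigma)=-S(\rho)-\operatorname{Tr}\rho\log\sigma$ and $S(\rho)\geq S(\mathcal P\rho)-H$. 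Here $H(p_s)\leq g(sN)=O(\log s)$ by the count-entropy bound of Appendix~\ref{app:A}. Dividing by $s$, the same concentration argument gives $\limsup_s E_R(\chi_N^{\otimes s})/s\leq F_{\tau,v}(N)$, and since $E_R$ is subadditive this bounds $E_R^\infty$.

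\emph{Main obstacle.} The delicate step is the upper-bound entropy cost in infinite dimensions. $\chi_N^{\otimes s}$ has infinite support, and $\mathcal P$ pinches only Alice's count. The fix I would try is to treat Bob's output count $k$ through the local-flag identity already inside $\sigma^{(m)}$, and to use the extended relative entropy \eqref{eq:A12} with finite-cutoff approximations. The cross term $-\operatorname{Tr}\rho\log\sigma$ depends only on the pinched $\rho$, because $\sigma$ is block diagonal in $m$. A second point to check is that the density limit \eqref{eq:K16} holds uniformly over $m/s$ in a compact window, which I would derive by contradiction using arbitrary sequences $m_s$.
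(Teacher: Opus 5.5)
Your proposal is correct and follows the paper's proof essentially step for step. You pinch Alice's total count to obtain the flagged mixture of the $J_{s,m}$, take the lower bound from local monotonicity and flags, and take the upper bound from a direct sum of optimal separable comparisons plus the pinching cost $H(M_s)\le g(sN)$, passing to the limit via concentration, local uniformity of \eqref{eq:K16} by contradiction, and tail control from the slow growth of $g$. The only slip is cosmetic: with $|\phi_{s,m}\rangle$ normalized, the $m$-sector coefficient should be $\sqrt{p_s(m)}$, because the multiplicity $d_{s,m}$ is already in $p_s(m)$, and this does not affect your pinched state $\bigoplus_m p_s(m)J_{s,m}$.
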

\begin{proof}
Measure Alice's total photon count $M_s$ on $\chi_N^{\otimes s}$. It has a negative-binomial distribution with mean $sN$. Within a fixed count, the input squeezed-vacuum amplitudes are equal, so the conditional state is $J_{s,m}$. Local monotonicity, flags and a direct sum of the optimal separable comparisons imply
\begin{equation}\label{eq:K18}
 \begin{split}
 \mathbb E f_s(M_s)&\leq R(\chi_N^{\otimes s})\leq E_R(\chi_N^{\otimes s})\\
 &\leq\mathbb E f_s(M_s)+H(M_s).
 \end{split}
\end{equation}
Here $H$ is Shannon entropy. The last bound follows because the comparison commutes with number pinching and the entropy increase under that pinching is at most $H(M_s)$. Furthermore, $H(M_s)\leq g(sN)=O(\log s)$.

The law of large numbers gives $M_s/s\to N$ in probability. The deterministic assertion \eqref{eq:K16} is locally uniform: a violating sequence with densities tending to $N$ would contradict it. To control the tails, for $B>0$ use
\begin{equation}\label{eq:K19}
 \mathbb E\!\left[g(M_s/s)\mathbf1_{\{M_s/s>B\}}\right]
 \leq N\sup_{x>B}\frac{g(x)}x\longrightarrow0
\end{equation}
as $B\to\infty$. Thus $\mathbb E f_s(M_s)/s\to F_{\tau,v}(N)$. Divide Eq.~\eqref{eq:K18} by $s$. At $N=0$, and throughout the entanglement-breaking region, the channel state is separable.
\end{proof}

\section{Adaptive energy-constrained converse}\label{app:L}

We prove the converse under Eq.~\eqref{eq:energy_constraint} by bounding the one-use increase of regularized Rains entanglement for inputs correlated with both laboratories' memories. Iteration and concavity then give the adaptive bound.

\subsection{Rains properties and number pinching}

The Rains bound is convex, subadditive, lower semicontinuous and monotone under the free operations. A partially transposed free map contracts the trace norm of a Hermitian operator and therefore preserves the comparison set. Relative-entropy data processing proves monotonicity. Convexity of the comparison set and closure under tensor products prove convexity and subadditivity.

For local number pinching $\mathcal P$, with count outcome $J$ of finite entropy,
\begin{equation}\label{eq:L1}
 R(\omega)\leq R(\mathcal P\omega)+H(J).
\end{equation}
Choose a Rains comparison that is block diagonal under $\mathcal P$. Pinching any comparison preserves its feasibility and cannot increase its relative entropy from $\mathcal P\omega$. For a block-diagonal comparison $\sigma$, the relative-entropy identity is
\begin{equation}\label{eq:L2}
 D(\omega\|\sigma)=D(\omega\|\mathcal P\omega)+D(\mathcal P\omega\|\sigma).
\end{equation}
For the infinite-dimensional identity, choose increasing finite-rank projections that commute with the number projections and converge strongly to the identity. In each finite corner, expanding the logarithms proves the corresponding identity for the trace-corrected relative entropy $\operatorname{Tr}[X(\ln X-\ln Y)+Y-X]$. Its monotone finite-corner limit gives Eq.~\eqref{eq:L2}, including infinite values: the trace corrections cancel because pinching preserves the trace. This argument never subtracts two infinite entropies.

To bound the first term, purify $\omega$. Pinching the purification produces a mixture of orthogonal pure states with probabilities $p_j=\Pr(J=j)$. The relative entropy of the purification with respect to this mixture is $H(J)$. Tracing out the purifying system and using data processing therefore gives $D(\omega\|\mathcal P\omega)\leq H(J)$. Taking the infimum over block-diagonal $\sigma$ in Eq.~\eqref{eq:L2} proves Eq.~\eqref{eq:L1}. No finite-entropy assumption on the memories is needed. Only one collective count is recorded, so its cost grows logarithmically with the number of replicated modes at fixed mean energy.

For lower semicontinuity, use the weak-star topology on trace-class operators tested against compact operators. The Rains comparison set is weak-star closed within the trace-class unit ball. Positivity and the partial-transpose constraint pass to every finite local matrix corner; the uniform trace-norm bound on these corners gives a trace-class partial transpose with norm at most one. The comparison set is therefore weak-star compact, and sequentially compact on separable Hilbert spaces.

The required relative-entropy statement allows the comparison to lose trace in this limit. For a finite-rank projection $P$, define the normalised states $\rho[P]=P\rho P\oplus(1-\operatorname{Tr}P\rho)$ and $\sigma[P]=P\sigma P\oplus(1-\operatorname{Tr}P\sigma)$, with a one-dimensional tail flag. The comparison tail includes the missing normalisation mass, rather than only the trace of the discarded block. Then $D(\rho\|\sigma)=\sup_P D(\rho[P]\|\sigma[P])$. Indeed, these are trace-preserving compressions of $\rho\oplus0$ and $\sigma\oplus(1-\operatorname{Tr}\sigma)$. Data processing gives one inequality; their trace-norm convergence as $P\uparrow I$ and ordinary lower semicontinuity give the other. For fixed $P$, the padded states depend only on finite matrix entries, so finite-dimensional lower semicontinuity proves the claim for a trace-norm-convergent state and a weak-star-convergent comparison.

Now take a subsequence realising the lower limit of the Rains values and choose near-minimising comparisons. Compactness gives a weak-star-convergent subsubsequence with a feasible limit. Applying the preceding inequality and then minimising proves lower semicontinuity of $R$. No continuity of the regularised measure is assumed.

\begin{applemma}[One-use Rains increase]\label{lem:L1}
Let $\rho_{XAY}$ have finite $R(XA:Y)_\rho$ and transmitted mean $x=\operatorname{Tr}(\hat n_A\rho_A)<\infty$. Alice holds $XA$, Bob holds $Y$, and the channel output is $\omega_{XBY}=(\operatorname{id}_{XY}\otimes\mathcal G_{\tau,v})(\rho)$. Then
\begin{equation}\label{eq:L3}
 R^\infty(X:BY)_\omega\leq R^\infty(XA:Y)_\rho+F_{\tau,v}(x).
\end{equation}
\end{applemma}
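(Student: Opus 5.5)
The plan is to pass from the Gaussian channel to its finite photon sectors on $\ell$ parallel copies, simulate each sector from the resource $J_{\ell,m}$ of Appendix~\ref{app:K}, and pay for the sector label by a count entropy that is only logarithmic in $\ell$. Fix $\ell$ and work with $\rho^{\otimes\ell}$ and $\omega^{\otimes\ell}=(\operatorname{id}\otimes\mathcal G_{\tau,v}^{\otimes\ell})(\rho^{\otimes\ell})$, with Alice's transmitted modes $A^\ell$. The total transmitted count $M_\ell$ has mean $\ell x$, so $H(M_\ell)\leq g(\ell x)=O(\log\ell)$. It suffices to prove
\[
R(X^\ell:B^\ell Y^\ell)_{\omega^{\otimes\ell}}\leq R(X^\ell A^\ell:Y^\ell)_{\rho^{\otimes\ell}}+\mathbb E f_\ell(M_\ell)+O(\log\ell),
\]
and then divide by $\ell$ and let $\ell\to\infty$. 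On the left-hand side I use $R^\infty=\lim_\ell R(\cdot^{\otimes\ell})/\ell$, by subadditivity. The first term on the right becomes $R^\infty(XA:Y)_\rho$ in the same limit.

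\emph{Coherent count register.} Pinching $A^\ell$ before transmission cannot be undone afterwards, so Alice instead applies the local isometry $\sum_m P_m\otimes|m\rangle_{X'}$, which copies the total count of $A^\ell$ into a fresh register $X'$. This isometry is free and reversible, so it leaves $R(X^\ell A^\ell:Y^\ell)$ unchanged. The channel then acts on $A^\ell$, and $X'$ stays in Alice's laboratory. After the channel, pinching $X'$ in its basis is a local number pinching. Equation~\eqref{eq:L1} (with $X'$ playing the role of the counted system) then bounds the output Rains quantity by that of the pinched output plus $H(M_\ell)$. Discarding $X'$ at the end is local and only decreases $R$. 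The pinched output is exactly the flagged mixture $\sum_m p_m|m\rangle\langle m|_{X'}\otimes\mathcal G^{\otimes\ell}(\rho_m)$, where $\rho_m$ is the conditional input with exactly $m$ transmitted photons. By the local-flags identity used in Theorem~\ref{thm:K2}, its Rains value is $\sum_m p_m R(\mathcal G^{\otimes\ell}(\rho_m))$. Similarly, $\sum_m p_m R(\rho_m)$ equals the Rains value of the locally pinched input, which is at most $R(\rho^{\otimes\ell})$ by monotonicity.

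\emph{Sector simulation.} On inputs supported in $\mathcal H_{\ell,m}$, the channel $\mathcal G_{\tau,v}^{\otimes\ell}$ is covariant under passive unitaries, which act irreducibly on $\operatorname{Sym}^m\mathbb C^\ell$. I take from Ref.~\cite{r16} the exact sector simulation: this restricted channel is implemented by an LOCC (hence free) operation acting on the input and one copy of $J_{\ell,m}$. Monotonicity and subadditivity then give $R(\mathcal G^{\otimes\ell}(\rho_m))\leq R(\rho_m)+f_\ell(m)$, with $f_\ell(m)=R(J_{\ell,m})$ from Theorem~\ref{thm:K2}. Averaging over $m$ yields the displayed inequality.

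\emph{Main obstacle: the limit $\ell^{-1}\mathbb E f_\ell(M_\ell)\to$ at most $F_{\tau,v}(x)$.} Here the input need not be i.i.d. in photon number across the replicated modes; only its mean $\ell x$ is fixed. I would first prove the uniform upper estimate $f_s(m)/s\leq F_{\tau,v}(m/s)+o(1)$ for $m/s$ in compacts. This follows from the locally uniform density limit \eqref{eq:K16}, by arguing by contradiction along a violating sequence. Large densities are handled by the tail bound $f_s(m)/s\leq g(m/s)$ together with Eq.~\eqref{eq:K19}. Concavity and monotonicity of $F_{\tau,v}$ then give, by Jensen, $\mathbb E F_{\tau,v}(M_\ell/\ell)\leq F_{\tau,v}(x)$. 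Finiteness of $R(XA:Y)_\rho$ keeps every term finite. The truncation of the $g$-tail is the delicate step, since Markov's inequality on $M_\ell/\ell$ is the only control available.
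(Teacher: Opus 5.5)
\textbf{Gap: discarding $X'$ does not recover $\omega^{\otimes\ell}$.} Your coherent-register step does not give the inequality you need. Since $\operatorname{Tr}_{X'}(V\rho^{\otimes\ell}V^\dagger)=\sum_mP_m\rho^{\otimes\ell}P_m$, discarding $X'$ after the channel leaves $(\mathrm{id}\otimes\mathcal G_{\tau,v}^{\otimes\ell})(\mathcal P_{A^\ell}\rho^{\otimes\ell})$. That is the output of the number-pinched input, not $\omega^{\otimes\ell}$. Tracing out a coherent copy of the count is exactly the pinching you wanted to avoid. Undoing the copy would require a joint operation on $X'$ and the transmitted modes, which are already in Bob's laboratory.

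So your chain bounds the Rains quantity of the pinched-input output. The inequality $R(X^\ell:B^\ell Y^\ell)_{\omega^{\otimes\ell}}\leq R(X'X^\ell:B^\ell Y^\ell)_{\omega'}$ is left unjustified by ``discarding is local.'' The missing ingredient is covariance of $\mathcal G_{\tau,v}$ under a common optical phase: $\mathcal G_{\tau,v}(e^{i\phi\hat n}\,\cdot\,e^{-i\phi\hat n})=e^{i\phi\hat n}\,\mathcal G_{\tau,v}(\cdot)\,e^{-i\phi\hat n}$. This holds for all three families because the thermal environment and the displacement noise are phase invariant. You invoke passive covariance only for the sector simulation, not at this step.

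\textbf{Two ways to repair it.} The paper's route uses the fact that total input pinching commutes through $\mathcal G_{\tau,v}^{\otimes\ell}$ as total output pinching. It pinches Bob's total output count directly and pays at most $g(\ell(\tau x+v))=O(\log\ell)$ by \eqref{eq:L1}. It then applies convexity over the conditional outputs, so no register is needed.

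Alternatively, keep $X'$ and recover $\omega^{\otimes\ell}$ by LOCC. Alice measures $X'$ with the canonical phase POVM $\{|\phi\rangle\langle\phi|\,d\phi/2\pi\}$, where $|\phi\rangle=\sum_{m\geq0}e^{im\phi}|m\rangle$. Outcome $\phi$ leaves $(\mathrm{id}\otimes\mathcal G_{\tau,v}^{\otimes\ell})(e^{-i\phi\hat N_{A^\ell}}\rho^{\otimes\ell}e^{i\phi\hat N_{A^\ell}})=e^{-i\phi\hat N_{B^\ell}}\omega^{\otimes\ell}e^{i\phi\hat N_{B^\ell}}$. Bob undoes this rotation after learning $\phi$. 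The result is $R(\omega^{\otimes\ell})\leq R(\omega')$, and your bound then holds with cost $H(M_\ell)\leq g(\ell x)$.

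\textbf{Remaining steps.} The sector simulation, the flags identity and selective monotonicity agree with the paper. Your Jensen--concavity bound on $\ell^{-1}\mathbb E f_\ell(M_\ell)$ is a valid alternative to the paper's law-of-large-numbers argument. Two small corrections: for $\rho^{\otimes\ell}$ the replica counts are in fact i.i.d., and the tail is not delicate, since \eqref{eq:K19} controls it using only the mean.
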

\begin{proof}
First consider an $s$-mode transmitted block with total mean $sx$ and total input-count probabilities $p_m$. The channel restricted to $\mathcal H_{s,m}$ has an exact LOCC teleportation simulation using $J_{s,m}$~\cite{r15}, Appendix~B.1; see also Ref.~\cite{r16}, Appendix~B.2. This simulation also acts correctly on inputs correlated with both memories.

Total input-number pinching commutes through the phase-covariant channel as total output-number pinching. To see this, write pinching as averaging a common optical phase over all transmitted modes. Phase covariance carries that same average to the outputs, while the reference and memory systems are untouched. The input count can consequently be exposed to the sector simulator, while Eq.~\eqref{eq:L1} bounds the coherence removed from the output. Apply the sector simulator to each conditional input. Selective local monotonicity, flags, subadditivity and convexity give
\begin{equation}\label{eq:L4}
 R(\mathcal P_B\omega)\leq R(\rho)+\sum_m p_m f_s(m).
\end{equation}
Bob's total output count has mean $s(\tau x+v)$ and entropy at most $g(s(\tau x+v))$. Bounding the removed output coherence with Eq.~\eqref{eq:L1} yields
\begin{equation}\label{eq:L5}
 R(\omega)\leq R(\rho)+\sum_m p_m f_s(m)+g(s(\tau x+v)).
\end{equation}
Initially impose finite memory and input cutoffs. Remove the output cutoff first, replacing discarded output weight by vacuum. This local channel is phase covariant and cannot increase the mean or the Rains bound. Lower semicontinuity gives Eq.~\eqref{eq:L5} with infinite output support. Next remove input and memory cutoffs. Input vacuum replacement moves discarded probability only to $m=0$, where $f_s(0)=0$. Its resource average cannot increase and is bounded by $s\sum_m p_m g(m/s)\leq sg(x)$. Memory cutoffs leave the optical mean unchanged. Trace-norm convergence and lower semicontinuity complete these limits.

Apply Eq.~\eqref{eq:L5} to $s$ independent replicas of the fixed one-use input $\rho$. Their total input count $M_s$ is a sum of independent counts of finite mean $x$, even if $\rho_A$ has number coherence. Hence
\begin{equation}\label{eq:L6}
 \begin{split}
 \frac{R(\omega^{\otimes s})}{s}
 &\leq\frac{R(\rho^{\otimes s})}{s}+\mathbb E\frac{f_s(M_s)}s\\
 &\quad+\frac{g(s(\tau x+v))}{s}.
 \end{split}
\end{equation}
The middle term tends to $F_{\tau,v}(x)$ by Eq.~\eqref{eq:K16}, local uniformity and the tail estimate \eqref{eq:K19}. The last term vanishes. This proves Eq.~\eqref{eq:L3}. The replicas regularize a fixed one-use state. Since the bound holds for every such state, it applies at each adaptive use with its unconditional transmitted mean.
\end{proof}

\begin{appthm}[Mean-energy PPT converse]\label{thm:L2}
For the strictly noisy channels and every finite $N$,
\begin{equation}\label{eq:L7}
 Q_{\mathrm{PPT}}(\mathcal G_{\tau,v},N)\leq F_{\tau,v}(N).
\end{equation}
\end{appthm}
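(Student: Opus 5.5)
By Appendix~\ref{app:A} it suffices to bound entanglement-distribution protocols under the mean-energy budget. The plan is to track the regularized Rains quantity across the cut through the protocol, apply Lemma~\ref{lem:L1} once per channel use, and close with concavity of $F_{\tau,v}$ and the output bound of Lemma~\ref{lem:E3}.

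\textbf{Setup.} Fix an $r$-use adaptive PPT protocol obeying Eq.~\eqref{eq:energy_constraint}, with unconditional transmitted means $x_1,\dots,x_r$. Let $\rho^{(i)}$ be the global state just before use $i$, split into Alice's memory together with the transmitted mode $A_i$ versus Bob's memory. Let $\omega^{(i)}$ be the state just after use $i$. Classical branches are kept as registers, so these are single unconditional states. The initial state is separable, so $R^\infty(\rho^{(1)})=0$.

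\textbf{Per-step bounds.} Between uses, the interleaved operation is completely PPT preserving and acts on the whole state. Rains monotonicity, applied to $\ell$ copies and then regularized, gives $R^\infty(\rho^{(i+1)})\leq R^\infty(\omega^{(i)})$. Moving $A_{i+1}$ to the sender's side is just a relabelling within Alice's laboratory. Lemma~\ref{lem:L1}, applied with the unconditional mean $x_i$, gives
\[
R^\infty(\omega^{(i)})\leq R^\infty(\rho^{(i)})+F_{\tau,v}(x_i).
\]
Iterating and applying monotonicity to the final decoding map bounds the output $\omega_r$ on $K_r\times K_r$ by
\[
R(\omega_r)\leq R^\infty(\omega_r)\cdot 1+\dots\leq\sum_i F_{\tau,v}(x_i)\leq rF_{\tau,v}\Big(\tfrac1r\textstyle\sum_i x_i\Big)\leq rF_{\tau,v}(N).
\]
The second-to-last step uses concavity (Jensen) and the last uses monotonicity of $F_{\tau,v}$. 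Strictly, the first inequality holds for the single-copy output: $R^\infty(\omega_r)\leq R(\omega_r)$ always, but the comparison I actually need is $f_r\log_2K_r-h_2(f_r)\leq R^\infty(\omega_r)$. I would obtain this by applying Lemma~\ref{lem:E3} to $\omega_r^{\otimes\ell}$ and using additivity of $-\log$ overlap with $\Phi_{K_r}^{\otimes\ell}$ through a hashing-type argument, or more simply by observing that the witness bound in Lemma~\ref{lem:E3} is built from the feasible test $\Phi_K$, whose tensor powers remain feasible. This gives $R(\omega^{\otimes\ell})\geq\ell f\log_2K-\ell h_2(f)$ after twirling each copy to isotropic form. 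Dividing by $r$ and letting $f_r\to1$ then gives the rate bound $F_{\tau,v}(N)$.

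\textbf{Main obstacle.} The hard part is finiteness and the hypotheses of Lemma~\ref{lem:L1} at each step. The lemma needs finite $R(XA:Y)$ and finite transmitted mean. Memories are unbounded, yet the initial state is separable and each step adds at most the finite quantity $F_{\tau,v}(x_i)\leq g(x_i)$. I would check inductively that the single-copy $R$ stays finite, using subadditivity together with the step estimate \eqref{eq:L5} at the single-copy level, where the sublinear terms are finite. Means with $x_i$ finite are guaranteed by the budget. The regularized quantities are bounded by the single-copy ones, so the chain is well defined. The remaining worry is that Lemma~\ref{lem:L1} concerns $R^\infty$ of the post-channel state while monotonicity is applied to $R^\infty$; this is consistent because free maps act copywise on tensor powers.
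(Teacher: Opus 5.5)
Your proposal is correct and follows the paper's proof: you reduce to distribution protocols, iterate Lemma~\ref{lem:L1} with the single-copy Rains value kept finite by Eq.~\eqref{eq:L5} at $s=1$, apply Jensen and monotonicity of $F_{\tau,v}$, and finish with a regularized lower bound on the output. That last bound is the one step you leave loose. The paper makes it precise with the isotropic witness $W=fK\Phi_K+\frac{(1-f)K}{K-1}(I-\Phi_K)$, whose partial transpose has eigenvalues $1$ and $(K+1-2Kf)/(K-1)$, so Lemma~\ref{lem:E1} applies to $W^{\otimes\ell}$ and yields $R^\infty\geq\log_2K-h_2(f)-(1-f)\log_2(K-1)\geq f\log_2K-h_2(f)$. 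The bare test $\Phi_K^{\otimes\ell}$ would not suffice, since its acceptance probability $f^\ell$ vanishes. A hashing route through $D_{\mathrm{PPT}}\leq R^\infty$ (Lemma~\ref{lem:E3}) on the twirled output would also suffice as $f\to1$.
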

\begin{proof}
By Sec.~A\,6, it suffices to bound a distribution protocol that obeys Eq.~\eqref{eq:energy_constraint}. Iterate Lemma~\ref{lem:L1} through a fixed $r$-use protocol, whose output is $\zeta_r$. The initial Rains value is zero, and free operations do not increase $R^\infty$. Concavity and monotonicity of $F_{\tau,v}$~\cite{r16}, Appendix~D, Lemma~6, give
\begin{equation}\label{eq:L8}
 R^\infty(\zeta_r)\leq\sum_{i=1}^{r}F_{\tau,v}(x_i)\leq rF_{\tau,v}(N).
\end{equation}
Equation~\eqref{eq:L5} with $s=1$ keeps the unregularized Rains value finite after each use, so every invocation of the lemma is legitimate.

Let $K\geq2$ be the target dimension and $f=\operatorname{Tr}(\Phi_K\zeta_r)$. We may assume that $f\geq1/K$, since otherwise there is nothing to prove. Define
\begin{equation}\label{eq:L9}
 W=fK\Phi_K+\frac{(1-f)K}{K-1}(I-\Phi_K).
\end{equation}
Its partial transpose has eigenvalues $1$ and $(K+1-2Kf)/(K-1)$, both in $[-1,1]$. Lemma~E.1 on tensor powers therefore gives, with $h_2$ the binary entropy,
\begin{equation}\label{eq:L10}
 \begin{split}
 R^\infty(\zeta_r)&\geq\log_2K-h_2(f)-(1-f)\log_2(K-1)\\
 &\geq f\log_2K-h_2(f).
 \end{split}
\end{equation}
Boundary values follow by a limit. Combine this with Eq.~\eqref{eq:L8}, divide by $r$ and take $f\to1$. Entanglement-breaking channels have a separable measure-and-prepare simulation and zero PPT-assisted capacity.
\end{proof}

\section{A polynomial norm bound for correlation thresholds}\label{app:M}

Fix a photon sector with the decomposition in Eq.~\eqref{eq:K7}. For $0\leq h\leq q$, define the correlation-threshold projector $T_{s,m,k,h}$ and its separable acceptance probability $p_{s,m,k,h}$ by
\begin{equation}\label{eq:M1}
\begin{aligned}
T_{s,m,k,h}&=\sum_{a=h}^{q}\Pi_a,\\
p_{s,m,k,h}&=\operatorname{Tr}(\sigma_{s,m,k}T_{s,m,k,h})
=\sum_{a=h}^{q}D_a\mu_a.
\end{aligned}
\end{equation}
The separable acceptance probability $p_{s,m,k,h}$ is easy to evaluate from the sector spectrum and will have exponent $F_{\tau,v}(N)$ on typical sectors. The coding condition instead involves $\lVert T^\Gamma\rVert_\infty$. This appendix bridges the two. First, aligned condensates maximize the test expectation. Second, symmetry reduces the norm estimate to two modes per laboratory, whose dimensions grow only polynomially with photon number.

\begin{appthm}[Uniform polynomial threshold bound]\label{thm:M1}
For every $s\geq2$, $m,k\geq0$, and $0\leq h\leq\min(m,k)$,
\begin{equation}\label{eq:M2}
p_{s,m,k,h}\leq\lVert T_{s,m,k,h}^\Gamma\rVert_\infty
\leq(m+1)(k+1)p_{s,m,k,h}.
\end{equation}
For $m,k=O(s)$, this implies
\begin{equation}\label{eq:M3}
0\leq\frac1s\log_2\frac{\lVert T_{s,m,k,h}^\Gamma\rVert_\infty}{p_{s,m,k,h}}
\leq\frac{\log_2(m+1)+\log_2(k+1)}s.
\end{equation}
The rate penalty is $O(\log s/s)$, uniformly over thresholds and bounded photon-density ranges.
\end{appthm}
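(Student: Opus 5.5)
The lower bound in Eq.~\eqref{eq:M2} should follow by trace duality, exactly as in Lemma~\ref{lem:E1}. By the condensate representation in Eq.~\eqref{eq:K9}, the comparison $\sigma_{s,m,k}$ is an average of products $|\bar e;m\rangle\langle\bar e;m|\otimes|e;k\rangle\langle e;k|$. Transposing Bob's factor in the number basis sends $|e;k\rangle\langle e;k|$ to $|\bar e;k\rangle\langle\bar e;k|$. Hence $\sigma_{s,m,k}^\Gamma$ is again a normalized state, and $\|\sigma_{s,m,k}^\Gamma\|_1=1$. Then
\[
p_{s,m,k,h}=\operatorname{Tr}(\sigma T)=\operatorname{Tr}(\sigma^\Gamma T^\Gamma)\leq\|T^\Gamma\|_\infty .
\]
The same computation identifies $p_{s,m,k,h}$ with the diagonal value $\langle\bar e;m,\bar e;k|T^\Gamma|\bar e;m,\bar e;k\rangle$ on an aligned condensate. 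Here I will use $U(s)$-invariance: every $\Pi_a$ is built from $C_\pm$, so it commutes with $U\otimes\bar U$, and the integrand $\langle\bar e;m,e;k|T|\bar e;m,e;k\rangle$ is independent of $e$.

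For the upper bound I will exploit symmetry. Since $T$ commutes with $U\otimes\bar U$, its partial transpose $T^\Gamma$ commutes with $U\otimes U$ on $\operatorname{Sym}^m\mathbb C^s\otimes\operatorname{Sym}^k\mathbb C^s$. That representation is multiplicity free, with irreducible components of highest weight $(m+k-i,i)$, $0\leq i\leq q$. Thus $T^\Gamma=\sum_i\lambda_iP_i$ for scalars $\lambda_i$, and $\|T^\Gamma\|_\infty=\max_i|\lambda_i|$. Every highest-weight vector is supported on the first two modes of each laboratory. Therefore every $\lambda_i$ is already an eigenvalue of the compression $PT^\Gamma P=(PTP)^\Gamma$, where $P$ is the two-mode sector projector. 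This sector has dimension $(m+1)(k+1)$ and satisfies $P^\Gamma=P$. This is the reduction to two modes per laboratory announced above, and it is where the polynomial factor enters.

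On the two-mode sector I will use the $SU(2)$ coherent-state resolutions
\[
I_{\operatorname{Sym}^m\mathbb C^2}=(m+1)\int|e;m\rangle\langle e;m|\,de,
\]
and similarly for $k$. The first step is to show that product condensates satisfy $\langle\bar e;m,f;k|T|\bar e;m,f;k\rangle\leq p_{s,m,k,h}$, with equality in the aligned case $f=e$. I plan to write $T$ as a polynomial in the collective operators $C_j$, expand the matrix element as a function of the overlap $|\langle e,f\rangle|^2$, and check monotonicity; Lemma~\ref{lem:K1} shows the partial swaps are all saturated by the aligned condensate, which supports this. The second step bounds $|\lambda_i|$. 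I will write $\lambda_i=\langle\psi|T^\Gamma|\psi\rangle$ for a unit eigenvector $\psi$ in the two-mode sector and insert the two coherent-state resolutions on Bob's side, or on both sides, after undoing the transpose. The aim is to bound $|\langle\psi|T^\Gamma|\psi\rangle|$ by $(m+1)(k+1)$ times the supremum of product-condensate expectations of $T$, using $0\leq T\leq I$ and a Cauchy--Schwarz estimate on the off-diagonal condensate kernels.

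The main obstacle is this last estimate. A naive trace bound gives $|\lambda_i|\leq\operatorname{Tr}(P_i)\cdot(\cdots)$, which loses too much. Coherent-state $P$-symbols of arbitrary operators can have large total variation, so the factor $(m+1)(k+1)$ must come from the resolutions of the identity alone, with $T$'s positivity absorbing the rest. If the direct route fails, I would instead compute the $\lambda_i$ explicitly in the two-mode sector. Each $\lambda_i$ is a signed combination of $D_a\mu_a$-type terms obtained from Eq.~\eqref{eq:K11} and the decomposition of $V^\dagger F_jV$ into irreps. I would then compare each $|\lambda_i|$ termwise with $\lambda_0=p_{s,m,k,h}$, in the spirit of the Fej\'er-kernel argument of Proposition~\ref{prop:D2}. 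Finally, Eq.~\eqref{eq:M3} follows by taking logarithms and dividing by $s$.
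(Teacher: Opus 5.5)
\textbf{Gap in the central norm estimate.} Your reduction to the two-mode sector of dimension $(m+1)(k+1)$ is sound and matches the paper's architecture, and so is your lower bound via $\|\sigma_{s,m,k}^\Gamma\|_1=1$ (an equivalent variant of the aligned-condensate argument). The genuine gap is the step you call the main obstacle, which is exactly where the factor $(m+1)(k+1)$ comes from. The missing idea is short. Let $B=PTP\geq0$ be the two-mode compression. Partial transpose preserves the Hilbert--Schmidt norm, so
\[
\|T^\Gamma\|_\infty=\|B^\Gamma\|_\infty\leq\|B^\Gamma\|_2=\|B\|_2\leq\operatorname{Tr}B ,
\]
where the last step uses $B\geq0$. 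In your eigenvector language the same bound reads $|\lambda_i|=|\operatorname{Tr}[B\,(|\psi\rangle\langle\psi|)^\Gamma]|\leq\operatorname{Tr}B\,\|(|\psi\rangle\langle\psi|)^\Gamma\|_\infty\leq\operatorname{Tr}B$. Positivity of $T$ enters only through $\|B\|_1=\operatorname{Tr}B$. You then insert the two coherent-state resolutions into the trace, not into an off-diagonal matrix element:
\[
\operatorname{Tr}B=(m+1)(k+1)\iint\langle\bar e;m|\langle f;k|T|\bar e;m\rangle|f;k\rangle\,de\,df\leq(m+1)(k+1)\,p_{s,m,k,h}.
\]
No Cauchy--Schwarz estimate on kernels and no control of $P$-symbols is needed. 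Your fallback, comparing each $|\lambda_i|$ directly with $\lambda_0=p_{s,m,k,h}$, is precisely the sharp equality \eqref{eq:O12}, i.e.\ the inequalities \eqref{eq:O15}. The paper leaves that open for intermediate thresholds, so the fallback targets a strictly harder statement than the theorem requires.

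\textbf{The condensate maximum is only sketched.} ``Check monotonicity'' hides the real content. In the expansion $T_{s,m,k,h}=\sum_{j\geq h}c_{h,j}C_j$ the coefficients alternate in sign. So the simultaneous saturation in Lemma~\ref{lem:K1} does not imply that
\[
f_h(x)=\sum_jc_{h,j}m^{\underline j}k^{\underline j}x^j
\]
is maximized at $x=1$; a termwise bound discards exactly the cancellations that matter. The paper needs two further steps:
\begin{itemize}
\item the explicit coefficients \eqref{eq:M5}, verified on every ladder space through a beta integral and a Vandermonde convolution;
\item the positivity identity $f_h'(x)=\beta_hx^{h-1}\,\mathbb E(1-x)^J\geq0$, with $J$ hypergeometric.
\end{itemize}
Without some such positivity argument for the derivative, your first step remains an unproved claim, and the upper bound in \eqref{eq:M2} does not follow.
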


\subsection{A positive formula for condensate expectations}
The threshold is a polynomial in the pair operators $C_j$, but its coefficients alternate in sign. Bounding their absolute values term by term would discard the cancellations and can change the exponential rate. The following argument keeps the polynomial intact and expresses its derivative as an expectation of a nonnegative random variable.

\begin{applemma}[Condensate maximum]\label{lem:M2}
For unit vectors $e,f\in\mathbb C^s$,
\begin{equation}\label{eq:M4}
\langle\bar e;m|\langle f;k|T_{s,m,k,h}|\bar e;m\rangle|f;k\rangle
\leq p_{s,m,k,h}.
\end{equation}
Equality holds when $e=f$.
\end{applemma}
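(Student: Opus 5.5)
The plan is to reduce the lemma to a one-variable statement about the overlap $\gamma=|\langle \bar e\,|\,\bar f\rangle|^2=|\sum_i e_if_i|^2\in[0,1]$ and then prove monotonicity in $\gamma$.

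\textbf{Equality at alignment.} Consider the passive unitaries $\bar U^{\otimes}\otimes U^{\otimes}$, meaning $\bar U$ applied to Alice's modes and $U$ to Bob's. These leave $C_-=\sum_i a_ib_i$ invariant, because $\sum_i a_ib_i\mapsto\sum_{i,j,l}\bar U_{ji}U_{li}a_jb_l=\sum_j a_jb_j$. They therefore commute with every $\Pi_h$, and hence with $T_{s,m,k,h}$. Two consequences follow:
\begin{itemize}
\item the aligned expectation $\langle\bar e;m|\langle e;k|T|\bar e;m\rangle|e;k\rangle$ is independent of $e$;
\item averaging it over $de$ reproduces $\operatorname{Tr}(\sigma_{s,m,k}T)=p_{s,m,k,h}$, by Eq.~\eqref{eq:K9}.
\end{itemize}
This proves the equality case. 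By the same invariance, the general expectation depends on $(e,f)$ only through $\gamma$. Call it $P(\gamma)$; it remains to show $P(\gamma)\leq P(1)$.

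\textbf{Polynomial form of $P$.} On the fixed sector, $C_1=C_+C_-$ acts on $\Pi_a$ as $a(A-a)$ by Eq.~\eqref{eq:K5}. Since $A\geq 2q+1$, this is strictly increasing in $a\leq q$. Lagrange interpolation therefore writes $T=\sum_{j=0}^q c_jC_j$ with real (alternating) coefficients. A direct computation gives
\[
C_-|\bar e;m\rangle|f;k\rangle=\sqrt{mk}\,\Bigl(\sum_i\bar e_if_i\Bigr)|\bar e;m-1\rangle|f;k-1\rangle.
\]
Iterating, $\langle C_j\rangle=m^{\underline j}k^{\underline j}\gamma^j$, so $P(\gamma)=\sum_jc_jm^{\underline j}k^{\underline j}\gamma^j$ is a polynomial. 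Bounding the $c_j$ term by term is useless, so I will show instead that $P'\geq0$ on $[0,1]$.

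\textbf{Monotonicity via an intertwining.} Because $[C_-,C_+]=s+\widehat N_A+\widehat N_B$, the operator $C_-$ maps the ladder space $\Pi_a^{(m,k)}$ into $\Pi_{a-1}^{(m-1,k-1)}$, injectively for $a\geq1$. Writing $C_j=C_+C_{j-1}'C_-$, with primes marking operators on the $(m-1,k-1)$ sector, one gets
\[
P'(\gamma)=mk\,\bigl\langle\psi'_\gamma\bigr|\,\sum_j jc_j(m-1)^{\underline{j-1}}(k-1)^{\underline{j-1}}\ldots\bigr|\psi'_\gamma\bigr\rangle.
\]
Here $\psi'_\gamma$ is the lower condensate pair, and the operator in the middle is diagonal in the lower ladder spaces. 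I will identify its eigenvalue on $\Pi'_{a}$ as a positive multiple of the discrete difference of the step function $\mathbf 1_{\{\cdot\geq h\}}$ across the eigenvalues $a(A-a)$. This should be a nonnegative multiple of $\Pi'_{h-1}$, up to a positive normalization coming from the eigenvalues $(a+1)(A-a-1)$ of $C_-C_+$ on the lower spaces. Then $P'(\gamma)$ is an expectation of a nonnegative operator, so $P'(\gamma)\geq0$ and $P(\gamma)\leq P(1)=p_{s,m,k,h}$.

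\textbf{Main obstacle.} The delicate step is the derivative identity itself: differentiating in $\gamma$ rather than in the vectors, and checking that the resulting operator really is a nonnegative combination of lower ladder projectors. If this step resists, my fallback is to reduce to two modes per party (spanned by $e$ and $e^\perp$, which carry all the $\gamma$ dependence). There I would verify stochastic dominance of the ladder label in $\gamma$ directly, by expanding $|f;k\rangle$ binomially in $|e;r\rangle|e^\perp;k-r\rangle$.
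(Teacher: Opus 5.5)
Your reduction matches the paper's proof: invariance under $\bar U\otimes U$, the equality case by averaging the aligned expectation as in Eq.~\eqref{eq:K9}, the expansion $T_{s,m,k,h}=\sum_jc_jC_j$, and $\langle C_j\rangle=m^{\underline j}k^{\underline j}\gamma^j$. So everything hinges on $P'\geq0$ on $[0,1]$. (A small slip: $|\langle\bar e|\bar f\rangle|^2=|\sum_i\bar e_if_i|^2$, not $|\sum_ie_if_i|^2$. Your formula for $C_-$ uses the right overlap.)

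\textbf{The gap.} The step you flag as delicate is only guessed, and the guess is false. The ladder-diagonal operator $X'=\sum_j jc_jC'_{j-1}$ on the $(m-1,k-1)$ sector is uniquely fixed by $P'$. This is because the condensate expectations $(m-1)^{\underline i}(k-1)^{\underline i}\gamma^i$ of the basis operators $C'_i$ are linearly independent. For $h<q$, $X'$ is not a multiple of $\Pi'_{h-1}$; only at the top threshold $h=q$ does it reduce to one.

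Already for $s=2$, $m=k=2$, $h=1$ (so $A=5$), one has $T=\Pi_1+\Pi_2=\tfrac14C_1-\tfrac1{24}C_2$ and $P(\gamma)=\gamma-\gamma^2/6$. Then $X'=\tfrac14I-\tfrac1{12}C'_1=\tfrac14\Pi'_0+\tfrac1{12}\Pi'_1$. Accordingly, $P'(\gamma)=1-\gamma/3$ is not proportional to $\langle\Pi'_0\rangle_\gamma=1-\gamma/2$.

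Your intertwining $C_j=C_+C'_{j-1}C_-$ does give $T=C_+Y'C_-$ with $Y'\geq0$; its eigenvalue on $\Pi'_{a'}$ is $\mathbf1_{\{a'\geq h-1\}}/[(a'+1)(A-a'-1)]$. But that proves only $P\geq0$. The factor $j$ produced by differentiating $\gamma^j$ means $X'$ is neither $Y'$ nor a discrete difference of the step function.

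The two-mode fallback merely restates the claim as stochastic monotonicity of the ladder label. Restricting the state to two modes also does not replace the $s$-mode ladder projectors by two-mode ones, since $A$ still depends on $s$.

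\textbf{What is missing} is an explicit evaluation of the alternating coefficients. The paper gives the closed form of $c_{h,j}$ in Eq.~\eqref{eq:M5}, verified on each ladder space by a beta integral and a Vandermonde convolution. It then writes $f_h'(x)=\beta_hx^{h-1}\mathbb E(1-x)^J$ with $J$ hypergeometric, which is manifestly nonnegative.

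With those coefficients your operator route does close. On $\Pi'_{a'}$, the eigenvalue of $X'$ is a positive prefactor times an alternating factorial-moment sum. That sum equals a hypergeometric probability $\Pr(J'=0)$ with population $A-2h$. The eigenvalue simplifies to $a'!\,(A-2-a')!/[(h-1)!\,(A-h)!]$ for $a'\geq h-1$, and to zero below. Hence $X'\geq0$; in fact $X'$ is proportional to $\sigma_{s,m-1,k-1}^{-1}$ on the labels $a'\geq h-1$. This is a slightly stronger, operator-level form of the paper's positivity. But it has to be computed; it does not follow from the step-function heuristic.
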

\begin{proof}
The case $h=0$ is immediate. For $h\geq1$, the threshold has the polynomial expansion
\begin{equation}\label{eq:M5}
\begin{aligned}
T_{s,m,k,h}&=\sum_{j=h}^{q}c_{h,j}C_j,\\
c_{h,j}&=\frac{(-1)^{j-h}(A-h-j)!}{(h-1)!(j-h)!\,j\,(A-h)!}.
\end{aligned}
\end{equation}
A verification is given below. Put $x=|\langle e,f\rangle|^2$. Annihilating $j$ pairs gives the condensate expectation
\begin{equation}\label{eq:M6}
f_h(x)=\sum_{j=h}^{q}c_{h,j}m^{\underline j}k^{\underline j}x^j,
~~ 0\leq x\leq1.
\end{equation}
Let $M=m-h$, $B=k-h$, and $D=A-2h=s+M+B-1$. Sample $M$ objects without replacement from $D$ objects, of which $B$ are marked. If $J$ is the number of marked objects sampled, then
\begin{equation}\label{eq:M7}
\Pr(J=r)=\frac{\binom Br\binom{D-B}{M-r}}{\binom DM},
~~ \mathbb E J^{\underline r}=\frac{M^{\underline r}B^{\underline r}}{D^{\underline r}}.
\end{equation}
The parameters are admissible because $D\geq M+B$. The factorial moment counts ordered marked tuples. Define the positive constant
\begin{equation}\label{eq:M8}
\beta_h=\frac{D!\,m^{\underline h}k^{\underline h}}{(h-1)!(A-h)!}.
\end{equation}
Differentiating and writing $j=h+r$ gives
\begin{equation}\label{eq:M9}
\begin{aligned}
f_h'(x)&=\beta_hx^{h-1}\sum_{r=0}^{\min(M,B)}
\frac{(-1)^r M^{\underline r}B^{\underline r}}{r!D^{\underline r}}x^r\\
&=\beta_hx^{h-1}\mathbb E(1-x)^J\geq0.
\end{aligned}
\end{equation}
Thus $f_h(x)\leq f_h(1)$. The threshold is invariant under conjugate passive mode rotations on Alice and ordinary rotations on Bob. Its aligned-condensate expectation is independent of the common direction. Averaging that direction as in Eq.~\eqref{eq:K9} gives $f_h(1)=p_{s,m,k,h}$.

To verify Eq.~\eqref{eq:M5}, evaluate it on $\Pi_a$. It vanishes for $a<h$. For $a\geq h$, put $M_0=a-h$, $B_0=A-a-h$, and $D_0=M_0+B_0=A-2h$. Equation~\eqref{eq:K5} gives its value as
\begin{equation}\label{eq:M10}
\frac{D_0!\,a^{\underline h}(A-a)^{\underline h}}{h!(A-h)!}
\sum_{r=0}^{M_0}\frac{h}{h+r}
\frac{(-1)^r M_0^{\underline r}B_0^{\underline r}}{r!D_0^{\underline r}}.
\end{equation}
Use the hypergeometric distribution \eqref{eq:M7} with population $D_0=M_0+B_0$. The sum is $h\int_0^1x^{h-1}\mathbb E(1-x)^J\,dx$. The beta integral and Vandermonde convolution give
\begin{equation}\label{eq:M11}
\sum_{r=0}^{M_0}\frac{h}{h+r}
\frac{(-1)^r M_0^{\underline r}B_0^{\underline r}}{r!D_0^{\underline r}}
=\frac{M_0!\,h!\,B_0!\,(D_0+h)!}{D_0!\,(M_0+h)!\,(B_0+h)!}.
\end{equation}
Explicitly, after the beta integral, apply the binomial identity
\begin{equation}\label{eq:M12}
\frac{\binom{M_0}{r}\binom{B_0}{r}}{\binom{h+r}{r}}
=\frac{M_0!h!}{(M_0+h)!}\binom{B_0}{r}\binom{M_0+h}{M_0-r}.
\end{equation}
Summing over $r$ proves Eq.~\eqref{eq:M11}. Its product with the prefactor in Eq.~\eqref{eq:M10} is one. This proves the polynomial identity on every ladder space.
\end{proof}

\subsection{Reduction of the norm to two modes}
The full sector dimensions grow rapidly with the number of modes, so a direct dimension-times-expectation estimate is too costly. Passive-mode invariance gives a smaller space that still contains a vector for every eigenvalue of the partially transposed operator. The resulting estimate depends on $m$ and $k$, but not exponentially on $s$.

\begin{applemma}[Two-mode compression]\label{lem:M3}
Let $X\geq0$ act on $\operatorname{Sym}^m\mathbb C^s\otimes\operatorname{Sym}^k\mathbb C^s$, with $s\geq2$, and be invariant under $\bar U^{\otimes m}\otimes U^{\otimes k}$ for every unitary $U$. If every condensate-pair expectation of $X$ is at most $p$, then
\begin{equation}\label{eq:M13}
\lVert X^\Gamma\rVert_\infty\leq(m+1)(k+1)p.
\end{equation}
\end{applemma}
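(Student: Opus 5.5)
The plan is to use the unitary symmetry to make $X^\Gamma$ a scalar on each irreducible component. A two-mode subspace of dimension $(m+1)(k+1)$ meets every component, so the norm can be computed there. On that subspace the operator norm of the partial transpose is bounded by a trace, which a condensate resolution of the identity turns into an average of condensate-pair expectations.

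\emph{Step 1 (symmetry of $X^\Gamma$).} Bob's transpose in the number basis turns $U^{\otimes k}$ into $\bar U^{\otimes k}$ after conjugation. The invariance of $X$ under $\bar U^{\otimes m}\otimes U^{\otimes k}$ therefore becomes invariance of $X^\Gamma$ under $\bar U^{\otimes m}\otimes\bar U^{\otimes k}$ for every $U$. Replacing $U$ by $\bar U$, $X^\Gamma$ commutes with the ordinary diagonal action $V^{\otimes m}\otimes V^{\otimes k}$ of $\mathrm U(s)$ on $\operatorname{Sym}^m\mathbb C^s\otimes\operatorname{Sym}^k\mathbb C^s$. Everything is finite dimensional here, so there are no extension issues.

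\emph{Step 2 (multiplicity freeness).} By the Pieri rule,
\[
\operatorname{Sym}^m\mathbb C^s\otimes\operatorname{Sym}^k\mathbb C^s\simeq\bigoplus_{j=0}^{\min(m,k)}V_{(m+k-j,\,j)}
\]
has no multiplicities, using $s\geq2$ so that all two-row diagrams occur. By Schur's lemma, $X^\Gamma$ is a scalar $\lambda_j$ on each summand, and $\|X^\Gamma\|_\infty=\max_j|\lambda_j|$.

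\emph{Step 3 (two-mode subspace).} The highest-weight vector of $V_{(m+k-j,j)}$ has weight supported on the first two modes. Hence it lies in the two-mode subspace $\mathcal K=\operatorname{Sym}^m\mathbb C^2\otimes\operatorname{Sym}^k\mathbb C^2$, of dimension $(m+1)(k+1)$. Let $P$ be its projector. Every $\lambda_j$ is then an eigenvalue of $PX^\Gamma P$, so $\|X^\Gamma\|_\infty=\|PX^\Gamma P\|_\infty$.

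Next, $P=P_A\otimes P_B$, where each factor is a diagonal projection in the local number basis onto states with no photons outside modes $1,2$. Diagonal local projections commute with the partial transpose, so $PX^\Gamma P=(PXP)^\Gamma$. Put $Y=PXP\geq0$. Then
\[
\|Y^\Gamma\|_\infty\leq\|Y^\Gamma\|_2=\|Y\|_2\leq\operatorname{Tr}Y.
\]
The middle equality holds because partial transpose permutes matrix entries.

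\emph{Step 4 (condensate resolution).} On $\operatorname{Sym}^r\mathbb C^2$ the condensates resolve the identity:
\[
(r+1)\int|e;r\rangle\langle e;r|\,de=I.
\]
This follows from Schur's lemma for the irreducible $\mathrm U(2)$ action, with the constant fixed by taking traces. Condensates of $\mathbb C^2$ are condensates of $\mathbb C^s$. Using the resolution on Alice's side with $\bar e$ (which leaves the invariant measure unchanged) and on Bob's side with $f$,
\[
\operatorname{Tr}Y=(m+1)(k+1)\iint\langle\bar e;m|\langle f;k|X|\bar e;m\rangle|f;k\rangle\,de\,df\leq(m+1)(k+1)p.
\]
This proves Eq.~\eqref{eq:M13}.

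I expect Step 3 to be the main obstacle. One must check that the two-mode subspace really meets every isotypic component. The highest-weight argument does this cleanly, but one should also make sure that no multiplicity makes compression lose eigenvalues; Pieri's rule excludes this. The commutation $PX^\Gamma P=(PXP)^\Gamma$ is the other point to verify, and it is immediate once $P$ is written as a product of number-diagonal local projectors.
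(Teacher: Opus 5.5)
Your proposal is correct and follows the paper's proof essentially step for step. You use the same chain: invariance of $X^\Gamma$ under the diagonal $\mathrm U(s)$ action, the multiplicity-free decomposition with Schur's lemma, compression to the two-mode subspace containing every highest-weight vector, the bound $\|B^\Gamma\|_\infty\leq\|B^\Gamma\|_2=\|B\|_2\leq\operatorname{Tr}B$, and the two-mode condensate resolution of the identity. The differences are cosmetic: you locate the highest-weight vectors by a weight argument (zero weight in modes $3,\dots,s$ forces zero photons there on both sides) instead of the explicit vectors of Eq.~\eqref{eq:M15}, and you quote Pieri's rule where the paper gives the Jacobi--Trudi telescoping.
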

\begin{proof}
After partial transpose, $X^\Gamma$ commutes with the same mode rotation on both factors; complex conjugating that rotation does not change this statement. The symmetric-power decomposition is
\begin{equation}\label{eq:M14}
\operatorname{Sym}^m\mathbb C^s\otimes\operatorname{Sym}^k\mathbb C^s
=\bigoplus_{\ell=0}^{q}V_{(m+k-\ell,\ell)}.
\end{equation}
This is the multiplicity-free two-factor Pieri decomposition. It follows directly at the level of characters. Write $h_r$ for the character of $\operatorname{Sym}^r\mathbb C^s$, with $h_0=1$ and $h_{-1}=0$, and $s_{(a,b)}$ for the irreducible character of highest weight $(a,b)$. The two-row Jacobi--Trudi identity gives $s_{(a,b)}=h_ah_b-h_{a+1}h_{b-1}$. Summing over $(a,b)=(m+k-\ell,\ell)$ telescopes to $h_mh_k$, since $q=\min(m,k)$. Thus every listed irreducible occurs exactly once and there are no further summands. In the polynomial representation, its highest-weight vector is
\begin{equation}\label{eq:M15}
v_\ell=(a_1^\dagger b_2^\dagger-a_2^\dagger b_1^\dagger)^\ell
(a_1^\dagger)^{m-\ell}(b_1^\dagger)^{k-\ell}|0\rangle.
\end{equation}
Its weight is $(m+k-\ell,\ell)$; all raising operators $a_i^\dagger a_j+b_i^\dagger b_j$ with $i<j$ annihilate it. Its orbit generates the corresponding summand. Every summand therefore has a representative using only the first two modes in each laboratory. This explicit representative is the only representation-theoretic property needed for the norm reduction: all possible spectral values survive the same two-mode compression.

Schur's lemma makes $X^\Gamma$ scalar on every summand. Let $E$ embed the two-mode symmetric spaces in the full spaces and set $B=E^\dagger XE\geq0$. The embedding is real in the number basis, so $B^\Gamma=E^\dagger X^\Gamma E$. Every eigenvalue of $X^\Gamma$ occurs on a vector \eqref{eq:M15} and hence also in this compression. Conversely, compression cannot increase operator norm. For the Hilbert--Schmidt norm $\lVert Z\rVert_2=(\operatorname{Tr}Z^\dagger Z)^{1/2}$, this gives
\begin{equation}\label{eq:M16}
\lVert X^\Gamma\rVert_\infty=\lVert B^\Gamma\rVert_\infty
\leq\lVert B^\Gamma\rVert_2=\lVert B\rVert_2\leq\operatorname{Tr}B.
\end{equation}
The Hilbert--Schmidt equality uses invariance under partial transpose; the final inequality uses positivity.

Two-mode condensates resolve the identity:
\begin{equation}\label{eq:M17}
(m+1)\int_{\lVert e\rVert=1}|e;m\rangle\langle e;m|\,de
=I_{\operatorname{Sym}^m\mathbb C^2}.
\end{equation}
This follows from irreducibility and the trace; the measure is normalized on unit vectors. Taking the product of the two resolutions gives
\begin{equation}\label{eq:M18}
\begin{aligned}
\operatorname{Tr}B&=(m+1)(k+1)\iint
\langle\bar e;m|\langle f;k|B|\bar e;m\rangle|f;k\rangle\,de\,df\\
&\leq(m+1)(k+1)p.
\end{aligned}
\end{equation}
Together with Eq.~\eqref{eq:M16}, this proves the claim.
\end{proof}

\begin{proof}[Proof of Theorem~\ref{thm:M1}]
Apply Lemma~\ref{lem:M3} to $X=T_{s,m,k,h}$ and use Lemma~\ref{lem:M2}. For the lower bound, choose a real unit vector $e$. Partial transpose fixes the aligned product projector. Its expectation in $T^\Gamma$ is $p_{s,m,k,h}$, so the operator norm is at least that value. Taking logarithms proves Eq.~\eqref{eq:M3}.
\end{proof}

\section{Energy-constrained achievability and channel boundaries}\label{app:N}
We first prove achievability for $N>0$ in the strictly noisy non-entanglement-breaking region \eqref{eq:K1}. All quantities $y,\ell_0,t,u,w$ use the Gaussian-channel parameters defined in Sec.~III; thermal loss is obtained by setting $\tau=\eta$ and $v=(1-\eta)n$.

The construction has three steps. Establish where the actual resource concentrates in photon counts and correlation label; evaluate how rarely a separable comparison passes a nearby threshold; then use the polynomial norm estimate to turn this threshold into a PPT channel. No decoding operation that measures the hidden normal modes or announces the correlation outcome is assumed.

\subsection{Concentration of the correlation label}
Alice prepares $s$ independent two-mode squeezed vacua with mean $N$ and transmits one mode of each. The output is $\chi_N^{\otimes s}$, and every input has mean exactly $N$. Let $M_s,K_s$ be the total reference and output photon counts. Let $H_s$ be the ladder label in their joint sector. These three observables commute and obey the following limits in probability:
\begin{equation}\label{eq:N1}
\frac{M_s}s\longrightarrow N,~~
\frac{K_s}s\longrightarrow y,~~
\frac{H_s}s\longrightarrow t.
\end{equation}
The first two assertions are laws of large numbers for thermal counts, whose single-copy variances are $N(N+1)$ and $y(y+1)$.

For the third, put $c_i=a_ib_i$ and $c=\langle c_i\rangle=\sqrt{\tau N(N+1)}$. With $\Delta=\sum_i(c_i-c)$,
\begin{equation}\label{eq:N2}
C_+C_-=s^2c^2+sc(\Delta+\Delta^\dagger)+\Delta^\dagger\Delta.
\end{equation}
Independence and centering give $\langle(\Delta+\Delta^\dagger)^2\rangle=O(s)$. In $\langle(\Delta^\dagger\Delta)^2\rangle$, any index appearing only once has zero expectation. Only $O(s^2)$ index patterns remain, and their moments are bounded by fixed single-copy Gaussian moments. Thus
\begin{equation}\label{eq:N3}
\frac{\langle C_+C_-\rangle}{s^2}=c^2+O(s^{-1}),~~
\operatorname{Var}(C_+C_-)=O(s^3).
\end{equation}
The variance bound also follows directly by applying $(A+B)^2\leq2A^2+2B^2$ to the two centered self-adjoint terms. Chebyshev's inequality implies $C_+C_-/s^2\to c^2$. On a joint number sector,
\begin{equation}\label{eq:N4}
C_+C_-=H_s(s+M_s+K_s-1-H_s).
\end{equation}
For $0\leq H_s\leq\min(M_s,K_s)$, the right side is strictly increasing. Its smaller-root inverse, scaled by $s$, is continuous. The first two convergences therefore imply the third. Fixed positive deviations have probability $O(s^{-1})$, uniformly when $N$ ranges over a compact neighborhood of a positive value.

\subsection{The separable tail exponent}
The resource typically has correlation density $t$. The separable comparison assigns exponentially small probability to a threshold just below that value. Its decay rate determines the distillation rate, up to the polynomial factor in Theorem~\ref{thm:M1}. For real densities $M,Y\geq0$ and $0\leq a\leq\min(M,Y)$, define
\begin{equation}\label{eq:N5}
\begin{aligned}
E_{M,Y}(a)={}&\phi(a)+\phi(1+M+Y-a)\\
&-\phi(M)-\phi(Y)-g(M-a)-g(Y-a).
\end{aligned}
\end{equation}
Uniform Stirling estimates in Eqs.~\eqref{eq:K4} and \eqref{eq:K8} give
\begin{equation}\label{eq:N6}
-\frac1s\log_2(D_j\mu_j)=E_{m/s,k/s}(j/s)+O(\log s/s)
\end{equation}
on any bounded range of $m/s,k/s$, including the label endpoints. To see that dimension subtraction causes no exponential loss, write
\begin{equation}\label{eq:N7}
\frac{D_j}{d_{s,m-j}d_{s,k-j}}
=1-\frac{(m-j)(k-j)}{(s+m-j-1)(s+k-j-1)}.
\end{equation}
The right-hand side is bounded away from zero on those ranges for large $s$.

Put $L_{M,Y}=1+M+Y$. In the interior,
\begin{equation}\label{eq:N8}
E_{M,Y}'(a)=\log_2\frac{a(1+M-a)(1+Y-a)}{(L_{M,Y}-a)(M-a)(Y-a)}.
\end{equation}
The numerator minus the denominator inside this ratio equals $(L_{M,Y}-2a)\{a(L_{M,Y}-a)-MY\}$. Thus the exponent increases when $a(L_{M,Y}-a)>MY$. At the physical point,
\begin{equation}\label{eq:N9}
t(\ell_0-t)-Ny=N(\tau-v)>0.
\end{equation}
Choose a small $\delta>0$ and set $\alpha=t-\delta>0$, preserving the strict inequality. Then choose $\epsilon>0$ so that it remains strict for $|M-N|\leq\epsilon$, $|Y-y|\leq\epsilon$, and $\alpha<\min(M,Y)$ throughout that window.

Let $h_s=\lfloor s\alpha\rfloor$. The exponent increases over the accepted tail. There are only $O(s)$ labels, so Eq.~\eqref{eq:N6} gives, uniformly in the window,
\begin{equation}\label{eq:N10}
-\frac1s\log_2p_{s,m,k,h_s}=E_{m/s,k/s}(\alpha)+O(\log s/s).
\end{equation}
The first tail term supplies a lower bound on the probability; the number of terms times the largest exponential order supplies the upper bound. Rounding $h_s$ contributes only the stated error.

\subsection{An admissible distillation map}
Fix a rate $0<R<F_{\tau,v}(N)$. Since $E_{N,y}(t)=F_{\tau,v}(N)$, continuity allows the preceding $\delta,\epsilon$ to be chosen so that
\begin{equation}\label{eq:N11}
\inf_{\substack{|M-N|\leq\epsilon\\|Y-y|\leq\epsilon}}E_{M,Y}(\alpha)>R.
\end{equation}
Let $Q_s$ be the direct sum of $T_{s,m,k,h_s}$ over this number window, and zero in all other sectors. It is a finite-rank effect with $0\leq Q_s\leq I$. Partial transpose preserves the local number blocks. By Theorem~\ref{thm:M1} and Eq.~\eqref{eq:N10},
\begin{equation}\label{eq:N12}
\lVert Q_s^\Gamma\rVert_\infty\leq2^{-sR}
\quad\text{for all sufficiently large }s.
\end{equation}
The polynomial factor is absorbed by the strict rate margin. Concretely, if the tail exponent exceeds $R$ by $\Delta>0$ throughout the count window, the norm is at most $2^{-s(R+\Delta)+O(\log s)}$. For large enough $s$, the linear margin $s\Delta$ dominates the logarithmic correction. Equation~\eqref{eq:N1} also gives $\operatorname{Tr}(Q_s\chi_N^{\otimes s})\to1$.

Set $d_s=\lfloor2^{sR}\rfloor$, which is at least two for large $s$. Use the standard-form channel
\begin{equation}\label{eq:N13}
\Lambda_s(X)=\operatorname{Tr}(Q_sX)\Phi_{d_s}
+\operatorname{Tr}((I-Q_s)X)\frac{I-\Phi_{d_s}}{d_s^2-1}.
\end{equation}
It is normal, completely positive and trace preserving. By Lemma~D.1, complete PPT preservation is exactly $-I/d_s\leq Q_s^\Gamma\leq I/d_s$. The two output swap eigenspaces yield the positive coefficient operators
\begin{equation}\label{eq:N14}
\frac{I+d_sQ_s^\Gamma}{d_s(d_s+1)},~~
\frac{I-d_sQ_s^\Gamma}{d_s(d_s-1)}.
\end{equation}
Equation~\eqref{eq:N12} ensures these inequalities. The output fidelity is $\operatorname{Tr}(Q_s\chi_N^{\otimes s})\to1$, and no test-outcome flag is supplied.

Thus $D_{\mathrm{PPT}}(\chi_N)\geq R$. Teleportation (Appendix~\ref{app:A}) turns the distilled ebits into a code at the same rate, using no further channel uses or photons, so $Q_{\mathrm{PPT}}(\mathcal G_{\tau,v},N)\geq R$ as well. Let $R$ increase to $F_{\tau,v}(N)$. The matching channel converse is Theorem~L.2. For states, Eq.~\eqref{eq:L10} and Rains monotonicity give $D_{\mathrm{PPT}}(\chi_N)\leq R^\infty(\chi_N)$, and Corollary~K.3 completes the equalities. This proves Theorems~4 and 5 in the strictly noisy region.

The limit order is explicit: fix the channel and $N$, choose a rate margin, increase $s$ and finally remove the margin. Only after this theorem is proved may $N$ be increased to recover the unrestricted PLOB expressions.

\subsection{A hard total-photon budget}
The same rate can be attained by a deterministic fixed-number input. Put $m_s=\lfloor sN\rfloor$ and send half of a maximally entangled state on $\mathcal H_{s,m_s}$. To prove concentration, condition $\chi_{m_s/s}^{\otimes s}$ on $M_s=m_s$. Its conditional state is $J_{s,m_s}$. The negative-binomial mass at its mean satisfies
\begin{equation}\label{eq:N15}
\Pr(M_s=m_s)=d_{s,m_s}\frac{(m_s/s)^{m_s}}{(1+m_s/s)^{s+m_s}}
=\Theta(s^{-1/2})
\end{equation}
for $N>0$, by Stirling's formula. Let $B_s$ be the event that the output count or correlation label lies outside the acceptance window. The concentration bounds are uniform for input means near $N$, so
\begin{equation}\label{eq:N16}
\Pr(B_s\mid M_s=m_s)\leq\frac{\Pr(B_s)}{\Pr(M_s=m_s)}
=O(s^{-1/2})\longrightarrow0.
\end{equation}
The count and correlation centers converge to those of the preceding code because $m_s/s\to N$. The same threshold and norm bound therefore achieve every rate below $F_{\tau,v}(N)$. Conditioning is only a proof device: the protocol prepares the fixed-number input directly. Its entire transmitted block has at most $sN$ photons.

\subsection{Zero energy, quantum-limited and identity cases}
At $N=0$, every transmitted mode is vacuum and is uncorrelated with the memories. The output can be prepared locally by Bob, so the capacity is zero. If $v\geq\tau$, the channel is entanglement breaking and all the quantities in Theorem~5 vanish.

Consider next $v=\max\{0,\tau-1\}$. Every nonzero conditional resource is
\begin{equation}\label{eq:N17}
\rho_{s,m,k}=\frac{\Pi_q}{D_q},~~ q=\min(m,k).
\end{equation}
Pure loss has $k\leq m$, while a quantum-limited amplifier has $k\geq m$. These states also follow by taking the $z\to\infty$ limit of Eq.~\eqref{eq:K7}; the highest ladder label dominates, and continuity of the physical channel identifies the limit.

The comparison eigenvalue $\mu_q$ remains as in Eq.~\eqref{eq:K8}. The operator $W=\Pi_q/(D_q\mu_q)$ is a positive multiple of $C_q$, so Lemma~K.1 and its aligned-condensate saturation give $\lVert W^\Gamma\rVert_\infty=1$. If a strictly positive witness is needed, use $(W+\varepsilon I)/(1+\varepsilon)$ and let $\varepsilon\downarrow0$. The exact sector Rains cost is therefore $-\log_2(D_q\mu_q)$, again additive.

For pure loss, this cost and its density are
\begin{equation}\label{eq:N18}
\begin{aligned}
-\log_2(D_q\mu_q)&=\log_2d_{s,m}-\log_2d_{s,m-k},\\
\frac{f_s(m_s)}s&\longrightarrow g(N)-g((1-\tau)N).
\end{aligned}
\end{equation}
The output count concentrates at $k/s\to\tau N$. For a quantum-limited amplifier,
\begin{equation}\label{eq:N19}
\begin{aligned}
-\log_2(D_q\mu_q)&=\log_2d_{s,k}-\log_2d_{s,k-m},\\
\frac{f_s(m_s)}s&\longrightarrow g(\tau N+\tau-1)-g((\tau-1)(N+1)).
\end{aligned}
\end{equation}
Here $k/s\to\tau N+\tau-1$. The loss count is binomial; the amplifier's added count is negative binomial. Both have variance $O(s)$ when $m_s=O(s)$. Stirling estimates, count concentration and the bound $f_s(m)\leq sg(m/s)$ justify these density limits.

The sector simulation and the Rains-increase argument of Appendix~\ref{app:L} apply unchanged, using these boundary costs. The energy functions are concave and nondecreasing~\cite{r15}. The boundary-sector witnesses and density limits also give the state sandwich \eqref{eq:K18}. Its proof uses only local number pinching, the exact sector costs and the bound $f_s(m)\leq sg(m/s)$, all of which remain valid here. Dividing by $s$ gives the same upper bound on $E_R^\infty(\chi_N)$ at these boundaries. Gaussian hashing achieves the two displayed limits by LOCC~\cite{r8,PGPBL}. To apply the finite-dimensional hashing inequality, locally project each output copy onto finite photon-number subspaces and announce joint success. The thermal marginal tails make the success probability tend to one as the cut-off increases. The normalised conditional states converge in trace norm and have uniformly bounded mean photon numbers. Entropy continuity under this finite-mode energy bound therefore gives convergence of the joint and marginal entropies, and hence of both coherent informations. For each fixed cut-off, hashing the successful copies gives the conditional rate multiplied by the success probability. Taking the coding limit first and then increasing the cut-off recovers the stated rates without changing the transmitted energy. This proves the boundary capacity and state equalities. At $\tau=1,v=0$, the channel is the identity. The same truncation of the pure two-mode squeezed state, followed by finite-dimensional pure-state concentration~\cite{r47}, gives the common rate $g(N)$ because its marginal entropy converges to $g(N)$. Thus all physical parameters in Theorems~4 and 5 are covered.

\section{Capacity benchmarks and the sharp norm question}\label{app:O}
This appendix collects the quantum-limited formulas and compares the exact capacities with simpler achievable constructions. It also isolates a stronger finite-block norm statement that remains open. A sharper norm could improve finite-block performance without changing the asymptotic capacity.

\subsection{Quantum-limited capacities and LOCC benchmarks}
At $n=0$, Theorem~1 recovers the pure-loss capacity, which is already attained by LOCC~\cite{r11}:
\begin{equation}\label{eq:O1}
Q_{\mathrm{PPT}}(\mathcal L_{\eta,0})=Q_2(\mathcal L_{\eta,0})=-\log_2(1-\eta).
\end{equation}
At $n=0$, the environment is in the vacuum and the channel is a quantum-limited amplifier. Theorem~2 recovers its known LOCC capacity~\cite{r11}:
\begin{equation}\label{eq:O2}
Q_{\mathrm{PPT}}(\mathcal A_{\kappa,0})=Q_2(\mathcal A_{\kappa,0})=-\log_2(1-\kappa^{-1}).
\end{equation}
For thermal loss, define the reverse-coherent-information benchmark $I_{\mathrm{RC}}(\eta,n)$ and write $[z]_+=\max\{0,z\}$ for a real number $z$. The benchmark and the resulting bounds are
\begin{align}
I_{\mathrm{RC}}(\eta,n)&=-\log_2(1-\eta)-g(n),\label{eq:O3}\\
[I_{\mathrm{RC}}(\eta,n)]_+&\leq Q_2(\mathcal L_{\eta,n})\leq Q_{\mathrm{PPT}}(\mathcal L_{\eta,n}).\label{eq:O4}
\end{align}
When $I_{\mathrm{RC}}$ is nonnegative, the improvement over this benchmark is $n\log_2(1/\eta)$. Distillability of non-PPT Gaussian states allows positive LOCC rates beyond regions certified by that expression~\cite{r48}.

Let $Q_2(\mathcal N,N)$ denote the same energy-constrained capacity with LOCC assistance. At zero thermal noise, Theorem~4 gives the known LOCC value~\cite{r15}:
\begin{equation}\label{eq:O5}
\begin{aligned}
Q_{\mathrm{PPT}}(\mathcal L_{\eta,0},N)&=Q_2(\mathcal L_{\eta,0},N)\\
&=g(N)-g((1-\eta)N).
\end{aligned}
\end{equation}
At the quantum-limited amplifier boundary, the common PPT and LOCC rate~\cite{r15} is
\begin{equation}\label{eq:O6}
\begin{aligned}
Q_{\mathrm{PPT}}(\mathcal A_{\kappa,0},N)&=Q_2(\mathcal A_{\kappa,0},N)\\
&=g(\kappa N+\kappa-1)-g((\kappa-1)(N+1)).
\end{aligned}
\end{equation}
At the noiseless additive endpoint, the identity channel has $Q_{\mathrm{PPT}}(\mathcal Z_0,N)=g(N)$. This is finite at fixed $N$, whereas its unrestricted capacity is infinite. For all three noisy families, taking $N\to\infty$ after the coding limit recovers the capacities in Sec.~II.

\subsection{Error-photon thresholds and hashing}
For thermal loss, keep $N,y,u,w$ as defined in Sec.~III. For any squeezer gain $H>1$, define $\theta_H=(H-1)/H$. The second marginal of $S_H^\dagger\chi_NS_H$ is thermal with mean
\begin{equation}\label{eq:O7}
\begin{aligned}
e_H={}&Hy+(H-1)(N+1)\\
&-2\sqrt{H(H-1)\eta N(N+1)}.
\end{aligned}
\end{equation}
This marginal is independent across copies and determines the threshold fidelity. If $e_H<H-1$, the threshold construction yields
\begin{equation}\label{eq:O8}
B_H(e_H)=\log_2H-g(e_H)+e_H\log_2\frac{H}{H-1}.
\end{equation}
This follows from the exact partial-transpose identity of Appendix~\ref{app:C} and the norm and fidelity bounds of Appendix~\ref{app:D}.

With $[x]_+=\max\{x,0\}$, the best of the two Gaussian hashing directions gives
\begin{equation}\label{eq:O9}
R_{\mathrm{hash}}(N)=\bigl[\max\{g(N),g(y)\}-g(u)-g(w)\bigr]_+.
\end{equation}
Consequently,
\begin{equation}\label{eq:O10}
\max\left\{R_{\mathrm{hash}}(N),\sup_{\substack{H>1\\e_H<H-1}}B_H(e_H)\right\}
\leq F_{\eta,n}(N).
\end{equation}
Time sharing can improve these benchmarks to their concave envelope; the sector protocol already attains the exact concave capacity function.

For example, at $\eta=0.8$, $n=0.5$, and $N=1$, the rates in ebits per use are
\begin{equation}\label{eq:O11}
\begin{array}{lr}
\text{Gaussian hashing}&\simeq0.4090\\
\text{Error threshold at }H\simeq8.175&\simeq0.5331\\
\text{Exact sector capacity }F_{\eta,n}(N)&\simeq0.6224.
\end{array}
\end{equation}
The middle row evaluates the displayed gain. Diagonalizing first gives $H\simeq1.523$ and the smaller threshold rate $\simeq0.1836$.

\subsection{A sharper finite-block equality}
The stronger identity
\begin{equation}\label{eq:O12}
\lVert T_{s,m,k,h}^\Gamma\rVert_\infty=p_{s,m,k,h}
\end{equation}
is conjectured. Theorem~\ref{thm:M1} proves the lower bound and a polynomial upper bound.

For an explicit form of the remaining question, let $\ell$ label the irreducible summands in Eq.~\eqref{eq:M14}. The eigenvalue of $C_j^\Gamma$ on the $\ell$th summand is
\begin{equation}\label{eq:O13}
b_{\ell,j}=(j!)^2\sum_{r=0}^{\min(\ell,j)}(-1)^r
\binom\ell r\binom{m-\ell}{j-r}\binom{k-\ell}{j-r},
\end{equation}
where out-of-range binomial coefficients vanish. To derive it, second-quantize the two-species matrix with diagonal entries one and off-diagonal entries $\sqrt x$, then compress to species counts $(m,k)$. Expansion gives $\sum_jC_j^\Gamma x^j/(j!)^2$. The two-species representation paired with the mode summand $\ell$ is $\det^{\ell}\otimes\operatorname{Sym}^{m+k-2\ell}\mathbb C^2$. After removing the determinant weight, compression selects the weight $(m-\ell,k-\ell)$. Its diagonal matrix element therefore gives
\begin{equation}\label{eq:O14}
(1-x)^\ell\sum_{j=0}^{q-\ell}\binom{m-\ell}{j}\binom{k-\ell}{j}x^j.
\end{equation}
Matching coefficients proves Eq.~\eqref{eq:O13}.

For $h\geq1$, the conjecture is therefore the finite family of inequalities
\begin{equation}\label{eq:O15}
\left|\sum_{j=h}^{q}c_{h,j}b_{\ell,j}\right|\leq\sum_{a=h}^{q}D_a\mu_a,
~~ 0\leq\ell\leq q.
\end{equation}
The $\ell=0$ component gives equality without the absolute value. The cases $h=0$ and $h=q$ follow from the identity operator and the partial-swap bound. Intermediate thresholds, finite-error refinements and optimal noisy-channel LOCC rates remain open.

\end{document}